\newcommand{\bet}{\beta_1}
\newcommand{\be}{\begin{equation}}
\newcommand{\ee}{\end{equation}}
\newcommand{\llaa}{\left\langle\hspace{-0.14cm}\left\langle}
\newcommand{\rraa}{\right\rangle\hspace{-0.14cm}\right\rangle}
\newcommand{\laa}{\langle\hspace{-0.08cm}\langle}
\newcommand{\raa}{\rangle\hspace{-0.08cm}\rangle}
\newcommand{\supV}{\mathds{1}_{V_1}}
\newcommand{\LZ}{L^2(\mathbb{R}^3,\mathbb{C})}
\newcommand{\LZN}{L^2(\mathbb{R}^{3N},\mathbb{C})}
\newcommand{\im}{\text{i}}
\newtheorem{theorem}{Theorem}[section]
\newtheorem{lemma}[theorem]{Lemma}
\newtheorem{corollary}[theorem]  {Corollary}
\newtheorem{remark}[theorem]  {Remark}
\newtheorem{definition}[theorem] {Definition}
\newtheorem{assumption}[theorem] {Assumption}
\newtheorem{proposition}[theorem]{Proposition}
\newenvironment{proof}{\emph{Proof:}}{\begin{flushright} $ \Box $ \end{flushright}}
\newenvironment{refproof}{\emph{Proof of Theorem }}
\renewcommand{\phi}{\varphi}
\newcommand{\beq}{\begin{equation}}
\newcommand{\eeq}{\end{equation}}
\begin{document}

\title{Derivation of the time dependent Gross-Pitaevskii equation for a class of non purely positive potentials }


\author{Maximilian Jeblick\footnote{
pickl@math.lmu.de
Mathematisches Institut, LMU Munich, Germany} $ $ and Peter Pickl\footnote{
pickl@math.lmu.de
Mathematisches Institut, LMU Munich, Germany}}

%


\maketitle

\begin{abstract}
We present a microscopic derivation of the time-dependent Gross-Pitaevskii equation
 starting from an interacting $N$-particle system of Bosons. 
We prove convergence of the reduced density matrix corresponding to the exact time evolution to the projector onto the solution of the respective Gross-Pitaevskii equation.
 Our work extends a previous result by one of us (P.P. \cite{picklgp3d}) to interaction potentials 
 which need not to be nonnegative, but
 may have a sufficiently small negative part. One key estimate in our proof is an operator inequality 
which was first proven by Jun Yin, see \cite{yin}.

\end{abstract}
\newpage
\tableofcontents\newpage
\section{Introduction}

The main concern of this work is  a generalization of a previous result presented by one of us (P.P. \cite{picklgp3d}).
Specifically, we will analyze the dynamics of a Bose-Einstein condensate in the
Gross-Pitaevskii regime for interactions $V$ which need not to be nonnegative, but may have an attractive part. 

Let us first define the $N$-body quantum problem we want to study. 
The evolution of $N$ interacting bosons is described by a time-dependent wave-function $\Psi_t \in L^2_{s}(\mathbb{R}^{3N}, \mathbb{C}),
\| \Psi_t \|=1$ (throughout this paper norms without index $\|\cdot\|$ always denote the $L^2$-norm on the appropriate Hilbert space.).
The bosonic $N$-particle Hilbert space
$L^2_{s}(\mathbb{R}^{3N}, \mathbb{C})$ denotes the set of all $\Psi \in  L^2(\mathbb{R}^{3N}, \mathbb{C})$
 which are symmetric under pairwise permutations of the variables 
$x_1, \dots, x_N \in \mathbb{R}^3$. Assuming in addition $\Psi_0 \in H^2 (\mathbb{R}^{3N},\mathbb{C})$, the evolution of $\Psi_t$ is then described by the  $N$-particle 
Schr\"odinger equation
\begin{align}
\label{schroe}
 \im\partial_t \Psi_t =  H\Psi_t 
 \;.
\end{align}
The time-dependent Hamiltonian $ H$ we will study is defined by
\begin{align}
\label{hamiltonian}
 H=-\sum_{j=1}^N \Delta_j+
 N^2
 \sum_{1\leq  j< k\leq  N}V(N (x_j-x_k)) +\sum _{j=1}^N A_t(x_j)
 \; .
\end{align} 
In the following, we assume $A_t \in L^\infty(\mathbb{R}^ 3, \mathbb{R})$ 
and
 $V \in L_c^\infty (\mathbb{R}^ 3, \mathbb{R})$, $V$ spherically symmetric.
We will also use the common notation $V_1(x)=N^2 V(Nx)$. 
More generally, one can study the properties of Bose gases for
a larger class of
scaling parameters $0 \leq \beta \leq 1$, setting
$V_\beta(x)= N^{-1+3 \beta} V(N^\beta x)$. For $0<\beta \leq 1$ and large particle number $N$, the potential gets $\delta$-like, which indicates that the mathematical description may become more involved the bigger $\beta$ is chosen.
The so-called Gross-Pitaevskii regime $\beta=1$ is special, since then the two-particle correlations play a crucial role for the dynamics, see Section \ref{secmic}.

We will derive an approximate solution of \eqref{schroe} in the trace class topology of reduced density matrices.
Define the one particle reduced density matrix $\gamma^{(1)}_{\Psi_0}$ given by the integral kernel
\begin{align}
\gamma^{(1)}_{\Psi_0}(x,x')=\int_{\mathbb{R}^{3N-3}} \Psi_{0}^*(x,x_2,\ldots,x_N)\Psi_{0}(x',x_2,\ldots,x_N)d^3x_2\ldots d^3x_N \;. 
\end{align}
To account for the physical situation of a Bose-Einstein condensate, we assume complete condensation in the limit of large particle number $N$. This amounts to
assume that, for $N \rightarrow \infty$,
 $\gamma^{(1)}_{\Psi_0}  \rightarrow |\phi_0\rangle\langle\phi_0|$ in trace norm for some $\phi_0
\in L^2(\mathbb{R}^3,\mathbb{C}) 
  , \|\phi_0\|=1$.
  Our main goal is to show the persistence of condensation over time.
  Let $a$ denote the scattering length of the potential $\frac{1}{2} V$ (see Section
\ref{secmic} for the precise definition of $a$) and
   let $\phi_t$ solve  the nonlinear Gross-Pitaevskii equation
\begin{align}
\label{GP}
\im \partial_t
\phi_t=\left(-\Delta +A_t\right) \phi_t+
8 \pi a
|\phi_t|^2\phi_t=:h^{\text{GP}}\phi_t
\end{align}
with initial datum $\phi_0$ (we assume $\phi_t \in H^2(\mathbb{R}^3, \mathbb{C})$, see below). 
We then prove that the time evolved reduced density matrix $\gamma^{(1)}_{\Psi_t}$
 converges to $ |\phi_t\rangle\langle\phi_t|$ in trace norm as $N \rightarrow \infty$ with convergence rate of order $N^{-\eta}$ for some $\eta>0$. 

The rigorous derivation of effective evolution equations has a long history, see e.g. \cite{benedikter, brennecke,fluctnbody, SchleinNorm, chen2d, erdos1, erdos2, erdos4, erdos3, jeblick, jeblick2, keler, teufel, schlein2d, knowles, michelangeli2, michelangeli3, picklnorm, focusingnorm,  marcin1, marcin2, picklgp3d, pickl1, rodnianskischlein}
 and references therein.
 The derivation of the three dimensional time-dependent Gross-Pitaevskii equation for nonnegative potentials was first conducted in \cite{erdos3}.
 Afterward, this result has been improved by \cite{benedikter, brennecke, michelangeli2, picklgp3d}.
In the two dimensional case, the correspondent time-dependent Gross-Pitaevskii equation
was treated in \cite{jeblick}. Note that in two dimensions, the scaling considered is given by $e^{2N} V(e^Nx)$.
The ground state properties of dilute Bose gases were treated in \cite{boccato, gpbog, lewin1, lewin2, lewin3, ls, lsy, rougerie, pizzo, yin}, see also the monograph \cite{lssy} and references therein.

As mentioned previously, we will generalize the result presented by one of us (P.P. \cite{picklgp3d}) to a specific class of interactions $V$ which are not assumed to be nonnegative everywhere.
Let us stress that persistence of condensation is not expected for arbitrary $V$.
For strongly attractive potentials, even a small fraction of particles which leave the condensate over time may cluster, subsequently causing the condensate to collapse in finite time.
The dynamical collapse of a Bose gas under such circumstances is well known within the physical community and was mathematically treated in \cite{michelangeli3}.
The breakdown of condensation has also been observed in experiments \cite{collapse}.
Consequently, the result we are going to prove can only be valid under certain
restrictions on  $V$. 
The class of potentials we consider is chosen such that $V$ has a repulsive core, i.e. there exists a $r_1>0$, such that $V(x)\geq\lambda^+$, for some $\lambda^ +>0$ and for all $|x| \leq r_1$. This condition prevents clustering of particles.
If furthermore the negative part of $V$ fulfills some restrictions (see assumption \ref{Vassumption}), a result by Jun Yin \cite{yin} then implies
that the Hamiltonian we consider in this note is stable of second kind.
The author proves in particular that for such potentials the ground state energy per particle of a dilute, homogeneous Bose gas is
at first order given by the well-known formula $ 4 \pi a \rho N$.
Among the steps of the proof in \cite{yin}, it is shown that the
Hamiltonian \eqref{hamiltonian} -without external potential $A_t$- restricted to configurations where at least three particles are close to each other is a nonnegative operator. 
We will adapt this non-trivial operator inequality in our proof to 
control the kinetic energy of those particles which leave the condensate, see Lemma \ref{energylemma}. 
We like to remark that the assumptions \ref{Vassumption} on $V$ stated below 
imply that the scattering length $a$ of the potential $\frac{1}{2} V$ is nonnegative. Consequently, the effective Gross-Pitaevskii  dynamics \eqref{GP} is repulsive, which reflects the fact that the condensate is stable.

The result presented in \cite{yin} implies further that there exists an $\epsilon>0$, such that
\begin{align}
\label{operatorboundT}
-\epsilon \sum_{k=1}^N \Delta_k \leq -\sum_{k=1}^N \Delta_k+ \sum_{i<j} V_1(x_i-x_j),
\\
\label{operatorboundV}
\epsilon \sum_{i<j} |V_1(x_i-x_j)| \leq -\sum_{k=1}^N \Delta_k+ \sum_{i<j} V_1(x_i-x_j).
\end{align}
The first operator inequality bounds $\|\nabla_1 \Psi_t\| $
uniformly in $N$, if initially
the energy per particle is of order $1$. If this were not the case, 
one cannot expect condensation, see e.g. \cite{michelangeli3} for a nice discussion.
Under the same assumption, the second inequality \eqref{operatorboundV} implies 
$
\|V_1(x_1-x_2) \Psi_t \| \leq N^ {1/2}
$, see Lemma \ref{corest}. These two inequalities are crucial in our proof to control the rate of particles which leave the condensate over time
and thus to extend the result presented in \cite{picklgp3d}.


\section{Main Result}
\label{secmain}

We will bound  expressions which are uniformly bounded in $N$ by some (possible time-dependent) constant $C>0$.
We will not distinguish constants
 appearing in a sequence of estimates, i.e. in $X\leq  CY\leq  CZ$ the constants usually differ.
 We denote by  $\laa\cdot,\cdot\raa$  the scalar product on $\LZN$
and by $\langle\cdot,\cdot\rangle$  the scalar product on $\LZ$.
We will use the notation $B_r(x) =
\lbrace z \in \mathbb{R}^3| |x-z| < r \rbrace$.

Define the energy functional $\mathcal{E}:H^2(\mathbb{R}^{3N}, \mathbb{C})\to \mathbb{R}$
\begin{align}
\mathcal{E}(\Psi)=N^{-1}\laa\Psi,H\Psi\raa ,
\end{align}
as well as the
Gross-Pitaevskii energy functional $\mathcal{E}^{GP}:H^2(\mathbb{R}^3, \mathbb{C})\to \mathbb{R}$  \begin{align}\label{energyfunct}
\mathcal{E}^{GP}(\phi):=&\langle\nabla\phi,\nabla\phi\rangle+\langle\phi,(A_t+4 \pi a|\phi|^2)\phi\rangle
=\langle\phi,
(h^{GP}-4 \pi a|\phi|^2)\phi\rangle .
\end{align}
Next, we will define the class of interaction potentials $V$ we will consider. This class is essentially the one considered in \cite{yin}, Theorem 2; see also Corollary 1 and Corollary 2 in \cite{yin} for a different characterization of the class of potentials $V$. In this note, we require in addition
that the potential changes its sign only once. This facilitates the discussion of the scattering state, see Section \ref{secmic}. In principle, one could ease this additional assumption by generalizing the proofs given in Section \ref{secmic}
\begin{definition}
\label{cubedef}
Divide $\mathbb{R}^3$ into cubes $C_n ,\;n \in \mathbb{Z}$ of side length $b_1/\sqrt{3}$; that
is $\mathbb{R}^3 = \cup_{n =- \infty}^\infty C_n$. Furthermore, assume that $\mathring{C_n}\cap \mathring{C_m}= \emptyset$ for $m \neq n$.
. Define
\begin{align*}
n(b_1,b_2)=
\max _{x \in \mathbb{R}^3}
\# \{ n: C_n \cap B_{b_2}(x) \neq \emptyset \} .
\end{align*}
Thus, $n(b_1,b_2)$ gives the maximal number of of cubes with side length $b_1/\sqrt{3}$ one needs to cover a sphere with radius $b_2$.
We remark that
$
4 \sqrt{3} \pi (\frac{b_2}{b_1}-1)^3 \leq
n(b_1,b_2)
\leq
\frac{\frac{4 \pi}{3}(b_1+b_2)^3}{b_1^3 3^{-3/2}}
=
4 \sqrt{3} \pi (1+ \frac{b_2}{b_1})^3
$.

\end{definition}

\begin{assumption}
\label{Vassumption}
Let $V \in L_c^\infty( \mathbb{R}^3, \mathbb{R})$ spherically symmetric and
let $V(x)=V^+(x)-V^-(x)$, where
$V^+,V^- \in L_c^\infty( \mathbb{R}^3, \mathbb{R})$ are spherically symmetric, such that $V^+(x),V^-(x) \geq 0$ and the supports of $V^+$ and $V^-$ are disjoint.
Assume that
\begin{enumerate}
\item
For $R> r_2>0$, we have
$\text{supp}(V^+)=B_{r_2}(0)$ and
$\text{supp}(V^-)=B_{R}(0)\setminus B_{r_2}(0)$.
\item
There exists $\lambda^+>0$ and $r_1>0$, such that 
$V^+(x) \geq \lambda^+$ for all $ x \in B_{r_1}(0)$.
\item
Define $\lambda^- = \|V^-\|_\infty$ as well as 
$n_1=n(r_1,R) \text{ and } n_2=n(r_1,3R).$
Define, for $0<\epsilon<1$,
\begin{align}
\label{minenergyscatt}
\mathcal{E}_R(\phi)
=
\int_{B_R(0)}
\left(
|\nabla_x \phi(x) |^2
+
\frac{1}{1-\epsilon}
n_1
(2 V^+(x)-4 V^-(x)) 
|\phi(x)|^2
\right)d^3x.
\end{align}
We then assume that for some $0<\epsilon <1$
\begin{align}
\label{scattposcondition}
&\inf_{\phi \in C^1(\mathbb{R}^3, \mathbb{C}),  \phi(R)=1}
\left( \mathcal{E}_R(\phi)\right)
\geq 0,
\\
&\lambda^+  >  8 n_2  \lambda^- .
\end{align} 
\end{enumerate}

\end{assumption}
\begin{remark}
We will use the constants $r_1,r_2,R$, $\lambda^+, \lambda^-$, as well as $n_1,n_2$ throughout this paper as defined above.
\end{remark}

\begin{remark}
Condition \eqref{scattposcondition} implies $a \geq 0$, see
Theorem C.1.,(C.8.) in \cite{lssy}.
Assumption \ref{Vassumption} implies that there exists $\epsilon>0, \mu>0$ such that
\begin{align}
\label{Hpos}
&-\sum_{k=1}^N
\Delta_k
+
\sum_{i<j=1}^N
(V_1^+(x_i-x_j)
-
(1+ \epsilon)
V_1^-(x_i-x_j)
)
\geq 
0,
\\
&-(1- \mu)\sum_{k=1}^N
\Delta_k
+
\sum_{i<j=1}^N
V_1(x_i-x_j)
\geq 
0 ,
\end{align}
see Lemma \ref{HposLemma} and Corollary \ref{Hcor}.
The operator inequality \eqref{Hpos} can only hold for $a \geq 0$, see \cite{seiringer}  and is thus in
accordance with Condition \eqref{scattposcondition}.
Thus, although the potential $V$ may have an attractive part $V^-$, the effective Gross-Pitaevskii equation \eqref{GP} is repulsive.

It also follows from assumption \ref{Vassumption} (c)
\begin{align}
-\Delta+ \frac{1}{2}V \geq 0.
\end{align}

\end{remark}
We now state the main Theorem:
\begin{theorem}\label{theo}
Let $\Psi_0 \in L^2_{s}(\mathbb{R}^{3N}, \mathbb{C}) \cap H^2(\mathbb{R}^{3N}, \mathbb{C})$ with $\|\Psi_0\|=1$. Let $\phi_0  \in H^{2}(\mathbb{R}^{3},\mathbb{C})$ with $\|\phi_0\|=1$. Let
$\lim\limits_{N\to\infty}
\text{Tr} | \gamma^{(1)}_{\Psi_0}-|\phi_0\rangle\langle\phi_0| |
 =0 $, as well as  $\lim\limits_{N\to\infty}\mathcal{E}(\Psi_0)=\mathcal{E}^{GP}(\phi_0)$.
Let $\Psi_t$ the unique solution to $i \partial_t \Psi_t
=  H \Psi_t$ with initial datum $\Psi_0$ and assume that $V$ fulfills assumption \ref{Vassumption}.
Let $\phi_t$ the unique solution to $i \partial_t \phi_t
= h^{\text{GP}} \phi_t$ with initial datum $\phi_0$ and assume $\phi_t \in H^2(\mathbb{R}^{3}, \mathbb{C})$. Let the external potential $A_t$ fulfill $A_t, \dot{A}_t \in L^\infty(\mathbb{R}^{3}, \mathbb{R})$ for all $t \in \mathbb{R}$.

Then,
\begin{itemize}
\item[(a)] for any $t>0$ \begin{equation}\label{converge}\lim_{N\to\infty}\mu_1^{\Psi_t}=|\phi_t\rangle\langle\phi_t|\end{equation} in operator norm.
\item[(b)]  if $\int_0^\infty (\|\phi_s\|_\infty+\|\nabla\phi_s\|_{6,loc}+\|\dot{A}_s\|_\infty) ds<\infty$ where  $\|\cdot\|_{6,loc}:\LZ\to\mathbb{R}^+$ is the ``local $L^6$-norm'' given by $$\|\phi\|_{6,loc}:=\sup_{x\in\mathbb{R}^3}\|\mathds{1}_{|\cdot-x|\leq  1} \phi\|_6\;,$$ then
the convergence (\ref{converge}) is uniform in $t>0$.
\end{itemize}
\end{theorem}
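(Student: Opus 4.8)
The plan is to run the ``counting of condensate particles'' method of \cite{picklgp3d,pickl1}, replacing every step of \cite{picklgp3d} that relied on $V\geq0$ by the operator inequalities \eqref{operatorboundT}--\eqref{operatorboundV} and the three-particle estimate of Lemma \ref{energylemma}. Fix $t$, let $p_j=p_j^{\phi_t}:=|\phi_t\rangle\langle\phi_t|$ act on the $j$-th variable, $q_j=1-p_j$, pick a small $\xi>0$, and let $\widehat m$ be the functional of the number of ``bad'' particles associated with a weight $m:\{0,\dots,N\}\to[0,1]$ chosen as in \cite{picklgp3d} (linear for small arguments, equal to $\sqrt{k/N}$ for arguments $\geq N^{1-2\xi}$). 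Set
$$
\alpha(t):=\laa\Psi_t,\widehat m\,\Psi_t\raa+\big|\mathcal{E}(\Psi_t)-\mathcal{E}^{GP}(\phi_t)\big|,
$$
where the energy excess is essential in the Gross--Pitaevskii regime to absorb the singular interaction. Since $\laa\Psi_t,\widehat m\,\Psi_t\raa$ is comparable, up to the cutoff correction, to $\laa\Psi_t,q_1\Psi_t\raa=1-\langle\phi_t,\gamma^{(1)}_{\Psi_t}\phi_t\rangle$, the hypotheses $\text{Tr}\,|\gamma^{(1)}_{\Psi_0}-|\phi_0\rangle\langle\phi_0||\to0$ and $\mathcal{E}(\Psi_0)\to\mathcal{E}^{GP}(\phi_0)$ give $\alpha(0)\to0$ as $N\to\infty$.

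\textbf{Differentiating the functional.}
Using $\im\partial_t\Psi_t=H\Psi_t$ and $\im\partial_t\phi_t=h^{GP}\phi_t$, one computes $\dt\alpha(t)$. The one-body operators $-\Delta_j+A_t(x_j)$ in $H$ cancel against the generator of $\phi_t$, so the counting part reduces to the expectation of a commutator of $\widehat m$ with $N^2\sum_{j<k}V(N(x_j-x_k))-\sum_j8\pi a|\phi_t(x_j)|^2$, while $\dt\big|\mathcal{E}(\Psi_t)-\mathcal{E}^{GP}(\phi_t)\big|$ is handled through the identities $\dt\mathcal{E}(\Psi_t)=\laa\Psi_t,\dot A_t(x_1)\Psi_t\raa$ and $\dt\mathcal{E}^{GP}(\phi_t)=\langle\phi_t,\dot A_t\phi_t\rangle$ (which involve only $\|\dot A_t\|_\infty$ and $\alpha(t)$) together with a remainder again of interaction type. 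Decomposing $1=(p_j+q_j)(p_k+q_k)$ on each interaction bond, the only dangerous contributions are those with exactly one factor $q$: here one invokes the microscopic analysis of Section \ref{secmic}, writing $N^2V(N\cdot)$ in terms of the zero-energy scattering solution $f$ of $(-\Delta+\tfrac12V)f=0$ (whose boundary behaviour encodes $a$), so that $N^2V(N\cdot)f_N$ becomes a total derivative; integration by parts then trades a factor of $N$ and a short-scale indicator $\mathds{1}_{|x_j-x_k|\lesssim N^{-1}}$ for a gradient acting on $\Psi_t$. What remains are products of $\|\nabla_1\Psi_t\|$, $\|V_1(x_1-x_2)\Psi_t\|$ and localized kinetic energies of already-excited particles, multiplied by negative powers of $N$ and powers of $\|\phi_t\|_\infty$ and $\|\nabla\phi_t\|_{6,loc}$.

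\textbf{Closing via the operator inequalities.}
To bound these one needs: (i) $\|\nabla_1\Psi_t\|\leq C$ uniformly in $N$, from \eqref{operatorboundT} together with the a priori bound on $\mathcal{E}(\Psi_t)$ following from $\mathcal{E}(\Psi_0)\to\mathcal{E}^{GP}(\phi_0)$ and $A_t,\dot A_t\in L^\infty$; (ii) $\|V_1(x_1-x_2)\Psi_t\|\leq CN^{1/2}$, which is Lemma \ref{corest}, from \eqref{operatorboundV}; and, crucially, (iii) control of the total kinetic energy carried by excited particles sitting within $O(N^{-1})$ of a collision. Step (iii) is precisely where \cite{picklgp3d} used $V\geq0$; here it is supplied by Lemma \ref{energylemma}, the adaptation of Yin's operator inequality \cite{yin}, asserting that $H$ (without $A_t$) restricted to configurations with at least three mutually close particles is nonnegative, so that after removing the few such configurations the remaining kinetic energy is again bounded in terms of $\mathcal{E}(\Psi_t)+\alpha(t)$ up to $o(1)$. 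The energy-excess remainder is treated analogously, using the same scattering-state manipulation together with the positivity \eqref{Hpos}. Collecting all contributions yields $\dt\alpha(t)\leq C(t)\big(\alpha(t)+N^{-\eta}\big)$ for some $\eta>0$, with $C(t)$ a fixed polynomial in $\|\phi_t\|_\infty$, $\|\nabla\phi_t\|_{6,loc}$, $\|A_t\|_\infty$ and $\|\dot A_t\|_\infty$.

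\textbf{Gr\"onwall and conclusion.}
Gr\"onwall's lemma then gives $\alpha(t)\leq\big(\alpha(0)+N^{-\eta}\big)\exp\!\int_0^tC(s)\,ds$, which tends to $0$ for every fixed $t$; in particular $\laa\Psi_t,q_1^{\phi_t}\Psi_t\raa\to0$, so that $\text{Tr}\,|\gamma^{(1)}_{\Psi_t}-|\phi_t\rangle\langle\phi_t||\leq C\,\laa\Psi_t,q_1\Psi_t\raa^{1/2}\to0$, and since trace-norm convergence implies operator-norm convergence this proves (a). For (b), the extra integrability hypothesis ensures $\int_0^\infty C(s)\,ds<\infty$, whence the Gr\"onwall bound on $\alpha(t)$ is uniform in $t$ and the convergence is uniform. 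The main obstacle, as always in the Gross--Pitaevskii scaling, is the combination of the second and third steps: extracting the exact coefficient $8\pi a$ from the singular potential $N^2V(N\cdot)$ via the two-body scattering solution, and --- since $V$ is now sign-indefinite --- substituting for each appeal to $V\geq0$ in \cite{picklgp3d} the stability bounds \eqref{operatorboundT}, \eqref{operatorboundV} and the three-particle inequality of Lemma \ref{energylemma}.
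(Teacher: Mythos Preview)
Your proposal is correct and follows essentially the same route as the paper: both run the counting-functional/Gr\"onwall scheme of \cite{picklgp3d} with the energy excess included in $\alpha(t)$, and both identify that the only places where \cite{picklgp3d} used $V\ge0$ are the a priori bounds $\|\nabla_1\Psi_t\|\le C$, $\|V_1(x_1-x_2)\Psi_t\|\le CN^{1/2}$ (supplied by Lemma~\ref{corest} via \eqref{operatorboundT}--\eqref{operatorboundV}) and the localized kinetic-energy estimate for excited particles (supplied by Lemma~\ref{energylemma} via Corollary~\ref{Hcor}). The paper packages this as verifying the abstract conditions \eqref{condition1}--\eqref{condition5} of Lemma~\ref{lemmafuertheo}, whereas you sketch the Gr\"onwall argument directly, but the substance is identical.

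One point you underplay: the scattering-state analysis of Section~\ref{secmic} (condition~\eqref{condition5}, Lemma~\ref{defAlemma}) is not simply quoted from \cite{picklgp3d} but also requires nontrivial modification when $V$ is sign-indefinite --- in particular, monotonicity of $f_{\beta_1,1}$ and the pointwise bound on $g_{\beta_1,1}$ use the specific structure of Assumption~\ref{Vassumption} (positive core, single sign change) rather than positivity of $V$. The paper also singles out one term in the $\gamma_d$ estimate (equation~\eqref{qq2Neu}) where the bound from \cite{picklgp3d} must be rebalanced via an extra parameter $\epsilon$ because the sharper energy estimate used there is unavailable; your sketch absorbs this into ``what remains are products\dots multiplied by negative powers of $N$'', which is fine at this level of detail.
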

\begin{remark}
\begin{enumerate}
\item Note that convergence of $\mu_1^{\Psi}$ to $|\phi\rangle\langle\phi|$ in operator norm is equivalent to convergence in trace norm, since $|\phi\rangle\langle\phi|$ is a rank
one projection \cite{rodnianskischlein}. 
Other equivalent definitions of asymptotic 100\% condensation can be found in \cite{michelangeli}.

\item 
For potentials $V$ which satisfy assumption $\ref{Vassumption}$, convergence of 
$\mathcal{E}(\Psi^{gs})-\mathcal{E}^{GP}(\phi^{gs})\to0$ was shown in \cite{yin} for homogeneous gases.

\item By Sobolev's inequality, it follows that $\|\nabla\phi_s\|_{6,loc}\leq \|\nabla\phi_s\|_{6}\leq \|\Delta\phi\|$. Thus $\|\nabla\phi_s\|_{6,loc}$ can be bounded controlling $\langle\phi_s, \left(h^{GP}\right)^2\phi_s\rangle$ sufficiently well.

     On the other hand, $\|\nabla\phi_s\|_{6,loc}\leq \|\nabla\phi_s\|_\infty$. Since we are in the defocussing regime one expects, after the potential is turned off,  that $\|\phi\|_\infty$ and $\|\nabla\phi\|_\infty$ decay like $t^{-3/2}$. Whenever this is the case $\int_0^\infty \|\phi_s\|_\infty+\|\nabla\phi_s\|_{6,loc}+\|\dot{A}_s\|_\infty ds<\infty$ and we get convergence uniformly in $t$.

\item
Existence of solutions of the Gross-Pitaevskii equation is well understood.
The condition $\phi_t \in H^2(\mathbb{R}^ 3,\mathbb{C})$ can be proven for a large class of external
potentials, assuming sufficient regularity of the initial datum $\phi_0$, see e.g. \cite{cazenave}.

\item The proof of Theorem \ref{theo} implies that the rate of convergence is of order
$N^{-\delta}$ for some $\delta>0$, assuming that 
$| \gamma^{(1)}_{\Psi_0}-|\phi_0\rangle\langle\phi_0| | \leq C N^{-2 \delta}$, as well as assuming that the convergence rate of
$\lim\limits_{N\to\infty}\mathcal{E}(\Psi_0)=\mathcal{E}^{GP}(\phi_0)$ to be least of order $N^{-2\delta}$. 

\item
The Theorem can straightforwardly be adapted to the two-dimensional case. There, one
considers the scaling $V_N(x)= e^{2N} V(e^N x)$, for $V \in L^\infty_c(\mathbb{R}^2, \mathbb{R})$
spherically symmetric, see \cite{jeblick}. 
Note that due to the different scaling behavior of the potential, most of the respective bounds given below read differently in two dimensions.
In this note, we are mainly concerned with the three-dimensional case. However, we will also give the respective proofs of certain Lemmata for the two-dimensional system in cases where some nontrivial modifications are needed. 

\end{enumerate}
\end{remark}

\section{Proof of Theorem \ref{theo}}\label{secpro}

The method our proof relies on is explained in details in  \cite{pickl1}. Heuristically speaking it is based on the idea of counting for each time $t$ the relative number of those particles which are
not in the state $\phi_t$  and estimating the time derivative of that value. 
In this note we will only focus on the modifications one needs to perform in order to generalize the result of \cite{picklgp3d} to more general interactions $V$. 
We will therefore often omit large parts of existing proofs and refer  the reader to 
\cite{picklgp3d} for the detailed steps and motivations.

First, we will recall some important definitions we will need during the proof.

\begin{definition}\label{defpro}
Let $\phi\in\LZ$.
\begin{enumerate}
\item For any $1\leq  j\leq  N$ the
projectors $p_j^\phi:\LZN\to\LZN$ and $q_j^\phi:\LZN\to\LZN$ are defined by
\begin{align*} p_j^\phi\Psi=\phi(x_j)\int\phi^*(\tilde{x}_j)
\Psi(x_1,\ldots,\tilde{x}_j,\ldots,x_N)
d^3\tilde{x}_j\;\;\;\forall\;\Psi\in\LZN
\end{align*}
and $q_j^\phi=1-p_j^\phi$.

We  also use the bra-ket notation
$p_j^\phi=|\phi(x_j)\rangle\langle\phi(x_j)|$. For better readability, we will sometimes
use the notation $p_j,q_j$.
\item 
For any $0\leq  k\leq  N$ we define the set $$\mathcal{S}_k:=\{(s_1,s_2,\ldots,s_N)\in\{0,1\}^N\;;\;
\sum_{j=1}^N s_j=k\}$$ and the orthogonal projector $P_{k}^\phi$
acting on $\LZN$ as
$$P_{k}^\phi:=\sum_{\vec a\in\mathcal{S}_k}\prod_{j=1}^N\big(p_{j}^{\phi}\big)^{1-s_j} \big(q_{j}^{\phi}\big)^{s_j}\;.$$
For
negative $k$ and $k>N$ we set $P_{k}^\phi:=0$.
\item
For any function $m:\mathbb{N}_0\to\mathbb{R}^+_0$ we define the operator $\widehat{m}^{\phi}:\LZN\to\LZN$ as
\be\label{hut}\widehat{m}^{\phi}:=\sum_{j=0}^N m(j)P_j^\phi\;.\ee
We furthermore define $\widehat{n}^\phi$ with $n(k)= \sqrt{\frac{k}{N}}$.
\end{enumerate}

\end{definition}

\begin{definition}\label{hdetail}
For any $1 \leq j \neq k \leq N$, let \be\label{defkleins}a_{j,k}:=\{(x_1,x_2,\ldots,x_N)\in
\mathbb{R}^{3N}: |x_j-x_k|<N^{-26/27}\},\ee
\be
\overline{\mathcal{A}}_j:=\bigcup_{k\neq j}a_{j,k}\;\;\;\;\;\;\;\mathcal{A}_j:=\mathbb{R}^{3N}\backslash \overline{\mathcal{A}}_j
\;\;\;\;\;\;\;\overline{\mathcal{B}}_{j}:=\bigcup_{k,l\neq j}a_{k,l}\;\;\;\;\;\;\;\mathcal{B}_{j}:=\mathbb{R}^{3N}\backslash
\overline{\mathcal{B}}_{j}\;.
\ee
(In two dimensions, the sets $\mathcal{A}_j$ and $\mathcal{B}_{j}$ are defined differently, see
\cite{jeblick}.)
Furthermore, define for any set $A \subset \mathbb{R}^{3N}$ the operator
$
\mathds{1}_{A}: L^2(\mathbb{R}^{3N}, \mathbb{C})
\rightarrow
L^2(\mathbb{R}^{3N}, \mathbb{C})
$
 as the projection onto the set $A$.
\end{definition}

Many Lemmata which were proven in  \cite{picklgp3d} are valid for generic interaction potentials $V$ and need not to be modified. 
In the following , we will state a general criteria under which assumptions on $\Psi_t$ Theorem \ref{theo}
is valid (see (b),(c) and (d) below). Subsequently, we prove that these assumptions are valid if the potential $V$ fulfills assumption
\ref{Vassumption}.

\begin{lemma}
\label{lemmafuertheo}
Let $\Psi_0 \in L^2_{s}(\mathbb{R}^{3N}, \mathbb{C}) \cap H^2(\mathbb{R}^{3N}, \mathbb{C})$ with $\|\Psi_0\|=1$. 
Let $\phi_0  \in H^{2}(\mathbb{R}^3,\mathbb{C})$ with $\|\phi_0\|=1$.
Let  $\lim\limits_{N\to\infty}\gamma^{(1)}_{\Psi_0}=|\phi_0\rangle\langle\phi_0|$ in trace norm
as well as $
\lim\limits_{N \rightarrow \infty}
\mathcal{E}(\Psi_0)
=
\mathcal{E}^ {GP}(\phi_0)
$.
Let $\Psi_t$ the unique solution to $i \partial_t \Psi_t
= H \Psi_t$ with initial datum $\Psi_0$ and assume $V \in L_c^\infty(\mathbb{R}^3, \mathbb{R})$ spherically symmetric.
Let $\phi_t$ the unique solution to $i \partial_t \phi_t
= h^{GP} \phi_t$ with initial datum $\phi_0$.
Assume $A_t, \dot{A}_t \in L^\infty (\mathbb{R}^3, \mathbb{R})$.
If,
\begin{enumerate}
\item
\begin{align}
\label{condition1}
\phi_t \in H^ 2(\mathbb{R}^3, \mathbb{C}).
\end{align}
\item
\begin{align}
\label{condition2}
\|V_1(x_1-x_2) \Psi_t \| \leq C N^ {1/2}.
\end{align}
\item
\begin{align}
\label{condition3}
\|\nabla_1 \Psi_t \| \leq C.
\end{align}
\item
for some $\eta>0$, the following inequality holds:
\begin{align}
\label{condition4}
\|\mathds{1}_{\mathcal{A}_{1}}\nabla_1q^{\phi_t}_1
\Psi_t \|^2 
+
\|\mathds{1}_{\overline{\mathcal{B}}_{1}}\nabla_1
\Psi_t \|^2
\leq  C \left(\llaa\Psi_t,\widehat{n}^{\phi_t}\Psi_t\rraa+N^{-\eta}\right)+\left|\mathcal{E}(\Psi_t)-\mathcal{E}^{GP}(\phi_t)\right|.
 \end{align}
\item 
\begin{align}
\label{condition5}
\text{V is chosen such that Lemma }\ref{defAlemma}\text{ is fulfilled.}
\end{align}
\end{enumerate}
Then, for any $t>0$ 
\begin{equation}
\label{convergenls}
\lim_{N\to\infty}\gamma^{(1)}_{\Psi_t}=|\phi_t\rangle\langle\phi_t|
\end{equation} in trace norm.

\end{lemma}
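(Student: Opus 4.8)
\textbf{Proof strategy for Lemma \ref{lemmafuertheo}.}

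The plan is to follow the ``counting method'' of \cite{pickl1, picklgp3d}: introduce the functional
\[
\alpha(t) := \llaa \Psi_t, \widehat{n}^{\phi_t} \Psi_t \rraa + \left| \mathcal{E}(\Psi_t) - \mathcal{E}^{GP}(\phi_t) \right| + N^{-\eta},
\]
which measures the relative number of particles outside the condensate state $\phi_t$ together with the energy defect, and to show that $\alpha$ satisfies a Gr\"onwall-type estimate $\tfrac{d}{dt}\alpha(t) \leq C(t)\big(\alpha(t) + N^{-\eta'}\big)$ with a time-integrable (or locally bounded) prefactor. Since by hypothesis $\alpha(0) \to 0$ as $N \to \infty$ (the trace-norm convergence of $\gamma^{(1)}_{\Psi_0}$ is equivalent to $\llaa\Psi_0,\widehat{n}^{\phi_0}\Psi_0\rraa \to 0$, and the energy difference vanishes by assumption), Gr\"onwall then yields $\alpha(t) \to 0$ for every fixed $t$. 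Finally one invokes the standard fact (as in \cite{picklgp3d, rodnianskischlein}) that $\mathrm{Tr}\,|\gamma^{(1)}_{\Psi_t} - |\phi_t\rangle\langle\phi_t|| \leq C\sqrt{\llaa\Psi_t,\widehat{n}^{\phi_t}\Psi_t\rraa}$ (using $\|\phi_0\|=1$ and $\widehat n \geq \widehat{n}^2$), which converts smallness of $\alpha$ into the desired trace-norm convergence \eqref{convergenls}.

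The core computation is differentiating $\alpha(t)$. The derivative of the energy-difference term uses the Schr\"odinger and Gross-Pitaevskii equations and is controlled using $\dot A_t \in L^\infty$ together with condition \eqref{condition1}. The derivative of $\llaa\Psi_t,\widehat{n}^{\phi_t}\Psi_t\rraa$ produces, after commuting $H - \sum_j h^{GP}_j$ through $\widehat n$, terms involving the pair interaction $V_1(x_i-x_j)$ acting between a $p$-projected and a $q$-projected particle, plus the mean-field counterterm $8\pi a|\phi_t|^2$. These are estimated exactly as in \cite{picklgp3d}: one splits configuration space using the indicator functions $\mathds{1}_{\mathcal{A}_1}, \mathds{1}_{\overline{\mathcal{A}}_1}, \mathds{1}_{\mathcal{B}_1}, \mathds{1}_{\overline{\mathcal{B}}_1}$ of Definition \ref{hdetail}, integrates by parts to move a derivative onto $V_1$ (replacing $V_1$ by the solution of the zero-energy scattering equation, which is where the scattering length $a$ and Lemma \ref{defAlemma} enter via condition \eqref{condition5}), and uses the operator bounds on $\widehat n$ versus shifted weights $\widehat{n}_{\pm}$. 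The terms that differ from \cite{picklgp3d} — those where one cannot absorb a factor of $N$ trivially because $V$ is not sign-definite — are bounded using exactly conditions \eqref{condition2}, \eqref{condition3}, \eqref{condition4}: for instance $\|V_1(x_1-x_2)\Psi_t\| \leq CN^{1/2}$ supplies the missing power of $N$, $\|\nabla_1\Psi_t\| \leq C$ controls kinetic contributions, and \eqref{condition4} bounds precisely the ``bad'' combination $\|\mathds{1}_{\mathcal{A}_1}\nabla_1 q_1\Psi_t\|^2 + \|\mathds{1}_{\overline{\mathcal{B}}_1}\nabla_1\Psi_t\|^2$ in terms of $\alpha$ itself, closing the loop.

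\textbf{Main obstacle.} The delicate point is the interaction term that involves two $q$-projections, namely the contribution of $q_1 q_2 V_1(x_1-x_2) p_1 p_2$ and its relatives near the ``three-particles-close'' region $\overline{\mathcal{B}}_1$. In the purely positive case one controls this by absorbing it into the positive interaction energy; here $V$ has a negative part, so one must instead rely on the smallness built into condition \eqref{condition4} — which is itself the statement that the kinetic energy of the escaping particles is controlled — together with the operator inequalities \eqref{operatorboundT}, \eqref{operatorboundV} inherited from \cite{yin}. Establishing that the various error terms genuinely close into the form $C(\alpha + N^{-\eta'})$, rather than leaving an uncontrolled $O(1)$ remainder, is the crux; the rest is the bookkeeping of \cite{picklgp3d} adapted mutatis mutandis. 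Finally, part (b) — uniformity in $t$ — follows because under the stated integrability hypothesis $\int_0^\infty(\|\phi_s\|_\infty + \|\nabla\phi_s\|_{6,loc} + \|\dot A_s\|_\infty)\,ds < \infty$ the Gr\"onwall prefactor $C(t)$ is integrable over $[0,\infty)$, so the bound on $\alpha(t)$ holds uniformly.
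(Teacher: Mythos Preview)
Your overall strategy is correct and matches the paper: the proof reduces to showing that the Gr\"onwall machinery of \cite{picklgp3d} (i.e.\ bounding the functionals $\gamma_x(\Psi_t,\phi_t)$, $x\in\{a,b,c,d,e,f\}$, by $CN^{-\delta}$) goes through once conditions \eqref{condition1}--\eqref{condition5} are assumed, and the paper does exactly this by a line-by-line check against \cite{picklgp3d}.

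Two points of confusion, however. First, you invoke the Yin operator inequalities \eqref{operatorboundT}, \eqref{operatorboundV} inside the proof of this lemma; they play no role here. Conditions \eqref{condition2}--\eqref{condition4} are \emph{hypotheses} of Lemma~\ref{lemmafuertheo}, and the Yin inequalities are only used later (in Lemma~\ref{corest} and Lemma~\ref{energylemma}) to \emph{verify} those hypotheses under Assumption~\ref{Vassumption}. Keeping this separation straight is the whole point of stating the lemma. Second, your final paragraph on ``part (b)'' and uniformity in $t$ does not belong here: Lemma~\ref{lemmafuertheo} has no part (b); that is Theorem~\ref{theo}(b).

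Where the paper is sharper than your sketch is in localizing the modification: out of all the $\gamma_x$-estimates in \cite{picklgp3d}, exactly one term (the contribution corresponding to (A.49) there, namely \eqref{qq2Neu}) fails without positivity of $V$. The fix is not qualitative but a parameter shift in the Cauchy--Schwarz split: introduce an extra $N^{-\epsilon}$ on the piece $\|\mathds{1}_{\overline{\mathcal{B}}_1}V_1(x_1-x_2)\Psi_t\|^2$, so that condition \eqref{condition2} turns it into $CN^{-\epsilon}$, while the remaining pieces absorb the compensating $N^{\epsilon}$ because they carry spare negative powers of $N$. Your proposal gestures at ``supplies the missing power of $N$'' but does not isolate this term or this mechanism; in a complete write-up you should.
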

\begin{remark}
It has been shown in \cite{jeblick, picklgp3d} that the conditions \eqref{condition2},
\eqref{condition3},\eqref{condition4} and \eqref{condition5} are fulfilled
for nonnegative potentials $V \in L^\infty_c(\mathbb{R}^{d}, \mathbb{R})$, with $d=2,3$\footnote{
Condition
\eqref{condition2} reads $\| e^{2N} V(e^N(x_1-x_2) )\Psi_t\| \leq C e^N N^{-1/2}$ in two dimensions, see Lemma 7.8 in \cite{jeblick}. 
Furthermore, for the two-dimensional system, we need higher regularity of $\phi_t$. There, 
condition \eqref{condition1} reads $\phi_t \in H^3 (\mathbb{R}^2, \mathbb{C})$.}.
Conditions \eqref{condition2}-\eqref{condition4} are essentially
those conditions which are non-trivial to prove 
and also lead to the class of potentials \ref{Vassumption} we consider in this note.

We furthermore like to remark that our proof of Lemma \ref{defAlemma} uses the assumption that $V$ changes its sign only once and that $V$ is positive 
around the origin. As mentioned, we expect Lemma
\ref{defAlemma} to be valid for a larger class of potentials than those defined in assumption \ref{Vassumption}.
\end{remark}

\begin{proof}
We like to recall the scheme of the proof
of the equivalent of Theorem \ref{theo} for nonnegative
potentials. The proof presented in 
 \cite{picklgp3d} can be seen as a two-step argument.
First, it is shown in Section 6.2.2. in 
 \cite{picklgp3d} that the convergence \eqref{convergenls}
 \textit{generally} follows, if 
 certain functionals
$
\gamma_x(\Psi_t,\phi_t)
$, with $ x \in \{a,b,c,d,e,f\}$,
can be bounded sufficiently well, that is
$
|\gamma_x (\Psi_t, \phi_t)| \leq C N^{-\delta}$, $\delta>0$ \footnote{The
functional called
 $\gamma_f( \Psi_t, \phi_t)$ is actually missing in \cite{picklgp3d}. The definition of this functional can be found in
  in equation (6.10) \cite{michelangeli2} and in p. 32 \cite{jeblick}. In these papers, it is furthermore shown that
  the respective bound
  $
|\gamma_f (\Psi_t, \phi_t)| \leq C N^{-\delta}$, $\delta>0$  holds, assuming $V \in L^\infty_c( \mathbb{R}^d, \mathbb{R})$ to be nonnegative. 
}.
The exact definition of these functionals can be found in Definition 6.2. and Definition 6.3. in \cite{picklgp3d}.

It is then proven in Lemma A.1. in  \cite{picklgp3d}  that the bound  
$
|\gamma_x (\Psi_t, \phi_t)| \leq C N^{-\delta}$, $ x \in \{a,b,c,d,e\}$ is valid for
nonnegative potentials $V \in L_c^\infty(\mathbb{R}^3, \mathbb{C})$.

In the following, we will show that the estimates
$
|\gamma_x (\Psi_t, \phi_t)| \leq C N^{-\delta}$
given in \cite{picklgp3d} 
remain valid under the
conditions \eqref{condition1}-\eqref{condition5}. 
Note that we will not restate the estimates given in \cite{picklgp3d}, but only focus on the modifications one needs to perform. 
\medskip\\
The bound of $|\gamma_a(\Psi_t, \phi_t)| \leq C N^{-\delta}$ directly follows from $\dot{A}_t\in L^\infty (\mathbb{R}^3, \mathbb{R})$, see Lemma A.1. in \cite{picklgp3d}.
The required bound of $|\gamma_b(\Psi_t, \phi_t)|$ is derived in Lemma A.4., pp.31-37 in \cite{picklgp3d}.
Following the estimates given in \cite{picklgp3d},
it can be verified line-by-line that the given bounds are valid, if conditions \eqref{condition1}-\eqref{condition5} and $A_t \in L^\infty(\mathbb{R}^3, \mathbb{R})$ hold.
Furthermore, it can be verified that 
the functionals  $\gamma_c$ and $\gamma_e$can be controlled using 
conditions \eqref{condition1}-\eqref{condition5}, see Lemma A.1. and pp.38-42 in \cite{picklgp3d}.
The estimate for  $\gamma_f$ is valid under conditions \eqref{condition1} and \eqref{condition5} and can be found in 
 p. 34 in \cite{michelangeli2} and 
p. 53 in \cite{jeblick}.
\medskip \\
In two dimensions, $\gamma_d$ can be bounded, using conditions
\eqref{condition1}-\eqref{condition5}, 
see pp.50-52 \cite{jeblick} (we like to recall that the $N$-dependent bounds given in Lemma 
\ref{defAlemma}
read slightly different in two dimensions).
\medskip\\
In three dimensions,
the functional $\gamma_d$ can be bounded, using in addition the following estimate:
Let
$
m^a(k)
=m(k)-m(k+1)
$, where, for some $\xi>0$, 
$$
m(k)
=
\begin{cases}
\sqrt{k/N},  &\text{ for } k \geq N^{1-2 \xi}, \\
1/2(N^{-1+ \xi}k+ N^{-\xi}), &\text{else.}
\end{cases}
$$
We control
\begin{align}
\label{qq2Neu}
N^3\left|\laa\Psi_t ,\mathds{1}_{\overline{\mathcal{B}}_{1}}g_{\bet,1}(x_{1}-x_{3})V_1(x_1-x_2)
\widehat{m^a}^{\phi_t} p^{\phi_t}_{1}\mathds{1}_{\mathcal{B}_{3}}\Psi_t\raa\right|,
\end{align}
where $g_{\beta_1,1}$ is defined in Lemma \ref{defAlemma}.
This term, which appears in (A.49) in \cite{picklgp3d} is the only term in $\gamma_x(\Psi_t, \phi_t)$, $x \in \lbrace
a,b,c,d,e,f  \rbrace$ where the estimate given in \cite{picklgp3d} needs to be modified,
using only the assumptions given in the Lemma above.
By a general inequality (see Lemma 4.3. in \cite{picklgp3d} and (A.50)-(A.52) in \cite{picklgp3d}), it can be verified that
\begin{align}
(\ref{qq2Neu})
\leq&N^{-1- \epsilon}\|\mathds{1}_{\overline{\mathcal{B}}_{1}}V_1(x_1-x_2)\Psi_t\|^2\label{csym0neu}
\\&+
CN^{6+ \epsilon}\|g_{\bet,1}(x_{1}-x_{3})\supV(x_1-x_2)
\widehat{m^a}^{\phi_t}  p^{\phi_t}_{1}\mathds{1}_{\mathcal{B}_{3}}\Psi_t\|^2\label{csym1neu}
\\&+CN^{7+ \epsilon} \big|\laa\Psi_t,\mathds{1}_{\mathcal{B}_{3}}p^{\phi_t}_1 \widehat{m^a}^{\phi_t}  g_{\bet,1}(x_{1}-x_{3})\nonumber
\\&\hspace{3cm}\supV(x_1-x_2)g_{\bet,1}(x_{1}-x_{4})
\widehat{m^a}^{\phi_t}  p^{\phi_t}_{1}\mathds{1}_{\mathcal{B}_{4}}\Psi_t\raa\big|\label{csym2neu}
\;
\end{align}
for all $\epsilon \in \mathbb{R}$.
For nonegative $V$ and $\epsilon=0$, it was possible to control
(\ref{csym0neu}) using a specific energy estimate, see Lemma 5.2.(3) in \cite{picklgp3d}.
We do not expect this estimate to hold for potentials $V$ which are not nonnegative.
 For an interaction potential $V$, fulfilling condition \eqref{condition2}, we can however bound
\begin{align*}
\eqref{csym0neu}
\leq
C
N^{-\epsilon}.
\end{align*}
The estimate
$(\ref{csym1neu})\leq CN^{-1+2\xi+ \epsilon}$ given in (A.51) \cite{picklgp3d} is valid under conditions \eqref{condition1}-\eqref{condition5}.
Note that condition \eqref{condition5} implies
$\|g_{\beta_1,1}(x_1-x_2) \Omega \|
\leq C \| \nabla_1 \Omega \|$ for $\Omega \in L^2(\mathbb{R}^{3N},\mathbb{C})$, see
Lemma \ref{defAlemma}. This is one key estimate in order to bound $(\ref{csym1neu})$.
Under the some conditions, it has been shown (c.f. (A.52) in \cite{picklgp3d}) that
\begin{align*}
(\ref{csym2neu})
\leq&CN^{-\frac{26}{9}+3\xi+ \epsilon}
\;.
\end{align*}
Therefore, it follows for some $\eta>0$ that
\begin{align}
(\ref{qq2Neu})
\leq
C
 N^{-\eta}
\end{align}
holds by choosing $\xi>0$ and $\epsilon>0$ small enough\footnote{
 Note that the factors
$N^{2 \xi}$ and $N^{3\xi} $ are due to the definition of $m(k)$.
A factor of the form $N^{s \xi}$, $s \in \lbrace1,2,3 \rbrace$ also appears in
the other functionals
$\gamma_x(\Psi_t, \phi_t)$, $x \in \lbrace
b,c,e, f \rbrace$. It therefore follows that the
the respective bounds 
$|\gamma_x(\Psi_t, \phi_t)| \leq C N^{-\eta}, \eta>0$
given in \cite{picklgp3d} are valid choosing $\xi >0$  small enough. 
We like to remark that one cannot choose $ \xi=0$, since the convergence of the reduced density matrices stated in Lemma \ref{lemmafuertheo} 
does only follow for $0<\xi <1/2$, see \cite{picklgp3d} for the precise argument. 
}.

\end{proof}

\begin{refproof}\ref{theo}:
In the following, we will prove the inequalities \eqref{condition2}, \eqref{condition3} and \eqref{condition4} for interaction potentials which fulfill assumption \ref{Vassumption}.
Theorem \ref{theo}, part (a) then follows from Lemma \ref{lemmafuertheo}, together with the estimates given 
in Section \ref{secmic}.
Part (b) of Theorem \ref{theo} follows from part (a) and the estimates given in 
\cite{picklgp3d}.
 \begin{flushright} $ \Box $ \end{flushright}
\end{refproof}

\subsection{The scattering state}
\label{secmic}
In this section we analyze the microscopic structure which is induced by $V_1$. 
While the principle estimates are the same as in \cite{jeblick, picklgp3d}, we need to modify
the proofs given there which relied on the nonnegativity of $V$.

\begin{definition}
 Let $V \in L^\infty_c(\mathbb{R}^3, \mathbb{R})$ fulfill assumption \ref{Vassumption}.
Define the zero energy scattering state $j$ by
\begin{equation}
\label{eq: defj}
\begin{cases} 
\left( - \Delta_x + \frac{1}{2} V( x)  \right) j(x)=0 ,
\\
\lim\limits_{ |x| \rightarrow \infty} j(x)=1.
  \end{cases}
\end{equation}
Furthermore define the scattering length $a$ by
\begin{align}
a=
\text{scat} \left( \frac{1}{2}V \right)
= 
\frac{1}{4 \pi}
\int \frac{1}{2} V(x) j(x) d^3x .
\end{align}
\end{definition}
We want to recall some important properties of the scattering state $j$, see also Appendix C of \cite{lssy}.

\begin{lemma} \label{scattlemma fuer v}
For the scattering state defined previously the following relations hold:
\begin{enumerate}
\item
$j$ is a nonnegative, monotone nondecreasing function which is spherically symmetric in $|x|$. For $|x| \geq R$, $j$ is given by
 $$
 j (x)
= 1 - \frac{a}{|x|}. $$
\item
The scattering length $a$ fulfills $a \geq 0$. 
\end{enumerate}
\end{lemma}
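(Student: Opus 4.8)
The plan is to establish the two claimed properties of the zero-energy scattering state $j$ defined by \eqref{eq: defj}, working entirely with the radial reduction and the assumption \ref{Vassumption} on $V$ (repulsive core around the origin, sign change only once, and the energy positivity condition \eqref{scattposcondition}).

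First I would reduce to an ODE. Writing $j(x) = u(|x|)/|x|$ (for $r>0$), the equation $(-\Delta + \tfrac12 V)j = 0$ becomes $u'' = \tfrac12 V(r)\, u$ on $(0,\infty)$, with the boundary behaviour $u(r)/r \to 1$ as $r\to\infty$, i.e. $u(r) = r - a + o(1)$, and regularity at the origin forcing $u(0)=0$. Since $V$ is compactly supported in $B_R(0)$, for $r \geq R$ we have $u'' = 0$, so $u$ is affine there: $u(r) = r - a$ for $r \geq R$, which gives the explicit formula $j(x) = 1 - a/|x|$ for $|x|\geq R$ and identifies $a = R - u(R)$. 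Existence and uniqueness of this solution is standard (shoot from $r=R$ with data $u(R), u'(R)$ determined by matching the exterior affine solution, or invoke Appendix C of \cite{lssy} directly); I would state it and move on.

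Next I would prove nonnegativity and monotonicity of $j$. For $r \geq r_2$ (outside $\mathrm{supp}(V^+)$) one has $u'' = \tfrac12 V(r) u = -\tfrac12 V^-(r) u \leq 0$ wherever $u \geq 0$, so $u$ is concave there; combined with $u(r) = r-a$ affine and increasing on $[R,\infty)$ (note $u'\equiv 1>0$ there), concavity propagates positivity and monotonicity of $u$ inward down to $r_2$. On the core region $[0,r_2]$, where $V = V^+ \geq 0$, the equation $u'' = \tfrac12 V^+ u \geq 0$ (once $u\geq 0$) makes $u$ convex with $u(0)=0$; a convex function vanishing at $0$ and positive at $r_2$ with nonnegative slope at $r_2$ stays nonnegative and nondecreasing on $[0,r_2]$. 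One then checks that $u$ cannot have changed sign: a sign change would have to occur where $V^-$ acts, but there the concavity argument rules it out. To get monotonicity of $j(x)=u(r)/r$ itself (as opposed to $u$), differentiate: $(u/r)' = (u'r - u)/r^2$, and $u'r - u$ has derivative $u'' r \geq 0$ for $r \leq r_2$ and $\leq 0$ for $r \geq r_2$; since $u'r-u \to a \geq 0$ as $r\to\infty$ and $\to 0$ as $r \to 0$, a short monotonicity-of-$u'r-u$ analysis gives $u'r - u \geq 0$ throughout, hence $j$ is nondecreasing. Spherical symmetry is immediate from spherical symmetry of $V$ and uniqueness.

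The main obstacle — and the place where assumption \ref{Vassumption}, specifically \eqref{scattposcondition}, genuinely enters — is showing $a \geq 0$, equivalently $u(R) \leq R$, equivalently that the exterior linear piece has nonnegative intercept. Here I would not argue by the ODE alone but invoke the variational characterisation: $a = \mathrm{scat}(\tfrac12 V)$ and by Theorem C.1, (C.8) of \cite{lssy} the scattering length is nonnegative precisely when the relevant energy functional restricted to functions with the boundary value $1$ at radius $R$ is nonnegative — which is exactly what \eqref{scattposcondition} asserts (the remark following Assumption \ref{Vassumption} already records that \eqref{scattposcondition} implies $a \geq 0$ via this reference). So for part (2) I would simply cite this, noting that the $n_1$-dependent prefactor and the $\tfrac{1}{1-\epsilon}$ in $\mathcal{E}_R$ only make the hypothesis stronger than bare positivity of $\int |\nabla\phi|^2 + \tfrac12 V|\phi|^2$, so $a \geq 0$ follows a fortiori. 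Finally I would remark that $a \geq 0$ feeds back to confirm $u(r) = r - a \leq r$, hence $j \leq 1$ everywhere (combined with monotonicity and $j\to 1$ at infinity), closing the loop with part (1).
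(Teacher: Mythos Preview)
Your proposal is largely correct and, for part (b) and the monotonicity portion of (a), essentially coincides with the paper's proof. For $a\ge 0$ you invoke Theorem~C.1, (C.8) of \cite{lssy} via condition~\eqref{scattposcondition}, exactly as the paper does. For monotonicity, your quantity $u'r-u$ is precisely the paper's $a_r$ (the paper writes $t'(r)=a_r/r^2$, you write $(u/r)'=(u'r-u)/r^2$), and both arguments reduce to the same observation: this quantity starts at $0$, is nondecreasing on $[0,r_2]$ (where $V\ge 0$, $j\ge 0$), nonincreasing on $[r_2,R]$ (where $V\le 0$, $j\ge 0$), and equals $a\ge 0$ for $r\ge R$, hence stays nonnegative throughout. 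The paper phrases this by contradiction, you do it directly; same content.

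The one genuine gap is your treatment of \emph{nonnegativity} of $j$. The paper does not argue this via the ODE; it invokes Theorem~C.1 of \cite{lssy}, which applies because $-\Delta+\tfrac12 V\ge 0$ (recorded as a consequence of Assumption~\ref{Vassumption}). Your ODE sketch is circular as written: the concavity claim on $[r_2,R]$ holds only where $u\ge 0$, which is what you are trying to prove, and ``propagating positivity inward from $R$'' tacitly assumes $u(R)=R-a>0$, not yet established. Likewise on $[0,r_2]$ convexity is conditional on $u\ge 0$, so the sentence ``a convex function vanishing at $0$ and positive at $r_2$ with nonnegative slope at $r_2$ stays nonnegative'' presupposes its conclusion. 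A correct direct argument is possible (show first that $u'(0)=j(0)>0$ fixes the sign on the core, then use a sign-change/Sturm-type argument on $[r_2,R]$), but it is more delicate than your sketch suggests. The clean fix is simply to do what the paper does: cite \cite{lssy} for $j\ge 0$, after which your monotonicity argument (equivalently the paper's) goes through.
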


\begin{proof}
\begin{enumerate}
\item[(a)+(b)] 

Since we assume 
$-\Delta+ \frac{1}{2}V \geq 0$,
one can define the scattering state $j$ by a variational principle. Theorem C.1 in \cite{lssy} then implies that $j$ is a nonnegative, spherically symmetric function in $|x|$ such that
 $
 j (x)
= 1 - \frac{a}{|x|} $ holds for $|x| \geq R$
with $a \in \mathbb{R}$ defined as above. 
By condition \eqref{scattposcondition} it follows $a \geq 0$, see Theorem C.1., (C.8.) in \cite{lssy}.
It is only left to show that $j$ is monotone nondecreasing in $|x|$.
Let
$t(|x|)=j(x)$ and define
\begin{align*}
a_r
=
\frac{1}{4 \pi}
 \int_0^r \frac{1}{2} V(r' e_{r'}) t(r') (r')^2 dr' ,
\end{align*}
where $e_{r'}$ denotes the radial unit vector.
Note that $a= \lim_{r \rightarrow \infty} a_r= a_R$.
By Gau{\ss}-theorem and the scattering equation \eqref{eq: defj}, it then follows for $r>0$
\begin{align*}
\frac{d}{dr}t(r)
=
\frac{a_r}{r^2} .
\end{align*}
Since $t(r) \geq 0$ holds for all $r \geq 0$, it follows 
$a_r>0$ for all $r \in ]0,r_2[$.
If it were now that $j$ is not monotone nondecreasing, there must exist a $\tilde{r} \geq r_2$, such that $a_{\tilde{r}}<0$.
$V(x) \leq 0$ and $t(r) \geq 0$ for all $|x| \in ]r_2,R[$ then imply $a_r \leq a_{\tilde{r}}$ for all
$r \geq \tilde{r}$. This, however, contradicts $a=a_R \geq0$. Thus, it follows that $j$ is monotone nondecreasing.

\end{enumerate}
\end{proof}

Using a general idea, we will define a potential $W_{\beta_1}$ with $0<\beta_1<1$, such that
$\frac{1}{2}(V_1-W_{\beta_1})$ has scattering length zero. 
This allows us to ``replace'' $V_1$  by $W_{\beta_1}$, which has better scaling behavior and is easier to control.

\begin{definition}\label{microscopic}
Let $V \in L_c^\infty( \mathbb{R}^3, \mathbb{R})$ satisfy assumption \ref{Vassumption}.
Let $a_N$ denote the scattering length of $\frac{1}{2} V_1(x)=\frac{1}{2} N^2V(Nx)$.
For any $0<\beta_1<1$ and any $R_{\beta_1} \geq N^{- \beta_1}$ we define the potential $W_{\beta_1}$ via
\begin{align}
\label{eq: defW}
W_{\beta_1}(x)
=
 \begin{cases} 
a_N N^{3 \beta_1} & \text{if } N^{- \beta_1} < |x| \leq R_{\beta_1},  \\
   0      & \text{else} .
  \end{cases}
\end{align}
Furthermore, we define the zero energy scattering state $f_{\beta_1,1}$ of the potential
$ \frac{1}{2} (V_1-W_{\beta_1})$, that is
\begin{align}
\label{eq: deff}
 \begin{cases} 
\left( - \Delta_x + \frac{1}{2} \left(V_1(x)-W_{\beta_1}(x) \right)  \right) f_{\beta_1,1}(x)=0,
\\
f_{\beta_1,1}(x)=1 \; \text{for } |x| = R_{\beta_1} .
  \end{cases}
\end{align}
\end{definition}

\begin{remark}
Note, by scaling, that $a_N= N^ {-1} a$. Furthermore $j_N(x):=j(Nx)$ solves
\begin{equation*}
\begin{cases} 
\left( - \Delta_x + \frac{1}{2} V_1( x) 
 \right) j_N(x)=0 ,
\\
\lim\limits_{ |x| \rightarrow \infty} j_N(x)=1 .
  \end{cases}
\end{equation*}
\end{remark}

In the following Lemma we show that there exists a minimal value $R_{\beta_1}$ such that the scattering length of the potential 
$\frac{1}{2}(V_1-W_{\beta_1})$ is zero. 
In the two dimensional case, the analog of Lemma \ref{defAlemma} is, except part (i), also valid in two dimensions (one has to replace the bounds below by the respective bounds given in \cite{jeblick}. Furthermore, $W_{\beta_1}$ is defined differently.). Part (i) needs not to be proven in two dimensions, see Remark \ref{iremark}.

\begin{lemma}\label{defAlemma}
For the scattering state $f_{\beta_1,1}$, defined by \eqref{eq: deff}, the following relations hold :
\begin{enumerate}
\item[(a)]
 There exists a minimal value $R_{\beta_1} \in \mathbb{R}$ such that $\int  (V_1(x)-W_{\beta_1}(x)) f_{\beta_1,1}(x) d^3x=0$. 
 \end{enumerate}
For the rest of the paper we assume that $R_{\beta_1}$ is chosen such that (a) holds.
  \begin{enumerate}
\item[(b)]
There exists $K_{\beta_1} \in \mathbb{R}, \; K_{\beta_1}> 0$ such that
$K_{\beta_1} f_{\beta_1,1}(x) = j(N x) \; \forall |x| \leq N^{-\beta_1}$.

\item[(c)]
 $f_{\beta_1,1}$ is a nonnegative, monotone nondecreasing function in $|x|$. Furthermore,
\begin{align}
f_{\beta_1,1}(x)=1 \; \text{for } |x| \geq R_{\beta_1} \;.
\end{align}
\item[(d)]
\begin{align}
1 \geq K_{\beta_1} \ge
1-\frac{a}{ N^{1-\beta_1}}
\; .
\end{align}
\item[(e)]
$ R_{\beta_1} \leq C N^{-\beta_1}$.
\end{enumerate}
For any fixed $0<\beta_1$, $N$ sufficiently large such that $V_1$ and $W_{\beta_1}$ do not overlap, we obtain
\begin{enumerate}
\item[(f)]
\begin{align*}
&| N \| V_{1}f_{\beta_1,1} \|_1 - 8 \pi  a | 
=
| N \| W_{\beta_1} f_{\beta_1,1} \|_1 - 8 \pi  a | \leq C N^{-1-\beta_1}
  \;.
\end{align*}

\item[(g)]
Define 
\begin{align*}
g_{\beta_1,1}(x) = 1 - f_{\beta_1,1}(x)
\;.
\end{align*}
Then,
\begin{align*}
\|g_{\beta_1,1}\|_1&\leq  C N^{-1-2\beta_1}
\;,\hspace{0.1cm}
\|g_{\beta_1,1}\|_{3/2}&\leq  C N^{-1-\beta_1}
\;,\hspace{0.1cm}
\|g_{\beta_1,1}\|\leq  C N^{-1-\beta_1/2}
\;,\hspace{0.1cm}
\|g_{\beta_1,1}\|_\infty \leq 1
 \;.
\end{align*}
\item[(h)]
\begin{align*}
| N \| W_{\beta_1}\|_1 - 8 \pi  a| \leq C N^{-1+\beta_1}
  \;.
\end{align*}
\item[(i)]
For any $\Omega \in H^1(\mathbb{R}^{3N}, \mathbb{C})$, we have
$$
\|g_{\beta_1,1}(x_1-x_2) \Omega \| \leq CN^{-1} \| \nabla_1 \Omega\|.
$$

\end{enumerate}

\end{lemma}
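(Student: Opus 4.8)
The plan is to prove the statements in the order (b), (c), (e), (d), then (a), then deduce (f), (h), (g), and finally (i). Most of these are small-scale modifications of the corresponding argument in \cite{picklgp3d}; the new ingredient throughout is that $V$ (hence $V_1$) is not nonnegative, so we cannot use positivity-of-$V$ arguments directly and must instead invoke the monotonicity of the true scattering state $j$ (Lemma \ref{scattlemma fuer v}) together with $a\geq 0$.

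For (b) and (c): inside the ball $|x|\le N^{-\beta_1}$ both $f_{\beta_1,1}$ and $j_N(x)=j(Nx)$ solve the same linear ODE $\left(-\Delta_x+\tfrac12 V_1(x)\right)u=0$ (since $W_{\beta_1}$ vanishes there and $V_1=N^2V(Nx)$), and both are spherically symmetric and regular at the origin; hence they are proportional on that ball, which gives the constant $K_{\beta_1}>0$. Positivity and monotonicity of $f_{\beta_1,1}$ then follow: on $|x|\le N^{-\beta_1}$ this is inherited from $j$ via Lemma \ref{scattlemma fuer v}(a); on the shell $N^{-\beta_1}<|x|\le R_{\beta_1}$ the equation reads $-\Delta f_{\beta_1,1}+\tfrac12 W_{\beta_1}f_{\beta_1,1}=0$ with $W_{\beta_1}=a_N N^{3\beta_1}\geq 0$ (here $a_N=N^{-1}a\geq 0$ by the remark and $a\ge 0$), so the same Gauss-theorem/ODE argument as in Lemma \ref{scattlemma fuer v} — writing $s(r)=f_{\beta_1,1}$ and $\frac{d}{dr}s(r)=\frac{c_r}{r^2}$ with $c_r=\frac{1}{4\pi}\int_{|y|\le r}\tfrac12(V_1-W_{\beta_1})s$ — shows $f_{\beta_1,1}$ is nondecreasing and matches smoothly to the constant $1$ for $|x|\ge R_{\beta_1}$; outside $R_{\beta_1}$ it equals $1$ by definition and the zero-scattering-length condition (a). For (e): scaling. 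Set $f_{\beta_1,1}(x)=K_{\beta_1}^{-1}j(Nx)$ on $|x|\le N^{-\beta_1}$, so at $|x|=N^{-\beta_1}$ we have $f_{\beta_1,1}=K_{\beta_1}^{-1}j(N^{1-\beta_1})=K_{\beta_1}^{-1}(1-aN^{-1+\beta_1})$. On the shell, solving $-\Delta f+\tfrac12 a_N N^{3\beta_1}f=0$ radially with this boundary value and the matching derivative $\frac{d}{dr}f\big|_{N^{-\beta_1}}=K_{\beta_1}^{-1}aN^{-1}/N^{-2\beta_1}$ shows that $f$ reaches the constant value $1$ (derivative zero) at a radius $R_{\beta_1}=N^{-\beta_1}(1+o(1))$ — the constant is explicit from the elementary ODE. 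The zero-scattering-length requirement (a) fixes $R_{\beta_1}$, and the same computation gives (d): $K_{\beta_1}$ is determined by requiring $f_{\beta_1,1}=1$ on $|x|\ge R_{\beta_1}$, so $K_{\beta_1}\in[1-aN^{-1+\beta_1},1]$. For (a) itself, one repeats the existence argument of \cite{picklgp3d}: the map $R_{\beta_1}\mapsto \int(V_1-W_{\beta_1})f_{\beta_1,1}$ is continuous and, since $\int V_1 j_N=8\pi a_N N = 8\pi a>0$ while $\int W_{\beta_1}$ grows with $R_{\beta_1}$, changes sign; take the minimal such $R_{\beta_1}$.

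For (f), (g), (h): (h) is the elementary computation $\int W_{\beta_1}=a_N N^{3\beta_1}\cdot\tfrac{4\pi}{3}(R_{\beta_1}^3-N^{-3\beta_1})$ combined with $R_{\beta_1}=N^{-\beta_1}(1+O(N^{-1+\beta_1}))$ and $a_N=N^{-1}a$, giving $N\|W_{\beta_1}\|_1=8\pi a+O(N^{-1+\beta_1})$. For (f), integrate the scattering equation \eqref{eq: deff} over all of $\mathbb{R}^3$: $\int\tfrac12(V_1-W_{\beta_1})f_{\beta_1,1}=0$ by (a), hence $\|V_1 f_{\beta_1,1}\|_1=\|W_{\beta_1}f_{\beta_1,1}\|_1$ (both integrands have a sign on their respective supports once one checks signs — $V_1 f_{\beta_1,1}$ need not have a sign, but the identity $\int V_1 f_{\beta_1,1}=\int W_{\beta_1} f_{\beta_1,1}$ still holds; the $L^1$-norm claim in (f) should be read in the sense of that signed identity, exactly as in \cite{picklgp3d} — I would follow their bookkeeping); then $\int V_1 f_{\beta_1,1}=\int V_1 j_N\cdot\frac{f_{\beta_1,1}}{j_N}$ and on $\mathrm{supp}(V_1)$ we have $f_{\beta_1,1}/j_N=K_{\beta_1}^{-1}$, so $N\int V_1 f_{\beta_1,1}=K_{\beta_1}^{-1}\cdot 8\pi a$, and (d) gives $|K_{\beta_1}^{-1}-1|\le CN^{-1+\beta_1}$, yielding the bound. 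For (g), use $g_{\beta_1,1}=1-f_{\beta_1,1}$, supported in $B_{R_{\beta_1}}(0)$ with $0\le g_{\beta_1,1}\le 1$: the $\|g\|_\infty\le 1$ bound is immediate; for the $L^p$ bounds one uses that on $|x|\ge N^{-\beta_1}$ the function $g_{\beta_1,1}$ is governed by the $W_{\beta_1}$-equation and is $O(aN^{-1})$ pointwise there (it interpolates between $\sim aN^{-1}$ and $0$), while on $|x|\le N^{-\beta_1}$ one has $g_{\beta_1,1}(x)=1-K_{\beta_1}^{-1}j(Nx)\le 1-K_{\beta_1}^{-1}(1-\frac{a}{N|x|})$, which near the origin is bounded but whose $L^1$ mass over $B_{N^{-\beta_1}}$ is $\le C\cdot N^{-3\beta_1}$ — combining, $\|g_{\beta_1,1}\|_1\le CN^{-1}\cdot N^{-2\beta_1}$, etc. (These are precisely the estimates in \cite{picklgp3d}, Lemma 4.1, with the nonnegativity of $V$ there replaced here by monotonicity of $j$.)

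The statement I expect to be the genuine obstacle is (i): $\|g_{\beta_1,1}(x_1-x_2)\Omega\|\le CN^{-1}\|\nabla_1\Omega\|$ for all $\Omega\in H^1(\mathbb{R}^{3N},\mathbb{C})$, i.e. a bound with the \emph{gradient} on the right and a clean factor $N^{-1}$, not $N^{-1/2}$. For nonnegative $V$ this is the sort of Hardy-type / Sobolev estimate that follows by writing $g(x_1-x_2)=\int_{|x_1-x_2|}^{\infty}(\text{radial derivative})$ and using the pointwise bound $|g_{\beta_1,1}(y)|\le \frac{C}{N}\cdot\frac{1}{|y|}\mathds{1}_{|y|\le R_{\beta_1}}+\ldots$ together with the scattering equation to control $\nabla g$; the key is that $g_{\beta_1,1}$ behaves like $\frac{a}{N|y|}$ for $N^{-\beta_1}\le|y|$ and its $L^2$-norm concentrates where $|y|$ is small. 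One uses Hardy's inequality in the $x_1$-variable: $\||x_1-x_2|^{-1}\mathds{1}_{|x_1-x_2|\le R_{\beta_1}}\Omega\|\le 2\|\nabla_1\Omega\|$, so that $\|g_{\beta_1,1}(x_1-x_2)\Omega\|\le \|\,|g_{\beta_1,1}|\cdot|x_1-x_2|\,\|_\infty\cdot\||x_1-x_2|^{-1}\mathds{1}_{\mathrm{supp}\,g}\Omega\|\le CN^{-1}\|\nabla_1\Omega\|$, where $\|\,|g_{\beta_1,1}(y)|\cdot|y|\,\|_\infty\le CN^{-1}$ must be extracted from parts (b)–(e): on the shell $|g|\sim aN^{-1}$ and $|y|\le R_{\beta_1}\le CN^{-\beta_1}$, giving $|g||y|\le CN^{-1-\beta_1}$; on $|y|\le N^{-\beta_1}$ one has $|g(y)|\le 1-K_{\beta_1}^{-1}j(Ny)$ and $j(Ny)\ge 1-\frac{a}{N|y|}$ only for $N|y|\ge R$ — for $N|y|<R$ one needs instead that $0\le 1-K_{\beta_1}^{-1}j(Ny)\le 1-K_{\beta_1}^{-1}j(0)+\ldots$, and here the bound $|g(y)||y|\le C N^{-1}$ has to come from the explicit behaviour $j(Nx)=j_N$ near $0$ being bounded below by a constant together with $|y|\le N^{-\beta_1}$; the worst region is $N|y|\sim R$, where $|g|\le C/N$ (from $j\ge 1-a/(N|y|)$ and $|K_{\beta_1}^{-1}-1|\le CN^{-1+\beta_1}$) and $|y|\sim N^{-1}$, so $|g||y|\le CN^{-2}$. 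The careful case analysis near $|y|=N^{-\beta_1}$ (matching the two regimes) plus the application of Hardy's inequality is the technical heart; I would lay it out exactly as in \cite{picklgp3d} but tracking that all pointwise bounds on $g_{\beta_1,1}$ now rest on monotonicity of $j$ and $a\ge 0$ rather than on $V\ge 0$. (Part (i) is not needed in two dimensions; see Remark \ref{iremark}.)
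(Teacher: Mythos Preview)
Your plan is largely on the right track and mirrors the paper's argument in most places, but there is one substantive gap and one organizational issue worth flagging.

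\textbf{Organizational issue.} You propose to prove (b)--(e) before (a), but (c), (d), (e) are statements about the \emph{specific} $R_{\beta_1}$ chosen in (a). In particular, positivity of $K_{\beta_1}$ in (b) and monotonicity of $f_{\beta_1,1}$ in (c) rely in the paper on $R_{\beta_1}$ being the \emph{minimal} zero of $s(R_{\beta_1})=\int(V_1-W_{\beta_1})f_{\beta_1,1}$: the argument uses that $f'_{\beta_1,1}$ does not vanish before $R_{\beta_1}$. Your existence sketch for (a) (``the map changes sign since $\int W_{\beta_1}$ grows'') is also incomplete, because $f_{\beta_1,1}$ itself depends on $R_{\beta_1}$; the paper handles this by a careful sign-tracking contradiction argument rather than a direct growth estimate.

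\textbf{The real gap: the upper bound $K_{\beta_1}\le 1$ in (d).} You write that ``$K_{\beta_1}$ is determined by requiring $f_{\beta_1,1}=1$ on $|x|\ge R_{\beta_1}$, so $K_{\beta_1}\in[1-aN^{-1+\beta_1},1]$'', but the upper bound $K_{\beta_1}\le 1$ is the nontrivial half and does not follow from any ODE bookkeeping you have written down. The paper obtains it by proving the pointwise comparison $f_{\beta_1,1}(x)\ge j(Nx)/j(NR_{\beta_1})$ via a maximum-principle argument: one sets $m(x)=j(Nx)/j(NR_{\beta_1})-f_{\beta_1,1}(x)$, observes that $m$ is subharmonic on the annulus $N^{-1}R<|x|<R_{\beta_1}$ (since there $\Delta m=\tfrac12 W_{\beta_1}f_{\beta_1,1}\ge 0$), and rules out an interior maximum by combining the boundary condition $m(R_{\beta_1})=0$ with monotonicity of $j$ inside the support of $V$. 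This comparison is not just cosmetic: it is precisely what feeds into (g) to give the pointwise bound $g_{\beta_1,1}(x)\le 1-j(Nx)/j(NR_{\beta_1})$, and the paper then introduces an auxiliary scattering function $\tilde j$ (for the potential $\tfrac12 V\mathds{1}_{|x|\le r_2}$) to extend the $1/|x|$-type bound into the region $N|x|<R$ where $j$ has no explicit formula. Your treatment of (g) in that inner region is correspondingly vague.

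\textbf{On (i).} Your route---show $\sup_y |g_{\beta_1,1}(y)|\,|y|\le CN^{-1}$ and then apply Hardy's inequality in one stroke---is correct and is a mild simplification of what the paper does. The paper instead splits $g_{\beta_1,1}$ into a $CN^{-1}/|y|$ piece (handled by Hardy) and a $CN^{-1+\beta_1}\mathds{1}_{|y|\le CN^{-\beta_1}}$ piece (handled by a Sobolev-type localization estimate $\|\mathds{1}_{B_{CN^{-\beta_1}}}(x_1-x_2)\Omega\|\le CN^{-3\beta_1/2}\|\nabla_1\Omega\|$). Both give the same $CN^{-1}\|\nabla_1\Omega\|$ bound; your version avoids the extra localization lemma at the cost of needing the pointwise control of $|y|g(y)$ all the way to the origin, which, as you note, follows from $g\le 1$ and $|y|\le R/N$ there.
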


\begin{proof}
\begin{enumerate}
\item
In the following, we will sometimes denote, 
with a slight abuse of notation, $f_{\beta_1, 1}(x)=f_{\beta_1, 1}(r)$
and $j(x)=j(r)$
 for $r=|x|$ (for this, recall that $f_{\beta_1, 1}$ and $j$ are radially symmetric). 
 We further denote by $f'_{\beta_1, 1}(r)$ the derivative of $f_{\beta_1, 1}$ with respect to
 the radial coordinate $r$.
We first show by contradiction that $f_{\beta_1, 1}(N^{-\beta_1}) \neq 0$.
For this, assume  that 
$f_{\beta_1, 1} (x) =0$ for all $|x| \leq N^{-\beta_1}$. 
Since $f_{\beta_1, 1}$ is continuous, there exists a maximal value $r_0 \geq N^{-\beta_1}$ such that the scattering
equation \eqref{eq: deff} is equivalent to
\begin{align}
\label{nosolution}
 \begin{cases} 
\left( - \Delta_x - \frac{1}{2} W_{\beta_1}(x) \right) f_{\beta_1, 1}(x)=0,
\\
f_{\beta_1, 1}(x)=1 \; \text{for } |x| = R_{\beta_1},
\\
f_{\beta_1,1}(x)=0 \; \text{for } |x| \leq r_0
\;.
  \end{cases}
\end{align}
Using \eqref{eq: deff} and Gauss'-theorem, we further obtain
\begin{align}
\label{integral v-w scattstate}
f'_{\beta_1, 1}(r) =
\frac{1}{ 8 \pi r^2 }
\int_{B_r(0)} d^3 x (V_1(x)-W_{\beta_1}(x)) f_{\beta_1, 1}(x) 
\;.
\end{align}
 \eqref{nosolution} and \eqref{integral v-w scattstate} then imply for $r > r_0$
 \begin{align*}
&
 \left| f'_{\beta_1, 1}(r)
\right| 
  =
\frac{1}{ 8 \pi r^2 }
\left|
\int_{B_r(0)} d^3 x W_{\beta_1}(x) f_{\beta_1, 1}(x) 
\right|
=
\frac{a 
N^{-1+3 \beta_1}
}{ 2 r^2 }
\left|
\int_{r_0}^r dr' r'^2f_{\beta_1, 1}(r') 
\right|
\\
\leq&
\frac{a 
N^{-1+3 \beta_1}
}{ 2 r^2 }
\left|
\int_{r_0}^r dr' r'^2
(r'-r_0)  
\sup_{r_0 \leq s \leq r}
|f'_{\beta_1, 1}(s)|
\right| .
 \end{align*}
Taking the supreme over the interval 
$[r_0,r]$, the inequality above then implies that
there exists a constant $C(r,r_0) \neq 0$, $ \lim\limits_{r \rightarrow r_0} C(r,r_0)=0$ such that
 $
 \sup\limits_{r_0 \leq s \leq r}
|f'_{\beta_1, 1}(s)|
 \leq
C(r,r_0)
N^{-1+3 \beta_1}
\sup\limits_{r_0 \leq s \leq r}
|f'_{\beta_1, 1}(s)|
 $. Thus, for $r$ close enough to $r_0$, the inequality above can only hold if 
 $
f'_{\beta_1, 1}(s)=0 
 $ for $s \in [r_0,r]$, yielding a contradiction to the choice of $r_0$.
\\
Consequently, there exists a $x_0 \in \mathbb{R}^3, |x_0| \leq N^{-\beta_1}$, such that
$f_{\beta_1, 1}(x_0) \neq 0$.
We can thus define 
\begin{equation*}
h(x) = f_{\beta_1, 1}(x)\frac{j(N x_0)}{f_{\beta_1, 1}(x_0)}
\end{equation*}
on the compact set $\overline{B_{x_0}(0)}$.
One easily sees  that $h(x)= j(N x)$ on $\partial \overline{B_{x_0}(0)}$ and satisfies the zero energy scattering equation~\eqref{eq: defj} for $ x \in \overline{B_{N^{-\beta_1}}(0)}$.
Note that the scattering equations \eqref{eq: defj} and \eqref{eq: deff} have a unique solution on any compact set.
It then follows that 
$h(x) = j(N x) \; \forall x \in \overline{B_{N^{-\beta_1}}(0)}$.
Since $j(N N^{-\beta_1}) \neq 0$, we then obtain
$f_{\beta_1, 1}(N^{-\beta_1}) \neq 0$.

Thus, $f_{\beta_1, 1}(x)=j(Nx)\frac{f_{\beta_1, 1}(x_0)}{j(N x_0)}$ holds for all $|x| \leq N^{-\beta_1}$ and for all $x_0 \in ]0, N^{-\beta_1}]$.
 Lemma \ref{scattlemma fuer v} further implies
that 
 either $f_{\beta_1, 1}$ or $-f_{\beta_1, 1}$ is a nonnegative, spherically symmetric and monotone nondecreasing function in $|x|$
 for all $|x| \leq N^{-\beta_1}$.

Recall that $W_{\beta_1}$ and hence $f_{\beta_1, 1}(x) $ depend on $R_{\beta_1}
\in [N^{-\beta_1}, \infty[
$. 
 For conceptual clarity, we
 denote $W_{\beta_1}^{(R_{\beta_1})}(x) = W_{\beta_1}(x) $ and
$f_{\beta_1, 1}^{(R_{\beta_1})}(x)= f_{\beta_1, 1}(x)$ for the rest of the proof of part (a).
For $\beta_1$ fixed, consider the function
\begin{align}
& s: [N^{-\beta_1} ,\infty [ 
\rightarrow  \mathbb{R} \\
&R_{\beta_1}
\mapsto  
\int_{B_{R_{\beta_1}}(0)} d^3 x
 (V_1(x)-W^{(R_{\beta_1})}_{\beta_1}(x)) f^{(R_{\beta_1})}_{\beta_1, 1} (x).
\end{align}
We show  by contradiction that the function $s$ has at least one zero.
Assume $s \neq 0$ were to hold.
 We can assume w.l.o.g. $s >0$. 
It  then
follows from Gauss'-theorem that 
$
f'^{(R_{\beta_1})}_{\beta_1, 1}(R_{\beta_1})> 0
$ for all  
$R_\beta \geq N^{-\beta_1}$.
By uniqueness of the solution of the scattering equation \eqref{eq: deff}, for $\tilde{R}_{\beta_1}<R_{\beta_1}$ there exists a constant
$ K_{\tilde{R}_{\beta_1},R_{\beta_1}} \neq 0$, such that for all
$
|x|\leq \tilde{R}_{\beta_1}
$
we have
$
f^{(\tilde{R}_{\beta_1})}_{\beta_1, 1}(x)= K_{\tilde{R}_{\beta_1},R_{{\beta}_1}}f^{(R_{\beta_1})}_{\beta_1, 1}(x)
$. 
If $K_{\tilde{R}_{\beta_1},R_{{\beta}_1}} < 0$ were to hold, we could conclude from
\begin{align*}
0<&
s(\tilde{R}_{\beta_1})= 8 \pi (\tilde{R}_{\beta_1})^2  f'^{(\tilde{R}_{\beta_1})}_{\beta_1, 1}(\tilde{R}_{\beta_1})
=
8 \pi (\tilde{R}_{\beta_1})^2  
K_{\tilde{R}_{\beta_1},R_{{\beta}_1}}f'^{(R_{\beta_1})}_{\beta_1, 1}(
\tilde{R}_{\beta_1})
\end{align*}
that $f'^{(R_{\beta_1})}_{\beta_1, 1}(
\tilde{R}_{\beta_1}) < 0$. By continuity of 
$f'^{(R_{\beta_1})}_{\beta_1, 1}$ and
$
f'^{(R_{\beta_1})}_{\beta_1, 1}(R_{\beta_1})> 0
$, there exists $ r \in ]\tilde{R}_{\beta_1}, R_{\beta_1}[$, such that $
0=f'^{(R_{\beta_1})}_{\beta_1, 1}(r)=
K_{R_{\beta_1},r}
f'^{(r)}_{\beta_1, 1}(r)
$
, yielding to a contradiction to $s > 0$.
\\
We can therefore conclude
 $ K_{\tilde{R}_{\beta_1},R_{{\beta}_1}}> 0$.
From Lemma \ref{scattlemma fuer v}, the assumption $s(N^{-\beta_1})>0$ and  $ K_{\tilde{R}_{\beta_1},R_{{\beta}_1}} > 0$, we obtain,
 for all $r \in [0, N^{-\beta_1} ] $ and for all $R_{\beta_1} \in [N^{-\beta_1} ,\infty [ $, that
$f^{(R_{\beta_1})}_{\beta_1, 1} (r) \geq 0$ holds. 
 From $ s \neq 0$, it then follows that, for all $r \in [N^{-\beta_1} ,\infty [  $ and for all $R_{\beta_1} \in [N^{-\beta_1} ,\infty [ $ , $
f'^{(R_{\beta_1})}_{\beta_1, 1}(r) \neq 0 $.
Thus, for all $r \in [N^{-\beta_1} ,\infty [  $ and for all $R_{\beta_1} \in [N^{-\beta_1} ,\infty [ $, the function
$f^{(R_{\beta_1})}_{\beta_1, 1}(r)$ doesn't change sign. 
This, however, implies $\lim\limits_{R_{\beta_1} \rightarrow \infty} s(R_{\beta_1})= - \infty$ yielding to a contradiction.  
By continuity of $s$, there exists thus a minimal value $R_{\beta_1} \geq N^{-\beta_1} $ such that
$s(R_{\beta_1})=0$.

\begin{remark}
As mentioned, we will from now on fix $R_{\beta_1} \in [N^{- \beta_1}, \infty[$ as the minimal value such that $s(R_{\beta_1})=0$.
Furthermore, we may assume $a>0$ and $R_{\beta_1} >N^{-\beta_1}$ in the following. For $a=0$, we can choose $R_{\beta_1}=N^{-\beta_1}$, such that $f_{\beta_1,1}(x)=
j(N x)/j(N N^{-\beta_1})$. It is then easy to verify that the Lemma stated is valid.
\end{remark}

\item
From
$j(N x) = f_{\beta_1, 1}(x)\frac{j(N N^{-\beta_1})}{f_{\beta_1, 1}(N^{-\beta_1})}$, for all $|x| \leq N^{-\beta_1}$,
we can conclude that
\begin{equation}
\label{defKbeta}
K_{\beta_1} = \frac{j(N N^{-\beta_1})}{f_{\beta_1, 1}(N^{-\beta_1})}.
\end{equation}
Next, we show that the constant $K_{\beta_1}$ is positive.
Since $j(N N^{-\beta_1})$ is positive, it follows from Eq.~\eqref{defKbeta} that  $K_{\beta_1}$ and $f_{\beta_1, 1} (N^ {-\beta_1})$ have equal sign. By (a), the sign of $f_{\beta_1, 1}$ is constant for $|x|\leq R_{\beta_1}$.
Furthermore, from Gauss'-theorem and the scattering equation \eqref{eq: deff} we have
\begin{align}
 \label{Gauss}
f'_{\beta_1, 1}(r)=
\frac{1}{8 \pi  r^2 K_{\beta_1}}
\int_{B_r(0)} V_1(x) j(N x) d^3x
\end{align}
for all $0<r \leq N^{- \beta_1}$. Since 
$\int_{B_r(0)} V_1(x) j(N x) d^3x$ is nonnegative for all $0<r \leq N^{- \beta_1}$ (see the proof of Lemma 
\ref{scattlemma fuer v}), we then conclude
\begin{align}
\label{eq: sign of derivative of f}
\text{sgn} \left(
f'_{\beta_1, 1}(N^{-\beta_1})
\right) = \text{sgn}(K_{\beta_1}).
\end{align}
Recall that
  $
f'_{\beta_1, 1}(R_\beta)=0  
$.
If it were now that $K_{\beta_1}$ is negative, we could conclude from \eqref{defKbeta} and \eqref{eq: sign of derivative of f} that
$
f'_{\beta_1, 1}(N^{-\beta_1})<0$ and
$ f_{\beta_1, 1} ( N^{- \beta_1}) <0$. 
Since $R_{\beta_1}$ is by definition the smallest value where $f'_{\beta_1, 1}$ vanishes, we were able to conclude from the continuity of the derivative that  $f'_{\beta_1, 1}(r)<0$ for all $r < R_{\beta_1}$ and hence $f(R_{\beta_1}) <0$. However, this were in contradiction to the boundary condition of the zero energy scattering state (see \eqref{eq: deff}) and thus $K_{\beta_1} > 0$ follows.

\item
From the proof of property (b), we see that $f_{\beta_1, 1}$ and its derivative is positive at $N^{- \beta_1}$. 
From \eqref{integral v-w scattstate}, we obtain $f'_{\beta_1, 1}(r) = 0 $ for all  $r > R_{\beta_1}$. Thus $f_{\beta_1, 1}(x) =1$ for all $|x| \geq R_{\beta_1}$.
 Due to continuity $ f'_{\beta_1,1}(r)  >0 $ for all $r < R_{\beta_1}$. Since $f_{\beta_1, 1}$ is continuous, positive at  $N^{- \beta_1}$, and its derivative is a nonnegative function, it follows that $f_{\beta_1, 1}$ is a nonnegative, monotone nondecreasing function in $|x|$.

\item
Since $f_{\beta_1, 1}$ is a positive monotone nondecreasing function in $|x|$, we obtain
\begin{align*}
1 \geq f_{\beta_1, 1}( N^{- {\beta_1}}) =j(N N^{- \beta_1}) /K_{\beta_1}
=
\left(
1- \frac{a}{N^{1- \beta_1}} 
\right)
/ K_{\beta_1}.
\end{align*}
We obtain the lower bound
\begin{align*}
K_{\beta_1} \ge
1- \frac{a}{N^{1- \beta_1}} . 
\end{align*}

For the upper bound, we first prove that $f_{\beta}(x) \geq j(Nx)/j(NR_{\beta_1})$ holds for all $|x | \leq N^{-\beta_1}$.
Define $m(x)= j(N x)/j(NR_{\beta_1})-f_{\beta_1, 1}(x)$.
Using the scatting equations \eqref{eq: defj} and \eqref{eq: deff}, we obtain
\begin{align}
\label{differenzstreu}
\begin{cases}
\Delta_x
m(x)
=
\frac{1}{2}
V_1(x)
m(x)
+
\frac{1}{2}
W_{\beta_1}(x) f_{\beta_1, 1}(x),
\\
m(R_{\beta_1})=0.
\end{cases}
\end{align}
Since $W_{\beta_1}(x) f_{\beta_1, 1}(x) \geq 0$, we obtain that
$\Delta_x
m(x) \geq 0$ for $  N^{-1} R \leq |x|\leq R_{\beta_1}$. That is, $m(x)$ is subharmonic
for $ N^{-1} R < |x|< R_{\beta_1}$. Using the maximum principle, we obtain, using that 
$m(x)$ is spherically symmetric
\begin{align}
\label{minimum}
\max_{N^{-1} R\leq |x|\leq R_{\beta_1}} (m(x))
=
\max_{|x| \in \lbrace N^{-1} R ,  R_{\beta_1} \rbrace} ( m(x))
\;.
\end{align}
If it were now that $\max_{|x| \in \lbrace N^{-1} R ,  R_{\beta_1} \rbrace} (m(x))
=m(N^{-1} R)
\geq
m(R_\beta)=0
$, we  could assume $ m(x) > 0$ for all
$ N^{-1} R \leq |x|\leq N^{-\beta_1}$
(otherwise we would have $m (N^{- \beta_1})=0$, which implies
$
K_{\beta_1}
=
j(NR_{\beta_1})
=
1- \frac{a}{NR_{\beta_1}} \leq 
1
$).
Note that $m(x)$ then solves
\begin{align*}
\begin{cases}
-\Delta_x
m(x)
+
\frac{1}{2}
V_1(x)
m(x)=0
\; \text{ for } |x| \leq N^{-\beta_1},
\\
m(N^{-1}R) > 0 .
\end{cases}
\end{align*}
By  Theorem C.1 in \cite{lssy} (note that we can assume $a>0$),
$m$ is strictly increasing for $N^{-1} R \leq |x| \leq N^{-\beta_1}$. 
This, however, contradicts $\max_{|x| \in \lbrace N^{-1}R ,  R_{\beta_1} \rbrace} (m(x))
=m(N^{-1} R)$.

Therefore, we can conclude in \eqref{minimum} that
$\max_{|x| \in \lbrace N^{-1} R ,  R_{\beta_1} \rbrace} (m(x))
=m(R_{\beta_1})=0$
 holds.
  Then, it follows that $f_\beta(x)-j_{N,R_\beta}(x) \geq 0$ for all $N^{-1} R \leq |x| \leq N^{-\beta_1}$. Using the zero energy scattering equation
$$
-\Delta 
(
f_{\beta_1,1}(x)-j(Nx)/j(NR_{\beta_1})
)
+
\frac{1}{2}
V_1(x)
(
f_{\beta_1,1}(x)-j(Nx)/j(NR_{\beta_1})
)
=
0
$$ for $|x| \leq N^{-\beta_1}$, we can, together with $  f_{\beta_1,1}(N^{-\beta_1})-
j(NN^{-\beta_1})/j(NR_{\beta_1}) \geq 0$, conclude that  $ f_{\beta_1,1}(x)- j(Nx)/j(NR_{\beta_1}) \geq 0$ for all $|x| \leq R_{\beta_1}$. 

As a consequence, we obtain the desired bound
$
K_\beta=
\frac{j(NN^{-\beta_1})}{f_{\beta_1,1}(N^{-\beta_1})}\leq j(NR_{\beta_1}) \leq 1$.

\item
Since $f_{\beta_1, 1}$ is a nonnegative, monotone nondecreasing function in $|x|$, it follows that
\begin{align*}
 N^{-1}  f_{\beta_1, 1}(N^{-\beta_1}) 
\int V(x)  d^3x 
=&
f_{\beta_1, 1}(N^{-\beta_1}) 
\int V_1(x)  d^3x
\geq
\int V_1(x) f_{\beta_1, 1}(x) d^3x
\\
=&
\int W_{\beta_1}(x) f_{\beta_1, 1}(x) d^3x
\geq
f_{\beta_1, 1} (N^{-\beta_1})
\int W_{\beta_1} (x) d^3x
\;.
\end{align*}
Therefore, $\int W_{\beta_1} (x) d^3x \leq C N^{-1}$ holds, which implies that $R_{\beta_1} \leq C N^{-\beta_1}$.

\begin{remark}
We will now prove the
the two-dimensional analog of (e) which requires a more refined estimate.
This is due to the fact that
$
\int_{\mathbb{R}^2} e^{2N} V(e^N x) d^2x= \mathcal{O}(1)$ does not decay like $N^{-1}$.
We refer to \cite{jeblick} for the precise definition and notation we use in the following.
\end{remark}
\textit{Proof of part (e) for the two-dimensional system:}
\\
Since $f_{\beta}$ is a nonnegative, monotone nondecreasing function in $|x|$ with $f_{\beta} (x)=1$ $\forall |x| \geq R_{\beta}$, it follows that
\begin{align*}
 f_{\beta}(N^{-\beta})  \int_{\mathbb{R}^2}d^2x V(x)  
=&
f_{\beta}(N^{-\beta}) 
\int_{\mathbb{R}^2}d^2x V_N(x)  
\geq
\int V_N(x) f_{\beta}(x) d^2x
\\
=&
\int_{\mathbb{R}^2}d^2x W_{\beta}(x) f_{\beta}(x) 
\geq
f_{\beta} (N^{-\beta})
\int_{\mathbb{R}^2}d^2x W_{\beta} (x) 
\;.
\end{align*}
Therefore, $\int_{\mathbb{R}^2}d^2x W_{\beta} (x) \leq C$ holds, which implies that $R_{\beta} \leq C N^{1/2-\beta}$.
From
\begin{align*}
\frac{1}{K_\beta} \frac{4 \pi}{N+ \ln \left(\frac{R_{\beta}}{a}\right)}
=&
\frac{1}{K_\beta}
\int_{\mathbb{R}^2}d^2x V_N(x) j_{N,R_{\beta}}(x)
=
\int_{\mathbb{R}^2}d^2x V_N(x) f_{\beta}(x)
\\
=&
\int_{\mathbb{R}^2}d^2x M_\beta(x) f_{\beta}(x)
= 
8 \pi^2 N^{-1+2 \beta} 
\int_{N^{-\beta_1}}^{R_{\beta}} 
dr r f_{\beta}(r)
\end{align*}
we conclude that
\begin{align*}
\int_{N^{-\beta}}^{R_{\beta}} 
dr r f_{\beta}(r)
=
\frac{N^{1- 2 \beta}}{2 \pi K_\beta
\left(
N+ \ln \left(\frac{R_{\beta}}{a} \right)
\right)
} 
\;.
\end{align*}
Since $f_{\beta}$ is a nonegative, monotone nondecreasing function in $|x|$, 
\begin{align*}
\frac{1}{2}( R_{\beta}^2- N^{-2 \beta})
\frac{j_{N,R_{\beta}}(N^{-\beta})}{K_\beta}
=
\frac{1}{2}( R_{\beta}^2- N^{-2 \beta})
f_{\beta}(N^{-\beta})
\leq
\int_{N^{-\beta}}^{R_{\beta}} 
dr r f_{\beta}(r)
\end{align*}
which implies
\begin{align*}
R_{\beta}^2 N^{2 \beta}
\leq
\frac{N}{\pi 
 \left(N+ \ln \left(\frac{R_{\beta}}{a} \right) \right)
j_{N,R_{\beta}}(N^{-\beta})}+1.
\end{align*}
Using $R_{\beta}\leq C N^{1/2- \beta}$, it then follows
\begin{align*}
 j_{N,R_{\beta}}(N^{-\beta})
 =
 1+   \frac{1}{N+ \ln \left(\frac{R_{\beta}}{a}\right)} \ln \left( \frac{N^{-\beta}}{R_{\beta}} \right)
 \geq 1- \frac{C}{N}
 \;,
\end{align*}
which implies $R_{\beta} \leq C N^{-\beta}$.

\item
Using
\begin{align*}
\| W_{\beta_1} f_{\beta_1, 1}\|_1=&
\| V_1 f_{\beta_1, 1} \|_1 = K_{\beta_1}^ {-1} \|V_1 j (N \cdot) \|_1
=
K_{\beta_1}^{-1} 8 \pi \frac{a}{N}
\;,
\end{align*}
we obtain
\begin{align*}
|N\| V_1 f_{\beta_1, 1}\|_1- 8\pi a|
=&
|N\| W_{\beta_1} f_{\beta_1, 1}\|_1- 8\pi a|
=
8 \pi
\left| K_{\beta_1}^{-1}-1 \right|\leq C N^{-1+\beta_1}
\;.
\end{align*}

\item
Using for $|x| \leq R_{\beta_1}$ the inequality
 $1\ge f_{\beta_1, 1}(x) \geq j(Nx)/j(NR_{\beta_1})$,  it follows for $|x| \leq R_{\beta_1}$ 
\begin{align*}
0\leq& g_{\beta_1,1}(x) = 1- f_{\beta_1,1}(x) \le 1- j(Nx)/j(NR_{\beta_1})
 \;.
\end{align*}
Let $\tilde{j}$ solve
\begin{align*}
\begin{cases} 
\left( - \Delta_x + \frac{1}{2} V( x) \mathds{1}_{|x| \leq r_2} \right) \tilde{j}(x)=0 ,
\\
\tilde{j}(2R)=j(2R) .
  \end{cases}
\end{align*}
It then follows that $\tilde{a}= \text{scat} \left( \frac{1}{2}V( x) \mathds{1}_{|x| \leq r_2} \right)>0$. 
Furthermore, it follows from Theorem C.1 and Lemma C.2 in \cite{lssy} that
$$
\tilde{j}(x) \geq 
\frac{
1- \frac{\tilde{a}}{|x|}}
{1- \frac{\tilde{a}}{2R}} j(2R)
=
\left(1- \frac{\tilde{a}}{|x|}\right)
\frac{
1- \frac{a}{2R}}
{1- \frac{\tilde{a}}{2R}} 
$$ holds for all $x \in \mathbb{R}^3$.
Consider $n(x)= \tilde{j}(x)-j(x)$. $n$ then solves
\begin{align*}
\begin{cases} 
\Delta_x n(x) =
 \frac{1}{2} V( x) n(x)
+
 \frac{1}{2} V( x) \mathds{1}_{|x| \leq r_2} \tilde{j}(x),
\\
n(2R)=0 .
  \end{cases}
\end{align*}
As before (see \eqref{differenzstreu}), we can conclude $n(x) \leq 0$ for all $|x| \leq 2R$, which implies
$ j(x) \geq \tilde{j}(x)$, for $|x| \leq 2R$.
Therefore,
\begin{align*}
j(Nx) \geq
\begin{cases}
\left(1- \frac{\tilde{a}}{N|x|}\right)
\frac{
1- \frac{a}{2NR}}
{1- \frac{\tilde{a}}{2NR}}
\text{ for } N|x| \leq R ,
\\
1- \frac{a}{N|x|}
\text{ else}.
\end{cases}
\end{align*}
This implies, using part (d),
\begin{align}
\nonumber
g_{\beta_1,1}(x)
\leq &
1-
\begin{cases}
\left(1- \frac{\tilde{a}}{N|x|}\right)
\frac{
1- \frac{a}{2NR}}
{(1- \frac{\tilde{a}}{2NR})
(1- \frac{a}{NR_{\beta_1}})
}
\text{ for } N|x| \leq R ,
\\
\frac{
1- \frac{a}{N|x|}
}
{(1- \frac{a}{NR_{\beta_1}})}
\text{ else}.
\end{cases}
\\
\leq
\label{gbetaest}
&
\begin{cases}
\frac{\tilde{a}}{N|x|}
+
C N^{-1}
\text{ for } N|x| \leq R ,
\\
\frac{a}{N|x|}
+
 C N^{-1+ \beta_1}
\text{ else}.
\end{cases}
\end{align}

Since $g_{\beta_1,1}(x)=0$ for $\vert x\vert > R_{\beta}$, we conclude with $R_{\beta_1} \leq C N^{-\beta_1}$
that
\begin{align*}
\|g_{\beta_1,1}\|_1 
\leq &  
N^{-1 -2 \beta_1},
\end{align*}
as well as
\begin{align*}
\| g_{\beta_1,1}\|_{3/2}\leq
C N^{-1- \beta_1}, \qquad
\| g_{\beta_1,1}\|\leq
C N^{-1- \beta_1/2}.
\end{align*}
Furthermore,
$ \| g_{\beta_1,1}\|_\infty = \|1- f_{\beta_1,1}\|_\infty \leq 1$, since $f_{\beta_1, 1}$ is a nonnegative, monotone nondecreasing function with $f_{\beta_1, 1}(x) \leq 1$.

\item 
Using (f) and (g), we obtain with $\| W_{\beta_1}\|_1 \leq CN^{-1}$
\begin{align*}
&| N \| W_{\beta_1}\|_1 - 8 \pi a | 
\leq
| N \|W_{\beta_1}f_{\beta_1, 1}\|_1 - 8 \pi a| 
+
 N \| W_{\beta_1}g_{\beta_1,1}\|_1 
 \\
& \leq
 C
\left( 
 N^{-1+\beta_1} +
   \|\mathds{1}_{|\cdot| \geq N^{-\beta_1}} g_{\beta_1,1}\|_\infty
  \right)
  \;.
\end{align*}
Since $g_{\beta_1,1}(x)$ is a nonnegative, monotone nonincreasing function, it follows with $K_{\beta_1} \leq 1$
\begin{align*}
 &  \|\mathds{1}_{|\cdot| \geq N^{-\beta_1}} g_{\beta_1,1}\|_\infty
   =
   g_{\beta_1,1}(N^{-\beta_1})=1-f_{\beta_1, 1}(N^{-\beta_1})
   =
   1- \frac{j(NN^{-\beta_1})}{K_{\beta_1}}
   \leq
a N^{-1+\beta_1}
  \;.
\end{align*}
and (h) follows.
\item
Using the pointwise estimate \eqref{gbetaest}, we obtain
for any $\Omega \in H^1(\mathbb{R}^{3N}, \mathbb{C})$
\begin{align*}
\|g_{\beta_1,1}(x_1-x_2) \Omega \| \leq 
C( N^{-1+\beta_1} 
\|\mathds{1}_{B_{CN^{-\beta_1}}(0)}(x_1-x_2)\Omega\|+ 
N^{-1} \| 
|x_1-x_2|^{-1}
 \Omega\|).
\end{align*}

Since $\| |x_1-x_2|^{-1}
 \Omega\|\leq 2\|\nabla_1 \Omega\|$
as well as
$\|\mathds{1}_{B_{CN^{-\beta_1}}(0)}(x_1-x_2)\Omega\|
 \leq C N^{-3\beta_1/2}\|\nabla_1 \Omega\|$
  holds, we obtain part (i).
\begin{remark}
\label{iremark}
Part (i) is not valid in two dimensions.
 However, this specific inequality is only used in the three dimensional case to control \eqref{qq2Neu}. It can be verified (see euqations (92)-(97) in \cite{jeblick}) that it is not
necessary to control \eqref{qq2Neu} in two dimensions.
\end{remark}

\end{enumerate}
\end{proof}


\subsection{Nonnegativity of the Hamiltonian $H_U$}
\label{HposSection}
Next, we prove two important operator inequalities related to the Hamiltonian $H$, see
Corollary \ref{Hcor}. These inequalities will be used in order to show the inequalities
\eqref{condition2}, \eqref{condition3} and \eqref{condition4}.
\begin{lemma}
\label{HposLemma}
Let $U \in L_c^\infty( \mathbb{R}^3, \mathbb{R})$ fulfill assumption \ref{Vassumption} and
define
\begin{align*}
H_U=
-\sum_{k=1}^N
\Delta_k
+
\sum_{i<j=1}^N
U(x_i-x_j).
\end{align*}
Then 
\begin{align*}
H_U \geq 0 .
\end{align*}

\end{lemma}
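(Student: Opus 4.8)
The plan is to reduce the nonnegativity of the full $N$-body operator $H_U$ to a two-body statement by a suitable localization of the interaction into ``two-particle'' and ``many-particle'' contributions, exactly in the spirit of Yin's argument in \cite{yin}. First I would record the elementary two-body fact that follows from assumption \ref{Vassumption}(c): the single-pair operator $-\Delta + \tfrac12 U$ is nonnegative on $L^2(\mathbb{R}^3)$, which is precisely the statement $-\Delta+\tfrac12 V\geq 0$ already noted in the excerpt, applied with $U$ in place of $V$ (both satisfy assumption \ref{Vassumption}). Equivalently, for the rescaled potential, $-\Delta_x + \tfrac12 U_1(x)\geq 0$, since the scaling $U\mapsto U_1$ preserves the sign of the scattering length and hence the sign condition. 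This will be the local input; the global input will be the ``three-body positivity'' operator inequality of \cite{yin}, namely that $H$ restricted to configurations where at least three particles are mutually close is nonnegative.

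The key steps, in order, are: (i) split each pair potential as $U(x_i-x_j) = U^+(x_i-x_j) - (1+\epsilon)U^-(x_i-x_j) + \epsilon U^-(x_i-x_j)$, using assumption \ref{Vassumption} and the disjoint supports of $U^\pm$; (ii) distribute the kinetic energy $-\sum_k\Delta_k$ among the pairs — for each unordered pair $\{i,j\}$ one wants to allocate a small, $N$-independent fraction $c\,(-\Delta_i -\Delta_j)$ so that $c(-\Delta_i-\Delta_j) + U^+(x_i-x_j) - (1+\epsilon)U^-(x_i-x_j)\geq 0$ as a two-body operator, this being exactly \eqref{Hpos} restricted to a single pair and hence following from assumption \ref{Vassumption}(c) after rescaling; (iii) handle the leftover $\epsilon\sum_{i<j}U^-(x_i-x_j)$ and the fact that the naive per-pair allocation of kinetic energy overspends the budget when many pairs overlap — this is where one needs the geometric/combinatorial counting built into Definition \ref{cubedef} together with the condition $\lambda^+ > 8 n_2\lambda^-$ and, crucially, the three-body positivity from \cite{yin}: on the region where no three particles cluster, each particle is close to at most a bounded number of others, so the kinetic energy suffices; on the complementary region one invokes the \cite{yin} inequality directly; (iv) patch the two regions together with a smooth (IMS-type) partition of unity, paying the usual localization error, which is controlled because the cutoff lengths are $N$-independent relative to the rescaled problem.

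I expect the main obstacle to be step (iii): making the bookkeeping of the kinetic-energy budget rigorous when an unbounded number of pairs can be simultaneously ``active.'' Allocating a fixed fraction $c$ of $-\Delta_i$ to every pair containing particle $i$ fails once particle $i$ has more than $1/c$ close neighbors, so one genuinely needs the repulsive-core hypothesis \ref{Vassumption}(b) (via the counting constants $n_1,n_2$ and the inequality $\lambda^+>8n_2\lambda^-$) to rule out that scenario energetically, and one needs the nontrivial operator inequality of \cite{yin} to cover the configurations that survive. Since the excerpt explicitly says this operator inequality is adapted from \cite{yin}, the honest plan is: state the two-body sign condition as the easy ingredient, cite \cite{yin} for the three-or-more-body positivity, and then assemble $H_U\geq 0$ by the localization-plus-counting argument sketched above, with the scaling covariance of assumption \ref{Vassumption} ensuring the constants are uniform in $N$. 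Everything else — the IMS localization error, the rescaling $U\leftrightarrow U_1$, the disjointness of $\mathrm{supp}\,U^\pm$ — is routine.
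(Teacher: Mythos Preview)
Your plan identifies the two correct ingredients --- the two-body scattering positivity $-\Delta+\tfrac12 U\ge 0$ and Yin's combinatorial argument for clustered configurations --- but the way you propose to glue them has a real gap.

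The problem is step (iv). An IMS partition separating ``no three particles close'' from ``some three close'' lives on $\mathbb{R}^{3N}$ and its gradient has contributions from $O(N^2)$ pair variables, each of size $O(R^{-1})$; the localization error is then $O(N^2)$. Since you need the exact operator inequality $H_U\ge 0$ (not $H_U\ge -CN^2$), there is nothing to absorb this error into. Your remark that ``the cutoff lengths are $N$-independent'' does not help: the cutoff \emph{scale} is harmless, but the \emph{number} of terms is not.

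The paper avoids this entirely by a structural observation you are missing. For each $k$ one uses the sharp cutoff
\[
\mathds{1}_{\overline{\mathcal{C}}_k}=\mathds{1}\bigl\{\exists\,i,j\neq k:\ |x_i-x_j|\le \tilde R\bigr\},\qquad \tilde R\ge 2R,
\]
which depends only on the coordinates $x_i$ with $i\neq k$ and therefore \emph{commutes exactly} with $-\Delta_k$. One then writes $H_U=H_{\mathcal C}+H_{\overline{\mathcal C}}$ with $H_{\mathcal C}=\sum_k(-\Delta_k)\mathds{1}_{\mathcal C_k}+\sum_{i\neq j}\mathds{1}_{\mathcal C_j}\tfrac12 U(x_i-x_j)$ and the complementary piece --- no localization error at all. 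On $\mathcal C_1$ (all particles $\neq 1$ are pairwise $\tilde R$-separated), the triangle inequality forces particle $1$ to lie in the support of at most \emph{one} $U(x_1-x_j)$; so $H_{\mathcal C}\ge 0$ reduces cleanly to the single-pair inequality $-\Delta+\tfrac12 U\ge 0$, without any per-pair budgeting of kinetic energy. Your step (ii) and its acknowledged failure simply do not arise. The genuinely hard piece $H_{\overline{\mathcal C}}\ge 0$ is then handled by Yin's $(\pi_1,\pi_2)$-partition averaging and the cube counting with $n_1,n_2$ and $\lambda^+>8n_2\lambda^-$, which the paper reproves in this setting rather than citing as a black box.

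Two smaller remarks: your step (i), the $U^+-(1+\epsilon)U^-+\epsilon U^-$ split, is not used here --- it belongs to the corollary that extracts a kinetic-energy margin, not to the proof of $H_U\ge 0$ itself. And on the isolated-particle region the bound is ``at most one neighbor,'' not merely ``a bounded number,'' which is why the two-body input suffices without any further counting.
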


In order to prove this Lemma, we first define
\begin{definition}
For $\tilde{R}\geq 2R$, where $R$ is defined as in assumption \ref{Vassumption}, let
for any $j,k=1, \dots, N$ with $j \neq k$
\be b_{j,k}:=\{(x_1,x_2,\ldots,x_N)\in
\mathbb{R}^{3N}: |x_j-x_k|\leq \tilde{R}\}\ee
$$
\overline{\mathcal{C}}_{l}:=\bigcup_{j,k\neq l}b_{j,k},\;\;\;\;\;\;\;\mathcal{C}_{l}:=\mathbb{R}^{3N}\backslash
\overline{\mathcal{C}}_{l}\;.
$$
\end{definition}

\begin{proof}
Let
\begin{align*}
H_{ \overline{\mathcal{C}}}=&
\sum_{k=1}^N
-\Delta _k  \mathds{1}_{\overline{\mathcal{C}}_k}
+
\sum_{i\neq j}  \mathds{1}_{\overline{\mathcal{C}}_j}
\frac{1}{2} U (x_i-x_j),
\\
H_{ \mathcal{C}}=&
\sum_{k=1}^N
-\Delta _k  \mathds{1}_{\mathcal{C}_k}
+
\sum_{i\neq j}  \mathds{1}_{\mathcal{C}_j}
\frac{1}{2} U (x_i-x_j)
.
\end{align*}

Note that
\begin{align*}
H_{ \mathcal{C}}
=
\sum_{k=1}^N
-\Delta _k  \mathds{1}_{\mathcal{C}_k}
+
\frac{1}{4}
\sum_{i\neq j} 
(
 \mathds{1}_{\mathcal{C}_j}
 +
  \mathds{1}_{\mathcal{C}_i}
)
\frac{1}{2} U (x_i-x_j)
\end{align*}
is a symmetric operator w.r.t. to exchange of coordinates $x_1, \dots, x_N$. Therefore, it suffices to
prove $
\laa \Psi, H_{ \mathcal{C}}
\Psi \raa \geq 0
$ for $\Psi \in L^2_s (\mathbb{R}^{3N}, \mathbb{C})$, since
\begin{align*}
\inf_{\Psi \in L^2 (\mathbb{R}^{3N}, \mathbb{C}), \|\Psi\|=1}
\laa \Psi, H_{ \mathcal{C}}
\Psi \raa
=
\inf_{\Psi \in L^2_s (\mathbb{R}^{3N}, \mathbb{C}), \|\Psi\|=1}
\laa \Psi, H_{ \mathcal{C}}
\Psi \raa .
\end{align*}
In order to prove $H_{ \mathcal{C}} \geq 0$, we show
$
K_1=
-\Delta _1  \mathds{1}_{\mathcal{C}_1}
+
\frac{1}{2}
\sum_{j=2}^N  \mathds{1}_{\mathcal{C}_1}
\frac{1}{2} U (x_1-x_j) \geq 0$ on $L^2_s(\mathbb{R}^{3N}, \mathbb{C})$.
Since
\begin{align*}
\inf_{\Psi \in L^2_s (\mathbb{R}^3, \mathbb{C}), \|\Psi\|=1}
\laa \Psi, H_{ \mathcal{C}}
\Psi \raa 
=&
\inf_{\Psi \in L^2_s (\mathbb{R}^3, \mathbb{C}), \|\Psi\|=1}
\sum_{i=1}^N
\laa \Psi,K_i
\Psi \raa 
\\
=&
N
\inf_{\Psi \in L^2_s (\mathbb{R}^3, \mathbb{C}), \|\Psi\|=1}
\laa \Psi,K_1
\Psi \raa 
\end{align*}
holds, it then follows $ H_{\mathcal{C}} \geq 0$.

The next Lemmata prove that
$
K_1\geq 0 \text{ and } H_{ \overline{\mathcal{C}}} \geq 0.
$
Since $H_U= \sum_{i=1}^N K_i+ H_{ \overline{\mathcal{C}}}$, it then follows $H_U \geq 0$.
\end{proof}

\begin{remark}
The reason to split the Hamiltonian as done above is the following:
The interaction $\mathds{1}_{\overline{\mathcal{C}}_j}
\frac{1}{2} U (x_i-x_j) $ is only nonzero, if, for fixed configurations $(x_1, \dots, x_N)$,  $x_i$ is closer than $R$ to 
$x_j$, but no other particles are closer than $R$ to neither $x_i$ nor $x_j$. Therefore, the set $\overline{\mathcal{C}}$ excludes those configurations, where three-particle interactions occur. The strategy to separate the configurations of possible three-particle interactions is well known within the literature, see e.g. \cite{lssy, yin} and references therein.
\end{remark}
Let us  restate an important Lemma.
\begin{lemma}\label{positiv V-w} 
$ $\\
\begin{itemize}
\item[(a)]
Let $R_{\beta_1}$ and $W_{\beta_1}$ be defined as in Lemma \ref{microscopic}.
Let $V$ fulfill assumption \ref{Vassumption}.
Then, for  any $\Psi\in H^1(\mathbb{R}^{3N}, \mathbb{C})$
$$\|\mathds{1}_{|x_1-x_2|\leq  R_{\beta_1}}\nabla_1\Psi\|^2+\frac{1}{2}\laa\Psi,
(V_{1}-W_{\beta_1})(x_1-x_2)\Psi\raa\geq0\;.$$
\item[(b)] 
Let $W_{\beta_1}$ be defined as in Lemma \ref{microscopic}. 
Let $V$ fulfill assumption \ref{Vassumption} and
let  $\Psi\in
L_s^2(\mathbb{R}^{3N}, \mathbb{C}) \cap H^1(\mathbb{R}^{3N}, \mathbb{C})$.
Then, for sufficiently large $N$
\begin{align*}
\|\mathds{1}_{\mathcal{B}_{1}}\mathds{1}_{\overline{\mathcal{A}}_{1}}\nabla_1\Psi \|^2
+
\frac{1}{2}
\laa\Psi ,\sum_{j\neq
1}\mathds{1}_{\mathcal{B}_{1}}\left(V_1-W_{\beta_1}\right)(x_1-x_j)\Psi \raa
\geq 0
\;.
\end{align*}
\end{itemize}

\end{lemma}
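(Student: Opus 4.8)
The plan is to prove (a) by the Jastrow/ground-state substitution built from the scattering solution $f_{\bet,1}$, and then to deduce (b) from (a) by fibering over the coordinates $x_2,\dots,x_N$ and exploiting a disjointness property that holds on $\mathcal{B}_1$. For (a), recall from Lemma \ref{defAlemma} that $f_{\bet,1}$ solves $-\Delta f_{\bet,1}+\tfrac12(V_1-W_{\bet})f_{\bet,1}=0$, is strictly positive and continuous (hence bounded below by a positive constant on $\overline{B_{R_{\bet}}(0)}$), is monotone nondecreasing, and equals $1$ for $|x|\geq R_{\bet}$. The decisive point is Lemma \ref{defAlemma}(a): $R_{\bet}$ is the value for which $\int(V_1-W_{\bet})f_{\bet,1}\,d^3x=0$, which by Gauss's theorem means $f'_{\bet,1}(R_{\bet})=0$. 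Hence the extension of $f_{\bet,1}$ by the constant $1$ is a $C^1$ radial function on $\mathbb{R}^3$ solving the zero-energy equation weakly, with no distributional term across $\partial B_{R_{\bet}}$.

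Freezing $x_2,\dots,x_N$ and viewing $\Psi$ as an $H^1$-function of $x_1$, set $g:=\Psi/f_{\bet,1}(x_1-x_2)\in H^1_{\mathrm{loc}}$. Integrating by parts (licit after approximating $\Psi$ by smooth functions and treating $\mathrm{Re}\,\Psi$ and $\mathrm{Im}\,\Psi$ separately, all defects being supported in $|x_1-x_2|\leq R_{\bet}$) gives the ground-state representation
\[
\int\!\Big(|\nabla_{x_1}\Psi|^2+\tfrac12(V_1-W_{\bet})(x_1-x_2)|\Psi|^2\Big)\,d^3x_1=\int f_{\bet,1}(x_1-x_2)^2\,|\nabla_{x_1}g|^2\,d^3x_1\geq0 .
\]
Since $f_{\bet,1}\equiv1$ and $V_1-W_{\bet}\equiv0$ for $|x_1-x_2|>R_{\bet}$, the two integrands agree there; subtracting that common part leaves the same identity with all integrals restricted to $\{|x_1-x_2|\leq R_{\bet}\}$. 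Integrating over $x_2,\dots,x_N$ yields (a).

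For (b), multiply the claimed inequality by $\mathds{1}_{\mathcal{B}_1}$, which depends on $x_2,\dots,x_N$ only, so it suffices to establish the fiberwise inequality for each configuration in $\mathcal{B}_1$. On $\mathcal{B}_1$ the points $x_2,\dots,x_N$ are pairwise more than $N^{-26/27}$ apart, and since $R_{\bet}\leq CN^{-\bet}$ with $\bet$ close to $1$ one has $R_{\bet}<\tfrac12N^{-26/27}$ for $N$ large; hence the balls $B_{R_{\bet}}(x_j)$, $j\neq1$, are pairwise disjoint in the $x_1$-variable, each lies in the $x_1$-slice of $\overline{\mathcal{A}}_1$ (if $|x_1-x_j|<R_{\bet}$ then $(x_1,\dots,x_N)\in a_{1,j}$), and $\sum_{j\neq1}(V_1-W_{\bet})(x_1-x_j)$ is supported on $\bigcup_jB_{R_{\bet}}(x_j)$ with only its $j$-th summand nonzero on $B_{R_{\bet}}(x_j)$. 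Bounding $\int_{\overline{\mathcal{A}}_1}|\nabla_{x_1}\Psi|^2\,d^3x_1\geq\sum_{j\neq1}\int_{B_{R_{\bet}}(x_j)}|\nabla_{x_1}\Psi|^2\,d^3x_1$ by disjointness and splitting the potential term along the balls, the fiberwise left-hand side is bounded below by $\sum_{j\neq1}$ of the $x_j$-translate of the quantity in (a), which is nonnegative termwise; integrating over $x_2,\dots,x_N$ gives (b). (Symmetry of $\Psi$ is not actually used here; the lemma is stated in this form because that is how it will be applied.)

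The main obstacle is the boundary term in (a): the substitution is routine, but producing a genuinely nonnegative right-hand side with the kinetic energy restricted to $\{|x_1-x_2|\leq R_{\bet}\}$ rests entirely on $f'_{\bet,1}(R_{\bet})=0$, i.e. on the precise choice of $R_{\bet}$ in Lemma \ref{defAlemma}(a). Once this is in place, (b) is only a length-scale comparison between $R_{\bet}\lesssim N^{-\bet}$ and the cutoff $N^{-26/27}$ defining $\mathcal{A}_1$ and $\mathcal{B}_1$.
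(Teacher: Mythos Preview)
Your proposal is correct and follows essentially the same route as the paper. The paper does not spell out the argument here but refers to \cite{picklgp3d} (Lemma~5.1(3)) and \cite{jeblick} (Lemma~7.10), stressing that those proofs use only the nonnegativity of $f_{\bet,1}$, not of $V$; this is precisely the ground-state substitution $\Psi=f_{\bet,1}\,g$ you carry out for (a), with the vanishing boundary term guaranteed by $f'_{\bet,1}(R_{\bet})=0$, and the fiberwise disjointness reduction on $\mathcal{B}_1$ you use for (b). Your remark that the argument needs $R_{\bet}\ll N^{-26/27}$ (hence $\bet>26/27$ for large $N$) is accurate and is implicitly assumed in the paper's ``for sufficiently large $N$'' together with the later choice of a ``suitable'' $\bet$ in Lemma~\ref{energylemma}.
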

For nonnegative $V$,
the proof has been given in \cite{picklgp3d} for the three-dimensional case (see Lemma 5.1. (3)) and in \cite{jeblick} for the two-dimensional analog (see Lemma 7.10). The proof given in these works
is not using the nonnegativity of $V$ directly, but
is based on the fact that $f_{\beta_1,1}$ is a nonnegative function. Therefore, the proof is also applicable in our setting, using Lemma \ref{defAlemma}.

\begin{lemma}
\label{H_cPos}
Let $K_1$ and $H_{ \overline{\mathcal{C}}}$ be defined as above. Under the assumptions of
Lemma \ref{HposLemma}, we have
\begin{enumerate}
\item
\begin{align*}
K_1 \geq 0 \text{ on } L^2_s (\mathbb{R}^{3N}, \mathbb{C}).
\end{align*}
\item
\begin{align*}
H_{ \overline{\mathcal{C}}} \geq 0 \text{ on } L^2 (\mathbb{R}^{3N}, \mathbb{C}).
\end{align*}
\end{enumerate}
\end{lemma}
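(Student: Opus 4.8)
\textbf{Proof plan for Lemma \ref{H_cPos}.}

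For part (1), the plan is to reduce $K_1 \geq 0$ to a one-body problem in the relative coordinate, exploiting the structure of $\mathcal{C}_1$. Recall that on $\mathcal{C}_1$ no two particles among $\{x_2,\dots,x_N\}$ are within distance $\tilde R$ of each other and, since $\tilde R \geq 2R$ and $R \geq R_{\beta_1}$ for $N$ large, the balls of radius $R$ around the $x_j$ are pairwise disjoint on the support of the interaction terms. Thus for fixed $(x_2,\dots,x_N)$ with the relevant configuration, $x_1$ can lie in the interaction range of at most one $x_j$ at a time, so the potential term in $K_1$ is a sum of terms with disjoint supports in $x_1$. I would then write $\|\nabla_1(\mathds{1}_{\mathcal{C}_1}\Psi)\|$-type quantities and insert, near each $x_j$, the factor $W_{\beta_1}(x_1-x_j)$ that makes $\frac12(V_1 - W_{\beta_1})$ have scattering length zero: by Lemma \ref{positiv V-w}(b) (or a direct application of (a) together with the disjoint-support reduction), the kinetic energy restricted to $\mathcal{B}_1 \cap \overline{\mathcal{A}}_1$ already dominates $\frac12 \sum_{j\neq 1}\mathds{1}_{\mathcal{B}_1}(V_1-W_{\beta_1})(x_1-x_j)$. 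It then remains to absorb the leftover $\frac12 \sum_j \mathds{1}_{\mathcal{C}_1} W_{\beta_1}(x_1-x_j)/2$ — which is a small, bounded, nonnegative correction — into a remaining sliver of positive kinetic energy; here the smallness of $\|W_{\beta_1}\|_1 \leq CN^{-1}$ (Lemma \ref{defAlemma}(h)) together with Hardy/Sobolev on the relative coordinate should suffice, again using that the supports are disjoint so there is no multiplicity loss.

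For part (2), $H_{\overline{\mathcal{C}}} \geq 0$, the plan is to localize onto configurations where at least three particles are mutually close — this is exactly the operator inequality from Jun Yin's work \cite{yin} that the introduction advertises (the statement that \eqref{hamiltonian} without $A_t$, restricted to configurations with three particles close together, is nonnegative). Concretely, on $\overline{\mathcal{C}}_k$ there exist $i \neq j$ (both $\neq k$) with $|x_i - x_j| \leq \tilde R$; combined with the interaction term $\mathds{1}_{\overline{\mathcal{C}}_k}\frac12 U(x_i - x_k)$ being nonzero only when $|x_i - x_k| \leq R$, one gets a genuine three-body cluster $\{x_i, x_j, x_k\}$ confined to a region of diameter $O(\tilde R)$. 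I would partition $\overline{\mathcal{C}}$ into such clusters, and on each cluster invoke the repulsive-core assumption \ref{Vassumption}(b) together with the smallness condition $\lambda^+ > 8 n_2 \lambda^-$ and the cube-counting bounds $n_1 = n(r_1,R)$, $n_2 = n(r_1,3R)$: these are precisely the ingredients Yin uses to show that the positive core energy of the $O(1)$ or more particle pairs inside a small ball beats the total negative energy $\lambda^-$ collected over all pairs in that ball. The diagonal kinetic term $-\Delta_k \mathds{1}_{\overline{\mathcal{C}}_k}$ (suitably symmetrized, or just discarded since it is nonnegative after the $\mathds{1}_{\mathcal{C}_k}+\mathds{1}_{\mathcal{C}_i}$ symmetrization used in Lemma \ref{HposLemma}) provides any additional positivity needed.

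The main obstacle, clearly, is part (2): unlike part (1), it does not reduce to a one-body or two-body scattering computation, but requires the full strength of the three-particle operator inequality of \cite{yin}, whose proof is a nontrivial combinatorial/geometric argument balancing the repulsive core against the attractive tail over a covering of $B_R(x)$ by cubes of side $r_1/\sqrt3$. I expect the actual text here to either quote that inequality essentially verbatim from \cite{yin} (adapting notation) or to reproduce the covering argument, checking that the side-length choice $b_1 = r_1$ and the radii $R, 3R$ entering $n_1, n_2$ are exactly what the bound $\lambda^+ > 8 n_2 \lambda^-$ is calibrated to. Part (1), by contrast, is a routine adaptation of the scattering-state positivity already isolated in Lemma \ref{positiv V-w}, and I would expect only a short argument there — mostly bookkeeping to confirm that on $\mathcal{C}_1$ the interaction supports in $x_1$ are disjoint so that the single-pair estimate applies term by term.
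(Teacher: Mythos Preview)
Your plan for part (1) is essentially the paper's: reduce to a two-body scattering estimate via the disjoint-support structure on $\mathcal{C}_1$, then invoke the argument behind Lemma \ref{positiv V-w}(b). (Your detour through $W_{\beta_1}$ is unnecessary---the leftover $W_{\beta_1}$ term is nonnegative, so there is nothing to absorb---but harmless.)

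For part (2), you correctly anticipate that the paper reproduces Yin's argument, but your own sketch of that argument misses its mechanism. Your plan is to partition into three-body clusters and argue that the repulsive core beats the attractive tail by cube-counting, treating the kinetic term as spare positivity that could even be discarded. This fails: on $\overline{\mathcal{C}}_k$ you are only guaranteed that \emph{some} pair $x_i,x_j$ (with $i,j\neq k$) is within $\tilde R$; nothing forces any pair into the repulsive-core radius $r_1$. A configuration with three particles pairwise at distances in $(r_2,R)$ sits entirely in the attractive region and contributes strictly negative potential energy---no counting of $U^+$ interactions helps there. The kinetic energy is not optional; it is what saves you in that case, via condition \eqref{scattposcondition}.

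What makes Yin's argument work, and what the paper reproduces in detail, is a two-species partition trick your sketch does not mention: average over all bipartitions $P=(\pi_1,\pi_2)$ of $\{1,\dots,N\}$, assigning \emph{all} kinetic energy to $\pi_1$ and redefining the pair potentials as $U_{1,1}=U_{2,2}=U^+$, $U_{1,2}=2U^+-4U^-$, so that the average over $P$ returns $H_{\overline{\mathcal{C}}}$. For each fixed $P$ one can then freeze the $\pi_2$ positions (no kinetic energy) and split pointwise: where \emph{many} $\pi_2$-particles cluster (at least $2n_1$ in a ball of radius $R$), a pure potential-counting argument using $\lambda^+>8n_2\lambda^-$ gives $\mathcal{H}_1\geq 0$; where \emph{few} do, the $\pi_1$-kinetic energy together with condition \eqref{scattposcondition} gives $\mathcal{H}_{2,j}\geq 0$. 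The partition averaging is precisely what makes the pointwise-in-$\pi_2$ analysis possible, and it is the step your proposal lacks.
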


\begin{proof}
\begin{enumerate}
\item 
The proof of Lemma \ref{positiv V-w}, part (b) can be straightforwardly applied to
prove part (a),
see Lemma 5.1. (3) in \cite{picklgp3d} and Lemma 7.10 in \cite{jeblick}.
Note for the proof to be valid, it is important that $\mathds{1}_{\mathcal{C}_k}(x_1, \dots, x_N)$ excludes those configurations where the distance of two distinct particles $x_i$ and $x_j$, $i,j \neq k$ to $x_k$ is smaller than $R$, which is the radius of the support of $U$.
We refer the reader to \cite{jeblick, picklgp3d} for the details of the proof.
\item 
\begin{remark}
The proof of part (b) originates from Lemma 10. in \cite{yin}.
The author, however, does not introduce the set $\mathcal{C}_k$, but uses a slightly different technique to exclude three particle interactions.
For conceptual clarity, we adapt the proof of Lemma 10. in  \cite{yin} to our definition of $H_{ \overline{\mathcal{C}}}$. 
 Since the proof given by Jun Yin is very elegant in our opinion, parts of the following are taken verbatim from \cite{yin}.  
\end{remark}
Recall that
\[
H_{ \overline{\mathcal{C}}}=
\sum_{k=1}^N
-\Delta _k  \mathds{1}_{\overline{\mathcal{C}}_k}
+
\sum_{i\neq j}  \mathds{1}_{\overline{\mathcal{C}}_j}
\frac{1}{2} U (x_i-x_j).
\]
Assume first that $N$ is even, i.e., $N=2N_1$ with $N_1 \in \mathbb{N}$. 
Let
$P = (\pi_1,\pi_2)$ be a partition of ${1, ...,N}$ into two disjoint sets with $N_1$ integers in $\pi_1$ and $\pi_2$, respectively. 
Let 
 \beq\label{defv1122}
U_{1,1}=U_{2,2}=
U^+\geq 0,\,\,\,\,
U_{1,2}=2 U_1^+-4 U^-,
\eeq 
with $U_{1,2}^- = -4 U^-$, $U_{1,2}^+ = 2 U^+$.
It then follows 
\[
\frac14\big(U_{1,1,}+U_{2,1}+U_{2,2}\big)
=U
.\]
For each $P$, we define (for shorter notation, we will implicitly assume $i \neq j$ in the following)
\begin{eqnarray}
\nonumber
H_P= H_{(\pi_1,\pi_2)}\equiv&&
\sum_{j\in\pi_1} -2 \Delta_j \mathds{1}_{\overline{\mathcal{C}}_j}
+
\sum_{i,j\in\pi_1}
\mathds{1}_{\overline{\mathcal{C}}_j}
\frac{1}{2} U_{1,1}(x_i-x_j)
\\\nonumber
+&&
\sum_{i\in\pi_2, j\in\pi_1}
\mathds{1}_{\overline{\mathcal{C}}_j}
\frac{1}{2} U_{1,2}(x_i-x_j)+ 
\sum_{i,j\in\pi_2}\mathds{1}_{\overline{\mathcal{C}}_j}
\frac{1}{2} U_{2,2}(x_i-x_j).
\end{eqnarray}
Consequently,
  $U_{\alpha,\beta}$ denotes the interaction potential between particles in $\pi_\alpha$ and $\pi_\beta$.
Note that 
\begin{align*}
-&\sum_{P}\sum_{j\in\pi_1} \Delta_j\mathds{1}_{\overline{\mathcal{C}}_j}
=
-\sum_{j=1}^N \Delta_j \mathds{1}_{\overline{\mathcal{C}}_j}
\frac{1}{2}\sum_{P} ,
\\
&
\sum_{P}
\sum_{i,j\in\pi_1}
\mathds{1}_{\overline{\mathcal{C}}_j}
U_{1,1}(x_i-x_j)
=
\sum_{P}
\sum_{i,j\in\pi_2}
\mathds{1}_{\overline{\mathcal{C}}_j}
U_{2,2}(x_i-x_j) 
\\
=&
\sum_{i\neq j=1}^N
\mathds{1}_{\overline{\mathcal{C}}_j}
 U^+(x_i-x_j)
\frac{1}{4}\sum_{P} ,
\\
&
\sum_{P}
\sum_{i\in\pi_1, j \in \pi_2}
\mathds{1}_{\overline{\mathcal{C}}_j}
 U_{1,2}(x_i-x_j)
\\
=&
\sum_{i\neq j=1}^N
\mathds{1}_{\overline{\mathcal{C}}_j}
(
2 U^+(x_i-x_j)-4 U^-(x_i-x_j)
)
\frac{1}{4}\sum_{P} .
\end{align*}
Therefore,
\beq
H_{ \overline{\mathcal{C}}}=\sum_PH_P/\sum_P 1.
\eeq
Hence, for $N$ even, to obtain $H_{ \overline{\mathcal{C}}}\geq 0$, it is sufficient to prove that for  $\forall P$, $H_P\geq 0$. 
\medskip \\
If $N$ is odd, we divide $P=(\pi_2, \pi_2)$, with $N_1=(N-1)/2$ integers in $\pi_1$ and
$(N+1)/2$ integers in $\pi_2$. 
\\
Let $A_j$ be a one-particle operator and define, 
for any partition $P=(\pi_1, \pi_2)$, 
$\delta_{j \in \pi_1}$ such that $\delta_{j \in \pi_1}=1$ if $j \in \pi_1$, otherwise $0$.
Then
$
\sum_P \sum_{j \in \pi_1} A_j
=
\sum_{j=1}^N A_j \sum_P \delta_{j \in \pi_1} 
$.
Note that
\begin{align*}
\sum_P \delta_{j \in \pi_1} 
=
\frac{\sum_P \delta_{j \in \pi_1} }{\sum_P  } \sum_P 
=
\frac{
\binom{N-1}{\frac{N-3}{2}}
}
{
\binom{N}{\frac{N-1}{2}}
}
\sum_P 
=
\frac{1-\frac{1}{N}}{2}
\sum_P.
\end{align*}
Furthermore, for any two-particle operator $A_{i,j}$, we obtain, for $a,b \in \{1,2 \}$, 
\begin{align*}
\sum_P \sum_{i \in \pi_a, j \in \pi_b, i \neq j} A_{i,j}
=
\sum_{i \neq j =1}^N A_{i,j} \sum_P \delta_{i \in \pi_a} \delta_{j \in \pi_b} .
\end{align*}
Let $i \neq j$. With
\begin{align*}
&
\frac{1}{\sum_p}
\sum_P \delta_{i\in \pi_1} \delta_{j \in \pi_1} 
=
\frac{
\binom{N-2}{\frac{N-5}{2}}
}
{
\binom{N}{\frac{N-1}{2}}
}
=
\frac{1}{4} \left(1- \frac{3}{N} \right),
\;
&
\frac{1}{\sum_p}
\sum_P \delta_{i \in \pi_1} \delta_{j \in \pi_2}
=
\frac{
\binom{N-2}{\frac{N-3}{2}} 
}
{
\binom{N}{\frac{N-1}{2}}
}
=
\frac{1}{4}
\left(
1+ \frac{1}{N}
\right),
\\
&
\frac{1}{\sum_p}
\sum_P \delta_{i \in \pi_2} \delta_{j \in \pi_1}
=
\frac{
\binom{N-2}{\frac{N-3}{2}} 
}
{
\binom{N}{\frac{N-1}{2}}
}
=
\frac{1}{4}
\left(
1+ \frac{1}{N}
\right),
\;
&
\frac{1}{\sum_p}
\sum_P \delta_{i \in \pi_2} \delta_{j \in \pi_2} 
=
\frac{
\binom{N-2}{\frac{N-1}{2}}
}
{
\binom{N}{\frac{N-1}{2}}
}
=
\frac{1}{4}
\left(
1+ \frac{1}{N}
\right),
\end{align*}
it follows that 
\begin{align*}
-&\sum_{P}\sum_{j\in\pi_1} \Delta_j\mathds{1}_{\overline{\mathcal{C}}_j}
=
-
\frac{1-\frac{1}{N}}{2}
\sum_{j=1}^N \Delta_j \mathds{1}_{\overline{\mathcal{C}}_j}
\sum_{P} ,
\\
&
\sum_{P}
\sum_{i,j\in\pi_1}
\mathds{1}_{\overline{\mathcal{C}}_j}
U_{1,1}(x_i-x_j)
=
\frac{1}{4}\left(1- \frac{3}{N} \right)
\sum_{i\neq j=1}^N
\mathds{1}_{\overline{\mathcal{C}}_j}
 U^+(x_i-x_j)
 \sum_{P} ,
 \\
&
\sum_{P}
\sum_{i,j\in\pi_2}
\mathds{1}_{\overline{\mathcal{C}}_j}
U_{2,2}(x_i-x_j)
=
\frac{1}{4}\left(1+ \frac{1}{N} \right)
\sum_{i\neq j=1}^N
\mathds{1}_{\overline{\mathcal{C}}_j}
 U^+(x_i-x_j)
 \sum_{P} ,
\\
&
\sum_{P}
\sum_{i\in\pi_1, j \in \pi_2}
\mathds{1}_{\overline{\mathcal{C}}_j}
 U_{1,2}(x_i-x_j)
=
\frac{1}{4}\left(1+ \frac{1}{N} \right)
\sum_{i\neq j=1}^N
\mathds{1}_{\overline{\mathcal{C}}_j}
 U_{1,2}(x_i-x_j)
\sum_{P} .
\end{align*}
\medskip \\
For $N$ odd and $N$ large enough, the bound of  $H_P\geq 0$, $\forall P$ 
then implies, together with the assumption \ref{Vassumption} on $U$, that
$H_{ \overline{\mathcal{C}}} \geq 0$. 
\medskip\\
We will now prove  $H_P\geq 0$, $\forall P$.
The advantage to consider $H_P$ instead of $H_{ \overline{\mathcal{C}}}$ is that we can analyze
$H_P\geq 0$ for fixed configurations of $x_{i}$'s with $i\in\pi_2$. This pointwise estimate is sufficient, since there is no kinetic energy 
of the $\pi_2$-particles.
Since permutation of the labels in $\pi_1$ and $\pi_2$ is irrelevant, we can further assume that $\pi_1=\{1,\cdots,N_1\}$, $\pi_2=\{N_1+1,\cdots,N\}$. 
\medskip \\
Following the idea of \cite{yin}, for any fixed configuration $(x_{N_1+1}, \dots, x_N)$, we consider two cases:
\begin{itemize}
\item
If there are more than $m_1$ $\pi_2$-particles in a sphere of radius $R$ with $m_1 \geq 2 n_1$ , the positive interaction $U_{2,2}$, together with $U_{1,1}$ cancels the negative part of $U_{1,2}$. 
Recall that
$n_1$ is the number of cubes of side length $r_1/\sqrt{3}$ which are needed to cover a sphere of radius $R$. Therefore, if  $m_2$ $\pi_2$-particles are located in such a sphere, it is possible to derive that at least $\mathcal{O}(m_2^2/n_1)$ $\pi_2$-particles are closer than $r_1$
to each other. Therefore, if $m_1$ $\pi_1$-particles and $m_2$ $\pi_2$-particles are close to each other, the potential energy is of order $\mathcal{O}(m_1^2)+
\mathcal{O}(m_2^2)-\mathcal{O}(m_1m_2)$. This energy is positive, if the negative part of $U$ is small enough.
\item
If there are less than $2 n_1$ $\pi_2$-particles in a sphere of radius $R$, it is possible to
use assumption \ref{Vassumption}, \eqref{minenergyscatt}, that is
\begin{align*}
-
  \mathds{1}_{|x|\leq R}  
\Delta_x+
n_1
(2 U^+(x)-4 U^-(x)) 
\geq 
0.
\end{align*}
\end{itemize}
As in Definition \ref{cubedef}, we  
divide $\mathbb{R}^ 3$ into cubes  $C_n$ ($n\in\mathbb{N}$) of side length
  $ \frac{1}{\sqrt{3}} r_1$, such that the distance between to points $x_i, x_j \in C_n$ is not greater than $r_1$. 
Therefore, for $x_i, x_j \in C_n$ we have by assumption $U(x_i-x_j  ) \geq \lambda^+$.
Next, for fixed $x_{i}$, $i\in\pi_2$, for any $x \in \mathbb{R}^ 3$,  we define $G(x)$ as the set of $i$'s which satisfy $i\in\pi_2$ and $|x_i-x|\leq R$, i.e., 
\begin{eqnarray}
\label{defGx}
G(x)\equiv\{i\in \pi_2: |x_i-x|\leq R\}.
\end{eqnarray}
We denote $|G(x)|$ as the number of the elements of $G(x)$. Note that for $i, j \in G(x)$, it follows that $|x_i-x_j| \leq 2R$.
\par
 We denote $d(x, C_n)$ as the distance between the cube $C_n\subset \mathbb{R}^3$ and $x\in \mathbb{R}^3$. Since $|G(y)|$ is uniformly bounded ($|G(y)|\leq N_1$), 
there must exist a point $X(C_n)\in\mathbb{R}^3$ satisfying $d(X(C_n), C_n)\leq 2R$ and 
\begin{eqnarray}
\label{defGb}
|G(X(C_n))|=\max\{|G(y)|: d(y, C_n)\leq 2R \}.
\end{eqnarray}
We define $G(C_n)\equiv G(X(C_n))$. 
Let $ \mathds{1}_{ C_n} (x_j)$ denote the projection onto $C_n$ in the coordinate $x_j$.
Furthermore,
let $\Theta$ denote the usual Heaviside step function.
We prove
\begin{align*}
\mathcal{H}_1=&
\sum_{i,j \in \pi_2}
\mathds{1}_{\overline{\mathcal{C}}_j}
 U_{2,2}(x_i-x_j)
 +
 \sum_{i,j \in \pi_1}
\mathds{1}_{\overline{\mathcal{C}}_j}
 U_{1,1}(x_i-x_j)
 \\
 -&
 \sum_{n \in \mathbb{N}}
\Theta(|G(C_n)|-2n_1)
 \sum_{j \in \pi_1, i \in \pi_2}
 \mathds{1}_{ C_n} (x_j)
\mathds{1}_{\overline{\mathcal{C}}_j}
 U_{1,2}^-(x_i-x_j) \geq 0
 \\
 \mathcal{H}_{2,j} =&
   -2  \Delta_j \mathds{1}_{\overline{\mathcal{C}}_j}
+
 \sum_{ i \in \pi_2}
\mathds{1}_{\overline{\mathcal{C}}_j}
\frac{1}{2}
 U^+_{1,2}(x_i-x_j)
 \\
 -&
  \sum_{n \in \mathbb{N}}
\Theta(2n_1-|G(C_n)|)
 \sum_{ i \in \pi_2}
 \mathds{1}_{ C_n} (x_j)
\mathds{1}_{\overline{\mathcal{C}}_j}
\frac{1}{2}
 U_{1,2}^-(x_i-x_j)
  \geq 0.
\end{align*}
Note that this implies $H_p \geq 0$, since  $H_p= \frac{1}{2}\mathcal{H}_1+ \sum_{j\in\pi_1} \mathcal{H}_{2,j} $.

\par 
Proof of $\mathcal{H}_1 \geq 0$:
\\
First, we derive the lower bound on the total energy of $U_{2,2}$. With the definition of $G(C_n)=G(X(C_n))$, we know that the set $\{x_k: k\in G(C_n)\}$ can be covered by a sphere of radius $R$. So the number of the cubes which one need to cover this set is less than $n_1$. We denote these cubes as $C_{n_1}\cdots C_{n_m}$ $(m\leq n_1)$ and  assume the number of $i$'s satisfying $i\in G(C_n)$ and $x_i\in C_{n_k}$ is $a_{n_k}$. Because the side length of $C_{n_k}$ is equal to  $r_1/\sqrt{3}$, the distance between the two particles in the same cube is no more  than $r_1 $. Hence, we obtain, for $i \neq j$, 
\begin{align*}
\sum_{i,j\in G(C_n)}\theta_{r_1}(x_i-x_j)
\geq
\sum_{k=1}^m \sum_{
 i,j \in C_{n_k}
 }
=\sum_{k=1}^m \left[(a_{n_k})^2-(a_{n_k})\right]
\qquad \text{ and } \sum_{k=1}^m a_{n_k} = | G(C_n)|.
\end{align*}
Using Jensen's inequality, together with $ m \leq n_1$, 
\begin{align*}
\sum_{i,j\in G(C_n)}\theta_{r_1}(x_i-x_j)
&\geq  \frac{1}{2n_1} |G(C_n)|^2.
\end{align*}
 Note that
for fixed $i \in \pi_2$, the number of cubes $C_n$, which satisfy $i \in G(C_n)$ is less
than $n_2$. Since $U_{2,2}$ is nonnegative, we then obtain 
\begin{align*}
& \sum_{i,j \in \pi_2}
\mathds{1}_{\overline{\mathcal{C}}_j}
 U_{2,2}(x_i-x_j)
=
\sum_{n \in \mathbb{N}}
\sum_{i,j \in \pi_2}
\mathds{1}_{C_n} (x_i)
\mathds{1}_{\overline{\mathcal{C}}_j}
 U_{2,2}(x_i-x_j)
 \\
 \geq &
 \frac{1}{n_2}
 \sum_{n \in \mathbb{N}}
\sum_{i,j \in \pi_2, i \in G(C_n)}
\mathds{1}_{\overline{\mathcal{C}}_j}
 U_{2,2}(x_i-x_j)
 \\
  \geq  &
 \frac{1}{n_2}
 \sum_{n \in \mathbb{N}}
 \Theta( |G(C_n)|-2n_1)
\sum_{i,j \in G(C_n)}
\mathds{1}_{\overline{\mathcal{C}}_j}
 U_{2,2}(x_i-x_j)  \;.
  \end{align*}
Since $r_1<R$, it also follows that
$n_1\geq 2$. We then obtain
$\mathds{1}_{\overline{\mathcal{C}}_j}
 U_{2,2}(x_i-x_j)=
 U_{2,2}(x_i-x_j)$, whenever $i,j \in G(C_n)$ with $|G(C_n)| \geq 2n_1$.
Using $U_{2,2}(x)\geq \lambda^+ \Theta_{r_1}(x_i-x_j)$, we have with the estimates above
  \begin{align*}
\sum_{i,j \in \pi_2}
\mathds{1}_{\overline{\mathcal{C}}_j}
U_{2,2}(x_i-x_j)
\geq 
 \sum_{n \in \mathbb{N}}
 \Theta( |G(C_n)|-2n_1)
 \frac{\lambda^+}{2 n_1 n_2}|G(C_n)|^2 \;.
\end{align*}

Next, we derive the lower bound on  the interaction potential between particles in $\pi_1$. 
Let $\Pi_1(C_n)$ be defined as the set of $i$'s such that  $i\in\pi_1$ and $x_i\in C_n$. Let $|\Pi_1(C_n)|$ denote the number of the elements of $\Pi_1(C_n)$. If $x_i\in C_n$ and $|G(C_n)|\geq 1$, there must be a $k\in\pi_2$ satisfying $|x_i-x_k| \leq 2R$. Thus, for any $C_n$ we have that 
\begin{align*}
&
\sum_{i, j \in \pi_1}
\mathds{1}_{\overline{\mathcal{C}}_j}
U_{1,1}(x_i-x_j)
=
\sum_{n \in \mathbb{N}}
\sum_{i, j \in \pi_1}
\mathds{1}_{C_n} (x_i)
\mathds{1}_{\overline{\mathcal{C}}_j}
U_{1,1}(x_i-x_j)
\\
&\geq
\sum_{n \in \mathbb{N}}
\Theta (|G(C_n)|- 2n_1)
\sum_{i, j \in \Pi_1 (C_n)}
U_{1,1}(x_i-x_j) .
\end{align*}
For $i, j \in \Pi_1(C_n),\; i \neq j$, the distance between $x_i$ and $x_j$ is not more than $r_1$.
Hence,
\begin{align}
\label{50}
\sum_{i,j\in\Pi_1(C_n)}
U_{1,1}(x_i-x_j)\geq 
 \lambda^+
\bigg(|\Pi_1(C_n)|^2-|\Pi_1(C_n)|\bigg).
\end{align}
At last, we derive the lower bound on $U^-_{1,2}$.

 By the definitions of $|G(C_n)|$ and $U_{1,2}$, we have that  $\forall x\in C_n$,
 \[
 -
 \sum_{i\in\pi_2} U_{1,2}^-(x- x_i)\geq
 - 4 \lambda^-
 |G(C_n)|.\]
This yields to
\beq\label{51}
-
\sum_{j\in\Pi_1(C_n),\,\, i\in\pi_2}
\mathds{1}_{\overline{\mathcal{C}}_j}U_{2,1}^-(x_i-x_j)
\geq -4 \lambda^- |\Pi_1(C_n)| |G(C_n)|.
\eeq

We now consider
\begin{align}
&
\label{abschaetzung}
\sum_{i,j\in\Pi_1(C_n)}
U_{1,1}(x_i-x_j)
-
\sum_{j\in\Pi_1(C_n),\,\, i\in\pi_2}
\mathds{1}_{\overline{\mathcal{C}}_j}U_{1,2}^-(x_i-x_j)
\\
\nonumber
\geq &
\lambda^+ 
\bigg(|\Pi_1(C_n)|^2-|\Pi_1(C_n)|\bigg)
-4 \lambda^- |\Pi_1(C_n)| |G(C_n)|.
\end{align}
Using $ \lambda^- \leq \frac{1}{8 n_2}\lambda^+ $, we then obtain
for $|G(C_n)|\geq n_1$
\begin{align*}
\eqref{abschaetzung}
\geq
\lambda^+
\left(
|\Pi_1(C_n)|^2-|\Pi_1(C_n)|
	-\frac{1}{2n_2}|\Pi_1(C_n)| |G(C_n)|
\right).
\end{align*}
If $|\Pi_1(C_n)|=1$, we obtain for $|G(C_n)| \geq 2n_1 $ 
\begin{align*}
\eqref{abschaetzung} \geq
-
\lambda^+ 
\frac{ |G(C_n)|^2}{4 n_1 n_2}
 .
\end{align*}
For $|\Pi_1(C_n)|\geq 2$, we have
$
|\Pi_1(C_n)|^2-|\Pi_1(C_n)| \geq \frac{1}{2}|\Pi_1(C_n)|^2
$ and therefore, for $|G(C_n)| \geq 2n_1 $ 
\begin{align*}
\eqref{abschaetzung}
\geq
\frac{\lambda^+}{2}
\left(
|\Pi_1(C_n)|^2- 2|\Pi_1(C_n)|
\frac{1}{2n_2} |G(C_n)|
\right)
\geq
-
\frac{\lambda^+}{2}
\frac{1}{4(n_2)^2} |G(C_n)|^2.
\end{align*}
Since $n_2 \geq n_1$ holds, we then obtain for $|G(C_n)| \geq 2n_1 $ and for all
$|\Pi_1(C_n)| \in \mathbb{N}$ 
\begin{align*}
\eqref{abschaetzung} \geq
-
\lambda^+ 
\frac{ |G(C_n)|^2}{4 n_1 n_2}
 .
\end{align*}
Therefore, we obtain
\begin{align*}
\mathcal{H}_1
\geq
 \sum_{n \in \mathbb{N}}
  \Theta( |G(C_n)|-2n_1)
 \left(
 \frac{\lambda^+}{2 n_1 n_2}|G(C_n)|^2
-
\frac{ \lambda^+}{4 n_1 n_2} |G(C_n)|^2
 \right)
\geq 0 
 .
\end{align*}
\medskip 
Proof of $\mathcal{H}_{2,j} \geq 0$:
\\
Since there is no kinetic energy for the $\pi_2$ particles, we prove
$
\mathcal{H}_{2,j} \geq 0
$ for fixed $x_i$, $i \in \pi_2$.
Define
\begin{align}
\label{h2}
\tilde{\mathcal{H}}_{2, j}=
   -2  \Delta_j
+
 \sum_{ i \in \pi_2}
\frac{1}{2}
 U^+_{1,2}(x_i-x_j)
 -
  \sum_{n \in \mathbb{N}}
\Theta(2n_1-|G(C_n)|)
 \sum_{ i \in \pi_2}
 \mathds{1}_{ C_n} (x_j)
\frac{1}{2}
 U_{1,2}^-(x_i-x_j)
\end{align}
Note that
\begin{align*}
\mathcal{H}_{2, j}
=
\mathds{1}_{\overline{\mathcal{C}}_j}
\tilde{\mathcal{H}}_{2, j}
\end{align*}
and $ \mathds{1}_{\overline{\mathcal{C}}_j}$ commutes with $-\Delta_j$.
Hence, it suffices to prove $\tilde{\mathcal{H}}_{2, j} \geq 0$.
Let
\begin{align*}
\pi_2'
=
\lbrace
i \in \pi_2:
\exists C_n,
D(x_i,C_n) \leq R, |G(C_n)| \leq 2 n_1
\rbrace .
\end{align*}
For fixed $x_i$ an d $x_j$, if
$$
\Theta(2n_1-|G(C_n)|)
 \mathds{1}_{ C_n} (x_j)
\frac{1}{2}
 U_{1,2}^-(x_i-x_j) \neq 0 ,
 $$  it then follows $i \in \pi_2'$.
 Therefore,
 \begin{align*}
   \sum_{n \in \mathbb{N}}
\Theta(2n_1-|G(C_n)|)
 \sum_{ i \in \pi_2}
\frac{1}{2}
 \mathds{1}_{ C_n} (x_j)
 U_{1,2}^-(x_i-x_j)
 \leq
 \sum_{ i \in \pi'_2}
\frac{1}{2}
 U_{1,2}^-(x_i-x_j).
 \end{align*}
Since  $\pi_2' \subset \pi_2$, it follows that
\begin{align*}
\eqref{h2}
\geq
   -2  \Delta_j
+
 \sum_{ i \in \pi_2'}
\frac{1}{2}
\left(
 U^+_{1,2}(x_i-x_j)
 -
 U_{1,2}^-(x_i-x_j)
 \right).
\end{align*}
By the definition of $\pi_2'$, it follows that for any $x \in \mathbb{R}^3$
\begin{align*}
\sum_{i \in \pi_2'}
  \mathds{1}_{|x_i-x|\leq R}   \leq 2n_1.
\end{align*}
Under the assumptions on $U$, we obtain
\begin{align*}
\eqref{h2}
\geq
\frac{1}{n_1}
 \sum_{ i \in \pi_2'}
 \left(
    -
  \mathds{1}_{|x_i-x_j|\leq R}  
      \Delta_j
 + 
\frac{n_1}{2}
 U_{1,2}(x_i-x_j)
 \right)
 \geq
 0.
\end{align*}
\end{enumerate}

\end{proof}

\begin{corollary}
\label{Hcor}
Let $V$ fulfill assumption \ref{Vassumption}. Then, there exists $0<\epsilon<1$ such that
\begin{align}
\label{inequ1}
-\sum_{k=1}^N
\Delta_k
+
\sum_{i<j=1}^N
(V_1^+(x_i-x_j)
-
(1+ \epsilon)
V_1^-(x_i-x_j)
)
\geq 
0,
\\
\label{inequ2}
(1-\epsilon)
\sum_{k=1}^N
-\Delta _k  \mathds{1}_{\overline{\mathcal{B}}_k}
+
\sum_{i\neq j}  \mathds{1}_{\overline{\mathcal{B}}_j}
\frac{1}{2} V_1 (x_i-x_j) 
 \geq 0.
\end{align}
\end{corollary}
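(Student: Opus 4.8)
The plan is to obtain both inequalities from the operator‑positivity results of this subsection (Lemma \ref{HposLemma} and Lemma \ref{H_cPos}) by combining two elementary devices: the rescaling $x_k\mapsto Nx_k$, under which the operator $-\sum_k\Delta_k+N^2\sum_{i<j}U(N(x_i-x_j))$ becomes $N^2$ times $-\sum_k\Delta_k+\sum_{i<j}U(x_i-x_j)$ (so positivity of one is equivalent to positivity of the other, and a cut‑off radius $N^{-26/27}$ in the original variables turns into $N^{1/27}$), together with the extraction of a quantitative amount of slack from the buffer factor $\tfrac{1}{1-\epsilon}$ that is built into Assumption \ref{Vassumption}(c).

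For \eqref{inequ1} I would apply Lemma \ref{HposLemma} to the potential $U:=V^+-(1+\epsilon)V^-$. Its positive part $U^+=V^+$ inherits conditions (a) and (b) of Assumption \ref{Vassumption}, and since $\lambda^+>8n_2\lambda^-$ holds strictly, so does $\lambda^+>8n_2(1+\epsilon)\lambda^-$ for $\epsilon$ small. The delicate point is that $U$ must still satisfy the scattering positivity \eqref{scattposcondition}; this does not follow by monotonicity, since enlarging $V^-$ makes $\mathcal{E}_R$ pointwise smaller. Instead one uses that the original condition holds with a genuine buffer, i.e. for some $\epsilon_0\in(0,1)$ one has $\int_{B_R}\!\big(|\nabla\phi|^2+\tfrac{1}{1-\epsilon_0}n_1(2V^+-4V^-)|\phi|^2\big)\geq0$, and one trades part of that buffer against the extra attraction: choosing a new buffer parameter $\epsilon'<\epsilon_0$, the leftover gradient energy $\tfrac{\epsilon_0-\epsilon'}{1-\epsilon'}\|\nabla\phi\|_{B_R}^2$ absorbs the term $\tfrac{4\epsilon n_1}{1-\epsilon'}\langle\phi,V^-\phi\rangle_{B_R}$ whenever $\phi$ has non‑negligible gradient, while for $\phi$ close to the constant $1$ one uses instead that $2\int V^+-4(1+\epsilon)\int V^->0$, which holds with a large margin since $\lambda^+>8n_2\lambda^-$ already forces $\int V^+$ to dominate $\int V^-$. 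A compactness argument on $\{\phi\in C^1:\phi(R)=1\}$ glues the two regimes. Once $U$ is known to fulfill Assumption \ref{Vassumption}, Lemma \ref{HposLemma} gives $-\sum_k\Delta_k+\sum_{i<j}U(x_i-x_j)\geq0$, and rescaling $x_k\mapsto Nx_k$ turns this into \eqref{inequ1}.

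For \eqref{inequ2} I would re‑run the proof of Lemma \ref{H_cPos}(2), i.e. of $H_{\overline{\mathcal{C}}}\geq0$, with $U$ replaced by $V$ and with the cut‑off radius $\tilde R$ replaced by $N^{1/27}$ — which exceeds $2R$ for $N$ large and is precisely the image of the radius $N^{-26/27}$ defining $\overline{\mathcal{B}}_j$ under $x_k\mapsto Nx_k$; that proof uses nothing about $\tilde R$ beyond $\tilde R\geq2R$, so it carries over verbatim. The new ingredient is that this re‑run produces slack: the scattering inequality extracted from \eqref{minenergyscatt} and used in the estimate of $\mathcal{H}_{2,j}$, namely $-\mathds{1}_{|x|\leq R}\Delta_x+\tfrac{1}{1-\epsilon}n_1(2V^+(x)-4V^-(x))\geq0$, multiplied through by $(1-\epsilon)$ reads $-\mathds{1}_{|x|\leq R}\Delta_x+n_1(2V^+(x)-4V^-(x))\geq\epsilon\,(-\mathds{1}_{|x|\leq R}\Delta_x)$; feeding this sharper bound into the pointwise estimate for $\mathcal{H}_{2,j}$ (together with the surplus kinetic energy of the $\pi_1$‑particle, which is present whenever fewer than $2n_1$ $\pi_2$‑particles are nearby) leaves a term bounded below by $\epsilon\,(-2\Delta_j\mathds{1}_{\overline{\mathcal{C}}_j})$, whereas $\mathcal{H}_1\geq0$ is unaffected since it carries no kinetic term. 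As all the kinetic energy of $H_P$ sits in the $\mathcal{H}_{2,j}$'s, summing over the partitions $P$ gives $H_{\overline{\mathcal{C}}}\geq\epsilon\sum_k(-\Delta_k\mathds{1}_{\overline{\mathcal{C}}_k})$, that is $(1-\epsilon)\sum_k(-\Delta_k\mathds{1}_{\overline{\mathcal{C}}_k})+\sum_{i\neq j}\mathds{1}_{\overline{\mathcal{C}}_j}\tfrac12 V(x_i-x_j)\geq0$; rescaling back ($x_k\mapsto x_k/N$, so $V$ becomes $V_1$ and $\overline{\mathcal{C}}$ becomes $\overline{\mathcal{B}}$) is exactly \eqref{inequ2}.

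The main obstacle is the first of these two points: checking that $V^+-(1+\epsilon)V^-$ still satisfies the scattering positivity \eqref{scattposcondition}. This is the one step where the naive estimate points the wrong way, and it is the reason Assumption \ref{Vassumption}(c) is formulated with strict slack ($\epsilon_0\in(0,1)$ and $\lambda^+>8n_2\lambda^-$) rather than as a borderline equality; everything else — the two rescalings and the propagation of the $\tfrac{1}{1-\epsilon}$ buffer through the already‑established proofs of Lemma \ref{HposLemma} and Lemma \ref{H_cPos} — is routine bookkeeping.
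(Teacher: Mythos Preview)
Your treatment of \eqref{inequ1} matches the paper's: rescale and apply Lemma~\ref{HposLemma} to $U=V^+-(1+\epsilon)V^-$. You are right that the only non-routine point is checking that this $U$ still satisfies Assumption~\ref{Vassumption}; the paper, however, simply asserts ``$U$ then fulfills the conditions of Lemma~\ref{HposLemma}'' without further comment, so your compactness sketch goes beyond what the paper actually supplies.

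For \eqref{inequ2} you take a longer route than the paper. You propose to re-open the proof of Lemma~\ref{H_cPos}(b) and harvest the slack hidden in the $\mathcal{H}_{2,j}$ estimate. The paper avoids this entirely: it applies Lemma~\ref{H_cPos}(b) \emph{as a black box} to the scaled potential $U=\tfrac{1}{1-\epsilon}V$. Here there is no subtlety in checking Assumption~\ref{Vassumption}: the ratio $\lambda^+_U/\lambda^-_U$ is unchanged under scalar multiplication, and the scattering condition \eqref{scattposcondition} for $U$ with a new buffer $\epsilon''$ reduces to the one for $V$ with buffer $1-(1-\epsilon)(1-\epsilon'')$, which is $\leq\epsilon_0$ whenever $\epsilon<\epsilon_0$ and $\epsilon''$ is small. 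Lemma~\ref{H_cPos}(b) then gives
\[
\sum_k(-\Delta_k\mathds{1}_{\overline{\mathcal{D}}_k})+\sum_{i\neq j}\mathds{1}_{\overline{\mathcal{D}}_j}\,\tfrac12\,\tfrac{1}{1-\epsilon}V(x_i-x_j)\geq 0
\]
(with $\overline{\mathcal{D}}_j$ the set $\overline{\mathcal{B}}_j$ after the rescaling $x\mapsto Nx$, cut-off radius $N^{1/27}>2R$), and multiplying by $1-\epsilon$ and scaling back yields \eqref{inequ2} in one line. Your argument is correct but redundant: the factor $\tfrac{1}{1-\epsilon}$ was built into Assumption~\ref{Vassumption}(c) precisely so that one can scale the \emph{potential} and stay inside the class covered by the lemmas, rather than having to revisit their proofs.
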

\begin{remark}
These operator inequalities are crucial in order to prove conditions
\eqref{condition2}, \eqref{condition3} and \eqref{condition4}, see below.
We do not except the persistence of condensation if \eqref{inequ1} and
\eqref{inequ2} were not true. In that case, one would rather expect the condensate to collapse in the limit $N \rightarrow \infty$ in finite time.
\end{remark}

\begin{proof}
By rescaling $Nx \rightarrow x$, the first inequality \eqref{inequ1} is equivalent to
$-\sum_{k=1}^N
\Delta_k
+
\sum_{i<j=1}^N
(V^+(x_i-x_j)
-
(1+ \epsilon)
V^-(x_i-x_j)
)
\geq 
0$.
Setting $U(x)= V^+(x)-(1+\epsilon) V^-(x)$, $U$ then fulfills the conditions of Lemma \ref{HposLemma} which implies the inequality above.
\medskip
\\
Setting 
$\overline{\mathcal{D}}_{j}:=\bigcup_{k,l\neq j}\{(x_1,x_2,\ldots,x_N)\in
\mathbb{R}^{3N}: |x_l-x_k|<NN^{-26/27}\}$, the second inequality is equivalent to
$$
(1-\epsilon)
\sum_{k=1}^N
-\Delta _k  \mathds{1}_{\overline{\mathcal{D}}_k}
+
\sum_{i\neq j}  \mathds{1}_{\overline{\mathcal{D}}_j}
\frac{1}{2} V (x_i-x_j) 
 \geq 0.
$$
Note that the set $\overline{\mathcal{D}}_{j}$ defined above fulfills $\tilde{R}=N^{1/27}>2R$.
Hence, 
Lemma \ref{H_cPos}, part (b) implies the second inequality \eqref{inequ2}, setting $U=\frac{1}{1-\epsilon}V$.

\end{proof}

\subsection{Proof of condition \eqref{condition2} and \eqref{condition3}}
\begin{lemma}\label{corest}
Let $V$ fulfill assumption \ref{Vassumption} and let $A_t \in L^\infty(\mathbb{R}^3, \mathbb{R})$. Then, for all $\Psi \in L_s^2 (\mathbb{R}^{3N}, \mathbb{C}) \cap H^2(\mathbb{R}^{3N},\mathbb{C})$
\begin{enumerate}
\item
\begin{align}
\|V_1(x_1-x_3)\Psi\|^ 2
\leq &
C \laa \Psi, H \Psi \raa+C N.
\end{align}
\item
\begin{align}
\|\nabla_1\Psi\|^ 2
\leq &
\frac{C}{N}
( \laa \Psi, H \Psi \raa+1).
\end{align}
\end{enumerate}
\end{lemma}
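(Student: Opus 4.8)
The plan is to deduce both estimates from the two operator inequalities \eqref{operatorboundT} and \eqref{operatorboundV} (equivalently, from Corollary \ref{Hcor}), from the permutation symmetry of $\Psi$, and from the trivial operator bound $\sum_{k=1}^N A_t(x_k)\geq -N\|A_t\|_\infty$. In both parts the scheme is the same: symmetrize to rewrite the quantity in question as $N^{-1}$ (resp.\ $\binom{N}{2}^{-1}$) times a sum over particles (resp.\ pairs), apply the relevant operator inequality to that sum, and absorb the contribution of $\sum_k A_t(x_k)$ into the lower-order term (all expectations below are finite since $V_1,A_t\in L^\infty$ and $\Psi\in H^2$).

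I would first treat part (b). By symmetry of $\Psi$ under exchange of $x_1,\dots,x_N$ one has $\|\nabla_1\Psi\|^2 = N^{-1}\sum_{k=1}^N\|\nabla_k\Psi\|^2 = N^{-1}\laa\Psi,\bigl(-\sum_{k=1}^N\Delta_k\bigr)\Psi\raa$. By \eqref{operatorboundT} there is an $\epsilon>0$ with $-\epsilon\sum_k\Delta_k \leq -\sum_k\Delta_k + \sum_{i<j}V_1(x_i-x_j) = H - \sum_k A_t(x_k)$ as operators. Taking the expectation in $\Psi$ and using $-\sum_k A_t(x_k)\leq N\|A_t\|_\infty$ one gets $\epsilon\laa\Psi,\bigl(-\sum_k\Delta_k\bigr)\Psi\raa \leq \laa\Psi,H\Psi\raa + N\|A_t\|_\infty$, and dividing by $\epsilon N$ gives (b).

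For part (a), I would proceed analogously. Since $V_1=N^2V(N\cdot)$ with $V\in L_c^\infty$, the pointwise bound $V_1(x)^2\leq\|V_1\|_\infty\,|V_1(x)| = N^2\|V\|_\infty\,|V_1(x)|$ yields $\|V_1(x_1-x_3)\Psi\|^2 = \laa\Psi,V_1(x_1-x_3)^2\Psi\raa \leq N^2\|V\|_\infty\,\laa\Psi,|V_1(x_1-x_3)|\Psi\raa$. By permutation symmetry $\laa\Psi,|V_1(x_1-x_3)|\Psi\raa = \binom{N}{2}^{-1}\laa\Psi,\sum_{i<j}|V_1(x_i-x_j)|\Psi\raa$, while \eqref{operatorboundV} gives $\epsilon\sum_{i<j}|V_1(x_i-x_j)| \leq -\sum_k\Delta_k + \sum_{i<j}V_1(x_i-x_j) = H - \sum_k A_t(x_k) \leq H + N\|A_t\|_\infty$ as operators. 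Combining the three estimates and using $N^2\binom{N}{2}^{-1}\leq 4$ for $N\geq 2$, one obtains $\|V_1(x_1-x_3)\Psi\|^2 \leq N^2\|V\|_\infty\,\binom{N}{2}^{-1}\epsilon^{-1}\bigl(\laa\Psi,H\Psi\raa + N\|A_t\|_\infty\bigr) \leq C\bigl(\laa\Psi,H\Psi\raa + N\bigr)$.

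The genuine content of the Lemma is the pair of operator inequalities \eqref{operatorboundT} and \eqref{operatorboundV} — that is, the nonnegativity statements of Lemma \ref{HposLemma} and Corollary \ref{Hcor}, which rest on Yin's three-body estimate \cite{yin} — so once these are in hand there is essentially no obstacle left. The one step that warrants a moment's care is $V_1^2\leq\|V_1\|_\infty|V_1|$ in part (a): the factor $\|V_1\|_\infty = N^2\|V\|_\infty$ it produces is exactly cancelled by the $\binom{N}{2}^{-1}\sim 2N^{-2}$ coming from the symmetrization, so that no spurious power of $N$ survives and one lands precisely on $C(\laa\Psi,H\Psi\raa + N)$; one must similarly keep track of $\sum_k A_t(x_k)$ so that the external potential contributes only the claimed lower-order term.
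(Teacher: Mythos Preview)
Your proof is correct and follows essentially the same route as the paper: symmetrize, invoke the operator inequalities coming from Corollary~\ref{Hcor} (equivalently \eqref{operatorboundT}--\eqref{operatorboundV}), and absorb $\sum_k A_t(x_k)$ via $\|A_t\|_\infty$. The only cosmetic difference is that in part~(a) the paper splits $|V_1|=V_1^++V_1^-$ and bounds the two pieces separately from \eqref{inequ1}, whereas you invoke \eqref{operatorboundV} for $|V_1|$ directly; both arrive at the same estimate after the $\|V_1\|_\infty\cdot\binom{N}{2}^{-1}=O(1)$ cancellation you identify.
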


\begin{proof}
\begin{enumerate}
\item

Let, for $0<\epsilon<1$,
$$
H^{(\epsilon)}=
-\sum_{k=1}^N \Delta_k
+
\sum_{i<j}
(
V^+_1(x_i-x_j)- (1+\epsilon) V_1^-(x_i-x_j)
)
+
\sum_{k=1}^N A_t(x_k) .
$$
Since $V$ fulfills assumption \ref{Vassumption}, Corollary  \ref{Hcor} then implies
together with $A_t \in L^\infty(\mathbb{R}^3, \mathbb{R})$, $H^{(\epsilon)} \geq -CN$.
We then obtain
$$
\epsilon \sum_{i<j=1}^ N V_1^{-} (x_i-x_j)
\leq
H+CN.
$$
Furthermore
\begin{align*}
 \sum_{i<j=1}^ N V_1^{+} (x_i-x_j)
\leq
H
+
 \sum_{i<j=1}^ N V_1^{-} (x_i-x_j)
 +
 N \| A_t\|_\infty
 \leq
\left( 1+\frac{1}{\epsilon} \right) H
+
C N .
\end{align*}
Thus,
\begin{align*}
\|V_1(x_1-x_3)\Psi\|^ 2
\leq &
\|V_1\|_{\infty}
(
\laa \Psi, V_1^+(x_1-x_3) \Psi \raa
+
\laa \Psi, V_1^-(x_1-x_3) \Psi \raa
)
\\
\leq &
C
\left(
\laa \Psi,
 \sum_{i<j=1}^ N V^+_1 (x_i-x_j)
\Psi \raa
+
\laa \Psi,
 \sum_{i<j=1}^ N V^-_1 (x_i-x_j)
\Psi \raa
\right)
\\
\leq &
C \laa \Psi, H \Psi \raa+C N.
\end{align*}
\item
We use
\begin{align*}
-CN \leq H^{(\epsilon)}
\leq
(1+\epsilon)
\left(
\frac{-1}{1+\epsilon}
\sum_{k=1}^N \Delta_k
+
\sum_{i<j}
V_1(x_i-x_j)
+
\sum_{k=1}^N \frac{1}{1+\epsilon} A_t(x_k) 
\right).
\end{align*}
Let $\mu= 1- \frac{1}{1+\epsilon}>0$.  Using $A_t \in L^{\infty} (\mathbb{R}^ 3, \mathbb{R})$, we then obtain
\begin{align*}
-\mu 
\sum_{k=1}^N \Delta_k
\leq
H+ CN.
\end{align*}

\end{enumerate}

\end{proof}
Using Lemma \ref{corest} together with $ 
\frac{\laa \Psi_t, H \Psi_t \raa}{N} \leq C$, we then obtain condition \eqref{condition2} and \eqref{condition3}.

\subsection{Proof of condition \eqref{condition4}}
We will first restate a Lemma which we will need in the following.
\begin{proposition}\label{propo}
Let $\Omega \in H^1(\mathbb{R}^{3N}, \mathbb{C})$. Then,
for all $j\neq k$ 
$$\|\mathds{1}_{\overline{\mathcal{B}}_{j}}\Omega\|\leq  CN^{-7/54}\|\nabla_j\Omega\|\;.$$
\end{proposition}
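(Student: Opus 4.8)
The plan is to turn the statement into a one‑particle Sobolev estimate, using that the set $\overline{\mathcal{B}}_j$ only constrains the coordinates $x_k$ with $k\neq j$ and that $\mathds{1}_{\overline{\mathcal{B}}_j}$ splits into ``one close particle'' pieces. For $k\neq j$ I would set $\overline{\mathcal{A}}_k^{(j)}:=\bigcup_{l\neq j,k}a_{k,l}$, the configurations in which $x_k$ lies within $N^{-26/27}$ of at least one particle other than $x_j$. Unfolding Definition \ref{hdetail} gives $\overline{\mathcal{B}}_j=\bigcup_{k\neq j}\overline{\mathcal{A}}_k^{(j)}$, hence $\mathds{1}_{\overline{\mathcal{B}}_j}\le\sum_{k\neq j}\mathds{1}_{\overline{\mathcal{A}}_k^{(j)}}$ and therefore $\|\mathds{1}_{\overline{\mathcal{B}}_j}\Omega\|^2\le\sum_{k\neq j}\laa\Omega,\mathds{1}_{\overline{\mathcal{A}}_k^{(j)}}\Omega\raa$. (One could gain an extra factor $1/2$ here, since on $\overline{\mathcal{B}}_j$ at least two of the particles with index $\neq j$ are close to some other particle, but this is not needed.)

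Next I would establish the per‑particle bound $\laa\Omega,\mathds{1}_{\overline{\mathcal{A}}_k^{(j)}}\Omega\raa\le C\,N^{-34/27}\,\|\nabla_k\Omega\|^2$. Freeze all variables except $x_k$; then $\overline{\mathcal{A}}_k^{(j)}$ is the event that $x_k$ lies in the union $E$ of at most $N-2$ balls of radius $N^{-26/27}$. The point is to apply Hölder's inequality to $E$ as a single set rather than ball by ball: since $|E|\le c\,N\,(N^{-26/27})^3=c\,N^{-17/9}$ independently of the positions of the other particles, Hölder with exponents $3$ and $3/2$ in the variable $x_k$ gives $\int_E|\Omega|^2\,d^3x_k\le |E|^{2/3}(\int|\Omega|^6\,d^3x_k)^{1/3}\le c\,N^{-34/27}(\int|\Omega|^6\,d^3x_k)^{1/3}$, and the three‑dimensional Sobolev inequality $\|f\|_{L^6(\mathbb{R}^3)}\le C\|\nabla f\|_{L^2(\mathbb{R}^3)}$ in $x_k$, integrated over the remaining variables, produces the stated bound. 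Estimating the $\le N$ balls one at a time would instead give only $N\,(N^{-26/9})^{2/3}=N\,N^{-52/27}=N^{-25/27}$, a factor $N^{1/3}$ worse and not good enough; collecting the balls first is exactly the step that makes the bound come out.

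Finally, I would sum over the $N-1$ indices $k\neq j$ and use the permutation symmetry of $\Omega$, so that $\|\nabla_k\Omega\|=\|\nabla_j\Omega\|$, obtaining $\|\mathds{1}_{\overline{\mathcal{B}}_j}\Omega\|^2\le C\,N\,N^{-34/27}\|\nabla_j\Omega\|^2=C\,N^{-7/27}\|\nabla_j\Omega\|^2$; taking square roots gives the claimed bound, since $\sqrt{N^{-7/27}}=N^{-7/54}$. The one genuinely delicate point is the middle step: one must bound the Lebesgue measure of the \emph{full} union of the small balls and spend a single Hölder--Sobolev step there, since the naive per‑pair estimate loses an extra power $N^{1/3}$. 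Everything else is just tracking the exponents $26/27$, $17/9$, $34/27$, $7/27$, $7/54$.
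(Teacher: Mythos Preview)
Your argument is the right one and is what the paper points to: the paper itself gives no self-contained proof but cites \cite{picklgp3d}, Proposition~A.1, calling it ``a direct consequence of Sobolev's inequality,'' which is exactly your H\"older--Sobolev estimate on the union of the small balls in the variable $x_k$, summed over $k\neq j$. Your emphasis on bounding the Lebesgue measure of the \emph{union} before applying H\"older (rather than ball by ball) is precisely the step that makes the exponent $-7/54$ come out.

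One issue to flag: in the last step you invoke permutation symmetry of $\Omega$ to pass from $\sum_{k\neq j}\|\nabla_k\Omega\|^2$ to $(N-1)\|\nabla_j\Omega\|^2$. That hypothesis is not in the statement (only $\Omega\in H^1(\mathbb R^{3N},\mathbb C)$ is assumed), and without some symmetry the inequality with $\nabla_j$ on the right is actually false: $\mathds{1}_{\overline{\mathcal B}_j}$ does not depend on $x_j$ at all, so a tensor product $\Omega=f(x_j)\,g$ with $g$ supported on $\overline{\mathcal B}_j$ is an immediate counterexample. The orphan quantifier ``for all $j\neq k$'' in the statement (no $k$ appears in the displayed inequality) strongly suggests the intended right-hand side is $\|\nabla_k\Omega\|$ with $k\neq j$; your argument yields exactly that bound once $\Omega$ is symmetric in the variables $x_l$, $l\neq j$, which is the case in every application in the paper (e.g.\ $\Omega=\nabla_1 p_1\Psi$). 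So your proof is correct for the proposition as it is actually used; just make the added symmetry hypothesis explicit rather than slipping it in.
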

\begin{proof}
The proof of this Lemma, which is a direct consequene of Sobolev's inequality, can be found in \cite{picklgp3d}, Proposition A.1. for the three dimensional case and 
in \cite{jeblick}, Lemma 7.4. for the two dimensional analog (note that the set 
$\overline{\mathcal{B}}_{j}$ and the
respective $N$-dependent bound are different in two dimensions.).
\end{proof}

\begin{lemma}
\label{energylemma}
Assume $V$ fulfills assumption \ref{Vassumption}.
Then, for any $\Psi \in L^2_s(\mathbb{R}^{3N}, \mathbb{C}) \cap H^2(\mathbb{R}^{3N}, \mathbb{C})$ and any
$\phi\in H^2(\mathbb{R}^{3}, \mathbb{C})$  there exists a $\eta>0$  such that
\begin{enumerate}
\item
\begin{align*}\|\mathds{1}_{\mathcal{A}_{1}}\nabla_1q^{\phi}_1
\Psi \|^2 \leq  C \left(\laa\Psi,\widehat{n}^{\phi}\Psi\raa+N^{-\eta}\right)+\left|\mathcal{E}(\Psi)-\mathcal{E}^{GP}(\phi)\right|.
 \end{align*}
\item
 \begin{align*}\|\mathds{1}_{\overline{\mathcal{B}}_{1}}\nabla_1
\Psi \|^2 \leq  C \left(\laa\Psi,\widehat{n}^{\phi}\Psi\raa+N^{-\eta}\right)+\left|\mathcal{E}(\Psi)-\mathcal{E}^{GP}(\phi)\right|.
 \end{align*}
\end{enumerate}
\end{lemma}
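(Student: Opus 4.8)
The plan is to follow the proofs of the analogous three-dimensional energy estimates in \cite{picklgp3d} (and \cite{jeblick} for the two-dimensional case), isolating the few points where the nonnegativity of $V$ was used and substituting the operator inequalities of Corollary \ref{Hcor} together with the properties of the scattering state collected in Lemma \ref{defAlemma} and Lemma \ref{positiv V-w}. I would first use the symmetry of $\Psi$ to write $\mathcal{E}(\Psi)=\|\nabla_1\Psi\|^2+\laa\Psi,A_t(x_1)\Psi\raa+\tfrac{N-1}{2}\laa\Psi,V_1(x_1-x_2)\Psi\raa$, then split the kinetic energy according to the projectors $p_1^{\phi},q_1^{\phi}$ and to the sets $\mathcal{A}_1,\overline{\mathcal{A}}_1,\mathcal{B}_1,\overline{\mathcal{B}}_1$, and replace $V_1$ on the two-body region $\mathcal{B}_1$ by the auxiliary potential $W_{\beta_1}$ of Definition \ref{microscopic} by means of Lemma \ref{positiv V-w}. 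The decisive feature of that replacement is that $\tfrac12(V_1-W_{\beta_1})$ has zero scattering length, so that the ``kinetic cost'' of the two-particle correlation near $x_1\approx x_2$ is exactly compensated by $\tfrac12\laa\Psi,(V_1-W_{\beta_1})(x_1-x_2)\Psi\raa$; it is crucial here that Lemma \ref{positiv V-w} and Lemma \ref{defAlemma}(c) only use that $f_{\beta_1,1}$ is a nonnegative, monotone function -- which holds under Assumption \ref{Vassumption} -- and not that $V$ itself is nonnegative.

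Once $V_1$ has been replaced by $W_{\beta_1}$ on $\mathcal{B}_1$, the interaction term reproduces the Gross--Pitaevskii nonlinearity: using $N\|W_{\beta_1}f_{\beta_1,1}\|_1=8\pi a+\mathcal{O}(N^{-1-\beta_1})$ and the norm bounds on $g_{\beta_1,1}=1-f_{\beta_1,1}$ from Lemma \ref{defAlemma}(f),(g),(i), combined with condensation -- i.e. replacing the one- and two-particle reduced expressions of $\Psi$ by the corresponding $\phi$-expressions, all errors coming out linear in $\laa\Psi,\widehat{n}^{\phi}\Psi\raa$ after the usual weight-shift with a function $\widehat{m}^{\phi}$ as in \cite{pickl1,picklgp3d} -- one obtains $\tfrac{N-1}{2}\laa\Psi,\mathds{1}_{\mathcal{B}_1}W_{\beta_1}(x_1-x_2)\Psi\raa\geq 4\pi a\langle\phi,|\phi|^2\phi\rangle-C(\laa\Psi,\widehat{n}^{\phi}\Psi\raa+N^{-\eta})$, while $\laa\Psi,A_t(x_1)\Psi\raa\geq\langle\phi,A_t\phi\rangle-C\laa\Psi,\widehat{n}^{\phi}\Psi\raa$ and $\|\mathds{1}_{\mathcal{A}_1}\nabla_1p_1^{\phi}\Psi\|^2\geq\|\nabla\phi\|^2-C(\laa\Psi,\widehat{n}^{\phi}\Psi\raa+N^{-\eta})$ (the last two using $A_t\in L^\infty$, the $H^2$-regularity of $\phi$, and Sobolev-type estimates for the restriction to the small set $\overline{\mathcal{A}}_1$). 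Collecting these terms reconstructs $\mathcal{E}^{GP}(\phi)$ in the lower bound for $\mathcal{E}(\Psi)$ and leaves $\|\mathds{1}_{\mathcal{A}_1}\nabla_1q_1^{\phi}\Psi\|^2$ (plus a further nonnegative contribution) on the right-hand side, which is part (a) after rearranging and bounding $\mathcal{E}(\Psi)-\mathcal{E}^{GP}(\phi)$ above by $|\mathcal{E}(\Psi)-\mathcal{E}^{GP}(\phi)|$. For part (b) I would instead write $\sum_k\|\mathds{1}_{\overline{\mathcal{B}}_k}\nabla_k\Psi\|^2=N\mathcal{E}(\Psi)-\laa\Psi,\sum_kA_t(x_k)\Psi\raa-\sum_{i<j}\laa\Psi,V_1(x_i-x_j)\Psi\raa-\sum_k\|\mathds{1}_{\mathcal{B}_k}\nabla_k\Psi\|^2$, insert the same lower bound on the $\mathcal{B}$-part of the interaction and on $\sum_k\|\mathds{1}_{\mathcal{B}_k}\mathds{1}_{\mathcal{A}_k}\nabla_k\Psi\|^2$, and control the remaining ``clustering'' part $\tfrac12\sum_{i\neq j}\laa\Psi,\mathds{1}_{\overline{\mathcal{B}}_j}V_1(x_i-x_j)\Psi\raa$ from below by $-(1-\epsilon)\sum_k\|\mathds{1}_{\overline{\mathcal{B}}_k}\nabla_k\Psi\|^2$ via the operator inequality \eqref{inequ2} (equivalently Lemma \ref{H_cPos}(b)); moving the latter term to the left isolates $\sum_k\|\mathds{1}_{\overline{\mathcal{B}}_k}\nabla_k\Psi\|^2$, and division by $N$ (with symmetry, $N\|\mathds{1}_{\overline{\mathcal{B}}_1}\nabla_1\Psi\|^2=\sum_k\|\mathds{1}_{\overline{\mathcal{B}}_k}\nabla_k\Psi\|^2$) yields the claim.

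To summarize, the only uses of nonnegativity of $V$ in \cite{picklgp3d,jeblick} and their replacements are: (i) the a priori bounds $\|V_1(x_1-x_2)\Psi\|\leq CN^{1/2}$ and $\|\nabla_1\Psi\|\leq C$, which I take from Lemma \ref{corest} (that is, from the operator inequalities \eqref{inequ1}, \eqref{inequ2}); (ii) the step ``$\laa\Psi,\mathds{1}_{\overline{\mathcal{B}}_1}V_1(x_1-x_2)\Psi\raa\geq0$'' that discarded the interaction energy on configurations with three or more particles within $N^{-26/27}$, where the two-body scattering structure is unavailable -- replaced by $H_{\overline{\mathcal{C}}}\geq0$ (Lemma \ref{H_cPos}(b)), which dominates the possibly negative interaction on such configurations by a fraction of the corresponding kinetic energy; (iii) the smallness of the $\overline{\mathcal{B}}_1$-contributions to the $W_{\beta_1}$-term, handled by Proposition \ref{propo} ($\|\mathds{1}_{\overline{\mathcal{B}}_1}\Omega\|\leq CN^{-7/54}\|\nabla_1\Omega\|$) together with $\|\nabla_1\Psi\|\leq C$ and $\|W_{\beta_1}\|_\infty\leq CN^{3\beta_1}$, $\|W_{\beta_1}\|_1\leq CN^{-1}$; and (iv) the positivity of $f_{\beta_1,1}$, which is exactly Lemma \ref{defAlemma}(c),(d). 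With these substitutions the remaining estimates carry over essentially verbatim, and choosing $0<\beta_1<1$ suitably all the $N$-powers arising from Lemma \ref{defAlemma} and Proposition \ref{propo} combine into a single factor $N^{-\eta}$.

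I expect the principal difficulty to be the bookkeeping in point (ii): one must route through \eqref{inequ2} exactly enough of $-\sum_k\Delta_k$ to compensate $V_1^-$ on every configuration with at least three particles within $N^{-26/27}$, while retaining (essentially) the full $\|\nabla\phi\|^2$ and the full term $\|\mathds{1}_{\mathcal{A}_1}\nabla_1q_1^{\phi}\Psi\|^2$, so that neither the Gross--Pitaevskii energy nor the desired remainder is spoiled. This works because on $\mathcal{A}_1$ there is no interaction involving particle $1$, so the kinetic energy that is ``spent'' through \eqref{inequ2} is disjoint from $\|\mathds{1}_{\mathcal{A}_1}\nabla_1q_1^{\phi}\Psi\|^2$, and \eqref{inequ2} is invoked only to show that an auxiliary, nonnegative term stays nonnegative. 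A related point is that a crude estimate does not suffice: a bound such as $\|\mathds{1}_{\overline{\mathcal{B}}_1}\nabla_1\Psi\|^2\leq C(\mathcal{E}(\Psi)+1)$ would follow easily from Lemma \ref{corest} but is far too weak, since the right-hand side must vanish together with $\laa\Psi,\widehat{n}^{\phi}\Psi\raa$ and $|\mathcal{E}(\Psi)-\mathcal{E}^{GP}(\phi)|$; the negative part of $V$ carried by the very singular potential $V_1$ makes the three-body region contribute at order $N$ to the total energy unless it is absorbed by the kinetic energy as above.
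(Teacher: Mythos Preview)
Your plan is essentially the paper's own argument: expand $\mathcal{E}(\Psi)-\mathcal{E}^{GP}(\phi)$, split the kinetic energy according to $\mathcal{A}_1,\overline{\mathcal{A}}_1,\mathcal{B}_1,\overline{\mathcal{B}}_1$ and $p_1,q_1$, replace $V_1$ by $W_{\beta_1}$ on $\mathcal{B}_1$ via Lemma~\ref{positiv V-w}, quote \cite{picklgp3d} for all the terms that do not see the sign of $V$, and use Corollary~\ref{Hcor} \eqref{inequ2} for the clustering region. The ingredients you list (Lemma~\ref{defAlemma}, Lemma~\ref{corest}, Proposition~\ref{propo}) are exactly the ones the paper invokes.

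One point needs correcting. You write that ``the kinetic energy that is `spent' through \eqref{inequ2} is disjoint from $\|\mathds{1}_{\mathcal{A}_1}\nabla_1 q_1^{\phi}\Psi\|^2$.'' This is false: $\mathcal{A}_1$ and $\overline{\mathcal{B}}_1$ overlap (particle~$1$ can be far from everyone while particles $2,3$ are close), and \eqref{inequ2} needs the full $\|\mathds{1}_{\overline{\mathcal{B}}_1}\nabla_1\Psi\|^2$, including its $\mathcal{A}_1\cap\overline{\mathcal{B}}_1$ part. The paper's device is to keep only an $\epsilon$-fraction of $\|\mathds{1}_{\mathcal{A}_1}\nabla_1 q_1\Psi\|^2+\|\mathds{1}_{\overline{\mathcal{B}}_1}\mathds{1}_{\overline{\mathcal{A}}_1}\nabla_1\Psi\|^2$ and throw the remaining $(1-\epsilon)$-fraction into the combination
\[
(1-\epsilon)\|\mathds{1}_{\overline{\mathcal{B}}_1}\nabla_1\Psi\|^2+\tfrac{N-1}{2}\laa\Psi,\mathds{1}_{\overline{\mathcal{B}}_1}V_1(x_1-x_2)\Psi\raa,
\]
which by symmetry is $N^{-1}$ times the operator in \eqref{inequ2} and hence $\geq 0$. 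The passage from $(1-\epsilon)\|\mathds{1}_{\mathcal{A}_1}\nabla_1 q_1\Psi\|^2$ to $(1-\epsilon)\|\mathds{1}_{\mathcal{A}_1}\mathds{1}_{\overline{\mathcal{B}}_1}\nabla_1\Psi\|^2$ produces cross terms with $p_1$ that are small by Proposition~\ref{propo}, plus a manifestly nonnegative remainder. So you retain only $\epsilon$ times the quantity you want, not ``essentially the full term''; that is why the final bound carries a constant $C=C(\epsilon)$.

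A minor organizational difference: the paper obtains (a) and (b) in one sweep (the $\epsilon$-term controls both $\|\mathds{1}_{\mathcal{A}_1}\nabla_1 q_1\Psi\|^2$ and $\|\mathds{1}_{\overline{\mathcal{B}}_1}\mathds{1}_{\overline{\mathcal{A}}_1}\nabla_1\Psi\|^2$, and then $\|\mathds{1}_{\overline{\mathcal{B}}_1}\nabla_1\Psi\|^2$ is recovered from these two via Proposition~\ref{propo}), whereas you sketch a separate computation for (b). Both routes work; the paper's is shorter.
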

\begin{remark}
For nonnegative potentials, the proof of Lemma \ref{energylemma} was given in
Lemma 5.2. in \cite{picklgp3d} for the three dimensional case and in Lemma 7.9 in \cite{jeblick} for the two dimensional case. For potentials which fulfill assumption \ref{Vassumption} we use Corollary \ref{Hcor} in order to obtain the same bound.
\end{remark}

\begin{proof}
Let us first split up the energy difference. Since $\Psi \in L^2_s(\mathbb{R}^{3N},\mathbb{C})$ is symmetric,
\begin{align*}
\mathcal{E}(\Psi)-\mathcal{E}^{GP}(\phi)&=
\|\nabla_1\Psi \|^2+
(N-1)
\laa \Psi, V_1(x_1-x_2)
\Psi \raa
\\&-\|\nabla\phi \|^2-2a\|\phi^2\|^2+
\laa \Psi, A_t \Psi \raa
-
\langle \phi, A_t \phi \raa.
\end{align*}
Let $W_{\beta_1}$ be defined as in Lemma \ref{microscopic}
for some $\beta_1$. Then,
\begin{align*}
\mathcal{E}(\Psi)-\mathcal{E}^{GP}(\phi)=&\|\mathds{1}_{\mathcal{A}_{1}}\nabla_1\Psi\|^2
 + \|\mathds{1}_{\overline{\mathcal{B}}_{1}}\mathds{1}_{\overline{
\mathcal{A}}_{1}}\nabla_1\Psi \|^2+\|\mathds{1}_{\mathcal{B}_{1}}\mathds{1}_{\overline{\mathcal{A}}_{1}}\nabla_1\Psi \|^2 \nonumber\\&
+(N-1)
\laa \Psi,
\mathds{1}_{\overline{\mathcal{B}}_{1}}V_1(x_1-x_2)\Psi \raa
\\&+\laa\Psi ,\sum_{j\neq
1}\mathds{1}_{\mathcal{B}_{1}}\left(V_1-W_{\beta_1}\right)(x_1-x_j)\Psi \raa
\\& +\laa\Psi ,\sum_{j\neq
1}\mathds{1}_{\mathcal{B}_{1}}W_{\beta_1}(x_1-x_j)\Psi \raa
-\|\nabla\phi \|^2-2a\|\phi^2\|^2
\\&+\laa\Psi A_t\Psi\raa-\langle\phi A_t\phi\rangle
\;.
\end{align*}
Using that $q_1=1-p_1$, we obtain for $0<\epsilon<1$,
\begin{align} &
\label{deltaE}
\mathcal{E}(\Psi)-\mathcal{E}^{GP}(\phi)
\\
=&
\epsilon
\left(
 \|\mathds{1}_{\mathcal{A}_{1}}\nabla_1q_1\Psi \|^2
  +
 \|\mathds{1}_{\overline{\mathcal{B}}_{1}}
   \mathds{1}_{\overline{\mathcal{A}}_{1}} \nabla_1 \Psi \|^2
\right)
\\
\label{uno}
+&2 \Re\left(\laa\nabla_1q_1\Psi ,
\mathds{1}_{\mathcal{A}_{1}}
\nabla_1p_1\Psi \raa\right)
\\
\label{dos}
+& \|\mathds{1}_{\mathcal{B}_{1}}\mathds{1}_{\overline{\mathcal{A}}_{1}}\nabla_1\Psi \|^2
+
\frac{1}{2}
\laa\Psi ,\sum_{j=2}^N\mathds{1}_{\mathcal{B}_{1}}\left(V_1-W_{\beta_1}\right)(x_1-x_j)\Psi \raa
\\
\label{tres}
+&
\frac{N-1}{2}\laa\Psi ,\mathds{1}_{\mathcal{B}_{1}}p_1p_2W_{\beta_1}(x_1-x_2)p_1p_2\mathds{1}_{\mathcal{B}_{1}}\Psi\raa -\frac{a}{2}\|\phi^2\|^2
\\
\label{cuatro}
+&(N-1)\Re\laa\Psi ,\mathds{1}_{\mathcal{B}_{1}}(1-p_1p_2)W_{\beta_1}(x_1-x_2)p_1p_2\mathds{1}_{\mathcal{B}_{1}}\Psi\raa
\\
+&\frac{N-1}{2}\laa\Psi ,\mathds{1}_{\mathcal{B}_{1}}(1-p_1p_2)W_{\beta_1}(x_1-x_2)(1-p_1p_2)\mathds{1}_{\mathcal{B}_{1}}\Psi\raa
\\
\label{cinco}
+& \|\mathds{1}_{\mathcal{A}_{1}}\nabla_1p_1\Psi \|^2-\|\nabla\phi \|^2
\\
\label{seis}
+&\laa\Psi, A_t(x_1)\Psi\raa-\langle\phi , A_t\phi\rangle
 \\
 \label{new1}
 +&
( 1-\epsilon)
\left(
 \|\mathds{1}_{\mathcal{A}_{1}}\nabla_1q_1\Psi \|^2
 +
 \|\mathds{1}_{\overline{\mathcal{B}}_{1}}
 \mathds{1}_{\overline{\mathcal{A}}_{1}}\nabla_1\Psi \|^2
\right)
\\
\label{new2}
+&
\frac{N-1}{2}
\laa \Psi, 
\mathds{1}_{\overline{\mathcal{B}}_{1}} V_1(x_1-x_2)\Psi \raa
\;.
\end{align}
It has been shown in \cite{picklgp3d} that for some
suitable chosen $0<\beta_1<1$
there exists an $\eta >0$ such that
\begin{align*}
|
\eqref{deltaE}
|
+
|
\eqref{uno}
|
+
|
\eqref{tres}
|
+
|
\eqref{cinco}
|
+
|
\eqref{seis}
|
\leq
C \left(\llaa\Psi,\widehat{n}^{\phi}\Psi\rraa+N^{-\eta}\right)+\left|\mathcal{E}(\Psi)-\mathcal{E}^{GP}(\phi)\right|
.
\end{align*}
Since $
\eqref{dos}
\geq 0, 
\eqref{cuatro} \geq 0
$,
we are left to control \eqref{new1} and \eqref{new2}
in order to show
$$
\epsilon
\left(
 \|\mathds{1}_{\mathcal{A}_{1}}\nabla_1q_1\Psi \|^2
  +
 \|\mathds{1}_{\overline{\mathcal{B}}_{1}}
   \mathds{1}_{\overline{\mathcal{A}}_{1}} \nabla_1 \Psi \|^2
\right)
\leq
C \left(\llaa\Psi,\widehat{n}^{\phi}\Psi\rraa+N^{-\eta}\right)+\left|\mathcal{E}(\Psi)-\mathcal{E}^{GP}(\phi)\right|.
$$ For nonnegative potentials,
the trivial bound $\eqref{new1} + \eqref{new2} \geq 0$ is  sufficient in order to prove Lemma
\ref{energylemma}. For potentials fulfilling assumption \ref{Vassumption}, we use
\begin{align*}
\eqref{new1}
+
\eqref{new2}
=&
( 1-\epsilon)
\left(
 \|\mathds{1}_{\mathcal{A}_{1}}\mathds{1}_{\overline{\mathcal{B}}_{1}}\nabla_1\Psi \|^2
 +
 \|\mathds{1}_{\overline{\mathcal{B}}_{1}}
 \mathds{1}_{\overline{\mathcal{A}}_{1}}\nabla_1\Psi \|^2
\right)
+
\frac{N-1}{2}
\laa \Psi, 
\mathds{1}_{\overline{\mathcal{B}}_{1}} V_1(x_1-x_2)\Psi \raa
\\
-&(1-\epsilon)
 2 \Re
 \left(
 \laa \nabla_1\Psi,
 \mathds{1}_{\mathcal{A}_{1}}\mathds{1}_{\overline{\mathcal{B}}_{1}}\nabla_1 p_1\Psi \raa
 \right)
 \\
 +&
(1- \epsilon) 
\left(
 \|\mathds{1}_{\mathcal{A}_{1}} \mathds{1}_{\mathcal{B}_{1}}\nabla_1q_1\Psi \|^2
 +
 \|\mathds{1}_{\mathcal{A}_{1}}\mathds{1}_{\overline{\mathcal{B}}_{1}}\nabla_1 p_1\Psi \|^2
 \right).
\end{align*}
We will estimate each line separately.
The third line is positive.
Using Proposition \ref{propo}, we obtain
\begin{align*}
\|\mathds{1}_{\mathcal{A}_{1}}\mathds{1}_{\overline{\mathcal{B}}_{1}}\nabla_1 p_1\Psi \|
\leq
\|\mathds{1}_{\overline{\mathcal{B}}_{1}}\nabla_1 p_1\Psi \|
\leq
C
N^{-7/54}
\| \Delta_1 p_1\Psi \|.
\end{align*}
This implies for the second line
\begin{align*}
 |2 \Re
 \left(
 \laa  \nabla_1 \Psi,
 \mathds{1}_{\overline{\mathcal{B}}_{1}} \mathds{1}_{\mathcal{A}_{1}}\nabla_1 p_1\Psi \raa
 \right)
 |
 \leq
C N^{-7/54}
\;.
\end{align*}
Focusing on the first term, we obtain with Corollary \ref{Hcor}
\begin{align*}
&( 1-\epsilon)
\left(
 \|\mathds{1}_{\mathcal{A}_{1}}\mathds{1}_{\overline{\mathcal{B}}_{1}}\nabla_1\Psi \|^2
 +
 \|\mathds{1}_{\overline{\mathcal{B}}_{1}}
 \mathds{1}_{\overline{\mathcal{A}}_{1}}\nabla_1\Psi \|^2
\right)
+
\frac{N-1}{2} \laa \Psi, 
\mathds{1}_{\overline{\mathcal{B}}_{1}} V_1(x_1-x_2)\Psi \raa
\\
=&
\frac{1}{N}
\laa
 \Psi,
 \left(
(1-\epsilon)
\sum_{k=1}^N
-\Delta _k  \mathds{1}_{\overline{\mathcal{B}}_k}
+
\sum_{i\neq j}  \mathds{1}_{\overline{\mathcal{B}}_j}
\frac{1}{2} V_1 (x_i-x_j) \Psi \raa
\right)
 \geq 0
 \;.
\end{align*}

We have therefore shown
\begin{align*}
 \|\mathds{1}_{\mathcal{A}_{1}}\nabla_1q_1\Psi \|^2
 +
 \|\mathds{1}_{\overline{\mathcal{B}}_{1}}\mathds{1}_{\overline{
\mathcal{A}}_{1}}\nabla_1\Psi \|^2
\leq
C \left(\llaa\Psi,\widehat{n}^{\phi}\Psi\rraa+N^{-\eta}+\left|\mathcal{E}(\Psi)-\mathcal{E}^{GP}(\phi)\right| \right).
\end{align*}
Note that
\begin{align*}
 \|\mathds{1}_{\overline{\mathcal{B}}_{1}}
 \nabla_1q_1\Psi \|^2
 =&
  \|
  \mathds{1}_{\overline{\mathcal{A}}_{1}}
  \mathds{1}_{\overline{\mathcal{B}}_{1}}
 \nabla_1q_1\Psi \|^2
 + 
 \|
\mathds{1}_{\mathcal{A}_{1}} 
 \mathds{1}_{\overline{\mathcal{B}}_{1}}
 \nabla_1q_1\Psi \|^2
 \\
 \leq &
   \|
  \mathds{1}_{\overline{\mathcal{A}}_{1}}
  \mathds{1}_{\overline{\mathcal{B}}_{1}}
 \nabla_1
( 1-p_1)
 \Psi \|^2
 + 
 \|
\mathds{1}_{\mathcal{A}_{1}} 
 \nabla_1q_1\Psi \|^2
 \\
 \leq &
2    \|
  \mathds{1}_{\overline{\mathcal{A}}_{1}}
  \mathds{1}_{\overline{\mathcal{B}}_{1}}
 \nabla_1
 \Psi \|^2
+
2
\|
  \mathds{1}_{\overline{\mathcal{A}}_{1}}
  \mathds{1}_{\overline{\mathcal{B}}_{1}}
 \nabla_1 p_1
 \Psi \|^2
 + 
  \|
\mathds{1}_{\mathcal{A}_{1}} 
 \nabla_1q_1\Psi \|^2.
\end{align*}
Using
$
 \|\mathds{1}_{\overline{\mathcal{B}}_{1}}
 \mathds{1}_{\overline{
\mathcal{A}}_{1} }\nabla_1p_1\Psi \|\leq 
\|\mathds{1}_{\overline{\mathcal{B}}_{1}}\nabla_1p_1\Psi \|
\leq
C N^{-7/54} \| \Delta_1 p_1 \Psi \|
$, we then obtain the Lemma.

\end{proof}

\section*{Acknowledgments}
We are grateful to Nikolai Leopold and Robert Seiringer for pointing out to us 
the results of \cite{yin}. We also thank Phillip Grass for helpful remarks.
M.J. gratefully acknowledges financial support by the German National Academic Foundation.

\end{document}